\documentclass{article}
\usepackage[utf8]{inputenc}
\usepackage[dvipsnames]{xcolor}
\usepackage{amsfonts}
\usepackage{amsthm}
\usepackage{amsmath}
\usepackage{amssymb}
\usepackage{fullpage}
\usepackage{graphicx}
\usepackage{algorithm}
\usepackage{algpseudocode}
\usepackage{tikz-cd}
\usepackage{authblk}
\usepackage{enumitem}
\usepackage{soul}
\setlist{nosep}

\usepackage{caption}
\usepackage{subcaption}

\title{Brain Chains as Topological Signatures for Alzheimer's Disease}

\author[1,2]{Christian Goodbrake}
\author[1]{David Beers}
\author[1,4]{Travis B. Thompson}

\author[1,3]{Heather A. Harrington}
\author[1]{Alain Goriely}
\affil[1]{Mathematical Institute, University of Oxford}
\affil[2]{Willerson Center for Cardiovascular Modeling and Simulation, Oden Institute, The University of Texas at Austin}
\affil[3]{Wellcome Centre for Human Genetics,  University of Oxford}
\affil[4]{Department of Mathematics and Statistics, Texas Tech University}
\date{}                 

\theoremstyle{definition}
\newtheorem{exmp}{Example}[section]
\newtheorem{defin}{Definition}[section]
\newtheorem{remark}{Remark}[section]
\newtheorem{theorem}{Theorem}
\newtheorem{prop}{Proposition}
\newtheorem{cor}{Corollary}

\graphicspath{{./figures/}}

\newcommand{\pol}{\mathrm{Pol}}
\newcommand{\pa}{\mathrm{Par}}

\begin{document}

\maketitle

\begin{abstract}
    \begin{itemize}
    Topology is providing new insights for neuroscience.  For instance, graphs, simplicial complexes, directed graphs, flag complexes, persistent homology and convex covers have been used to study functional brain networks, synaptic connectivity, and hippocampal place cell codes. 
    We propose a topological framework to study the evolution of Alzheimer's disease, the most common neurodegenerative disease. The modeling of this disease starts with the representation of the brain connectivity as a graph and the seeding of a toxic protein in a specific region represented by a vertex. Over time, the accumulation of toxic proteins at vertices and their propagation along  edges are modeled by a dynamical system on this graph. These dynamics provide an order on the edges of the graph  according to the damage created by high concentrations of proteins. This sequence of edges defines a filtration of the graph. We consider different filtrations given by different disease seeding locations. To study this filtration we propose a new combinatorial and topological method. A filtration defines a maximal chain in the partially ordered set of spanning subgraphs ordered by inclusion. To identify similar graphs, and define a topological signature, we quotient this poset by graph homotopy equivalence, which gives maximal chains in a smaller poset. We provide an algorithm to compute this direct quotient without computing all subgraphs and then propose bounds on the total number of graphs  up to homotopy equivalence. To compare the maximal chains generated by this method, we extend Kendall's $d_K$ metric for permutations to more general graded posets and establish bounds for this metric. We then demonstrate the utility of this framework on actual brain graphs by studying the dynamics of tau proteins on the structural connectome. {We show that the proposed topological brain chain equivalence classes distinguish different simulated subtypes of Alzheimer's disease.}
\end{itemize}

\end{abstract}

\section{Introduction}
Brain networks are an important research topic for  understanding  both healthy brain function and pathology \cite{bullmore2009}. In particular, network theory has been used to model the evolution of neurodegenerative diseases, a family of conditions characterised by progressive dysfunction of neurons until they ultimately die. The central idea of the 'prion-like' hypothesis is that some neurodegenerative diseases, such as Alzhemier's or Parkinson's are governed by the accumulation and propagation of toxic proteins on the \textit{connectome}, the brain structural network \cite{jucker2013}.

Indeed, it is well appreciated that the most common neurodegenerative disease, Alzheimer's disease (AD),  is characterized by the propagation of misfolded tau proteins over the connectome. 
The disease presentation differs across patients subpopulations, which has led to so-called \textit{AD subtypes}.  For example, in the clinic, the \textit{ limbic subtype} is more strongly associated to memory decline whereas the \textit{middle temporal lobe sparing (MTL) subtype} is associated with changes in language proficiency \cite{vogel2021}. Vogel et al. recently postulated that disease subtypes are characterized by misfolded protein at different starting locations in the brain called \textit{seeding regions} or \textit{epicenters}  \cite{vogel2021}. Over many decades, toxic proteins will propagate from an epicenter to the rest of the network and different epicenter locations  can give rise to different toxic protein patterns.  An open problem is early detection of a patient's AD subtype, which has implications to clinical care and treatment.  The current gold standard for determining a patient's AD subtype is the postmorterm study of the distribution of tau neurofibrillary tangles within the brain.  An open question is how to go about developing effective, non-invasive techniques to discern the progression of AD and a patient-specific subtype from that progression.  This work proposes a topological approach for studying AD subtypes by combining a mechanistic model of AD progression, a recent hypothesis on AD subtypes, and the examination of filtration images under a topological equivalence relation.

Studying the brain with algebraic topology was first proposed in 1962 by Zeeman \cite{zeeman1962topology}. Since then, neurotopology has flourished into a well-established field. For example, grid cells or place cells can be studied with neural codes \cite{curto2017can} and single neuron shape can be analysed as trees with the topological morphology descriptor  \cite{kanari2018topological,beers2022barcodes}. Brain networks ranging from vascular, functional to structural have benefited from topological data analysis (TDA) \cite{reimann2017cliques,sizemore2019importance,expert2019topological,bendich2016persistent}. 
The most prominent algorithm in topological data analysis, persistent homology, takes in a filtration, a nested sequence of spaces built on data, and outputs a persistence module. Persistent homology has been applied to functional networks (from functional magnetic resonance imaging data) to study schizophrenia, epileptic seizures, and Alzheimer's disease \cite{caputi2021promises,xing2022spatiotemporal,stolz2021topological}. Previously, persistent homology distinguished different geometric patterns that arise as contagions spread along a network \cite{taylor2015topological}. In that study and here, we consider different ``seeding" sites for the start of a contagion or toxic protein spreading along a specified network; however, the topological approach we use here is finer than persistent homology.

Here, we start with a filtration, corresponding to toxic protein propagation or progression on a brain graph from a seeding site.
The fundamental principle guiding this approach is to study aspects of progressions defined on a (fixed) graph by considering the spanning subgraphs, subgraphs containing a vertex set $V$, arising from filtrations generated by that progression. 
Since we are interested in comparing topological characteristics of one or more network progressions, given by different seeding sites, we will consider the set of spanning subgraphs modulo a topologically significant equivalence relation. We would like to distinguish the number of loops in each connected component to discern disease subtypes; therefore, we will require a finer invariant than homology.

The topological analysis of graphs via their geometric representations is a well-established method \cite{archdeacon1996topological, hatcher2002algebraic}, with multiple  interesting problems \cite{lapaugh1980subgraph, lingas2009exact}.  
The most prominent example of translating a topological equivalence relationship to a graph theoretic equivalence relation is the graph homeomorphism, where two graphs are considered equivalent if their geometric realisations are homeomorphic. Here, we apply a similar approach to  study the filtrations associated  with different protein propagations; however, we choose a looser topological equivalence relation, namely that of \textit{homotopy equivalence} \cite{hatcher2002algebraic}.
We establish a unique representation for the induced equivalence classes, and examine the relationships between these classes induced by the spanning subgraphs partially ordered by inclusion.

In standard topology, homotopies are continuous curves whose points are continuous functions. Over the past couple decades, specialised combinatorial analogues of homotopy theories have been developed \cite{barcelo2001foundations, barcelo2005perspectives,grigor2014homotopy}. These combinatorial homotopies are path graphs whose vertices are graph homomorphisms. 
While the homotopy equivalence for graphs we utilise is induced by the standard topological definition applied to the graphs' geometric realisations, we introduce a notion of discrete homotopy for maximal chains in ordered posets-- the homotopy poset.
Further, the discrete nature of this homotopy allows us to define a discrete metric between these chains, which gives us a rigorous basis by which different neurodegenerative progressions can be compared.
This discrete homotopy for chains in posets can likely be formulated in terms of the aforementioned combinatorial homotopies applied to maximal chains in graded posets.

We define a metric between maximal chains, which is bounded from below by a metric defined on top dimensional simplices of the order complex. While we do not construct the order complex associated with the chains appearing in our work, earlier work has explored the topological structure inherent in the combinatorial objects appearing in order theory \cite{wachs2006poset, matouvsek2003using, kozlov2005topologicalcombinatorics}.
 The metric we define distinguishes quotiented filtrations based on the connectivity of the affected regions of the brain graph at each stage in disease progression. Thus, the proposed view yields a quantitative metric on the set of calculated brain chains, with some chains representing protein progressions that propagate the entire graph quickly whereas some brain chains have topological signatures that form local cycles close to the seeding site before spreading to the full brain graph.

\subsection{Organization}
This work sits at the intersection of algebraic topology, combinatorics, dynamical systems, graph theory, number theory, network science, neuroscience and order theory. 
Throughout, we provide necessary definitions and direct readers to fuller treatment elsewhere.
 In Section~\ref{sec:graph-filtrations}, we introduce the edge filtration of a graph, the connection between these filtrations, and the partially ordered set of spanning subgraphs $SS\left(G\right)$. We introduce the graph homotopy polynomial, and the homotopy poset, the quotient of $SS\left(G\right)$ by graph homotopy equivalence. 
This polynomial uniquely encodes the graph's homotopy equivalence class.
Therefore, this polynomial can be used to compute any topological quantity that is invariant under homotopy equivalence. In Section~\ref{sec:pe-quotient-construction}, we present an algorithm using these polynomials to directly compute the homotopy poset (without having to construct the entire set of spanning subgraphs). 
In Section~\ref{sec:complexity}, we establish complexity bounds on the number of elements that are in the poset as the number of vertices grows to provide a qualitative view of how the previous algorithm scales, and discuss connections to number theory.
In Section~\ref{sec:chain-compare} we generalise Kendall's $d_K$ metric on permutations to graded posets, and use this to define the discrete homotopy metric on graded poset chains.
 In Section~\ref{sec:meshes} we introduce left and right-covering conditions, sufficient structures that guarantee finite discrete homotopy distances between arbitrary maximal chains.
We establish upper and lower bounds on this metric, and in  Section~\ref{sec:application}, we apply these ideas to the problem of neurodegenerative diseases to distinguish and topologically describe subtypes of Alzheimer's disease.

\section{Graph Edge Filtrations}\label{sec:graph-filtrations}
We consider a simple graph $G$ with vertex set $V = \{v_i\}$ with $|V|=N$ vertices and edge set $E = \{e_{ij} = \{v_i,v_j\}\}$ with $|E|=M$ edges.
For the purposes of this section and the following section, we will consider $G$ to be the complete graph on $V$.
A similar analysis can be applied to non-complete graphs, but the direct construction presented in Section \ref{sec:pe-quotient-construction} only applies to complete graphs.
We examine the poset of spanning subgraphs of $G$, $SS\left(G\right)$, i.e. those subgraphs $S \subseteq G$ containing all of the vertices in $V$, partially ordered by subgraph inclusion.
This poset is isomorphic to the power set of $E$, $\mathcal{P}\left(E\right)$.
This spanning subgraph poset is graded by Euler characteristic $L = N-|E(S)|$, or equivalently, since $V$ is fixed, graded by the number of edges, $|E(S)|$, where we have denoted the edge set of a subgraph $S$ as $E(S)$.
We denote this strict partial ordering relation with $\prec$, with $\lessdot$ denoting the covering relation, i.e. $a$ ``is covered by" $b$, $a\lessdot b$ if and only if $a \prec b$, and there does not exist $c$ such that $a \prec c \prec b$, which in our case, $a \lessdot b$ is equivalent to $a \subset b$ and $|E(b)\setminus E(a)|=1$.

We seek to understand and quantitatively compare the possible routes of propagation through $G$, thus we define \emph{edge filtrations}.
\begin{defin}[Edge Filtration]\label{defin:edge-filtration}
    An \emph{edge filtration} $\mathcal{F}=S_0\subset S_1\subset ... \subset S_M$ of $G$ is a sequence of spanning subgraphs of $G$ such that $E(S_0) = \{\}$, $S_M=G$, and $S_{i-1}\lessdot S_i,\,\forall i\in\{1,...,M\}$.
\end{defin}
Equivalently, edge filtrations are maximal chains in $SS\left(G\right)$, i.e. they are totally ordered subsets of $SS\left(G\right)$ such that no element can be added without $\mathcal{F}$ ceasing to be totally ordered.
Any edge filtration can be represented by a permutation $\Xi$ of the edge set, since the edge sets of the subgraphs in this filtration are partial unions of this permutation.
Given $\mathcal{F}$, $\Xi\left(\mathcal{F}\right)$ can be recovered by taking successive set differences;
\begin{equation}
\label{eqn:edge_permutation}
    \Xi\left(\mathcal{F}\right) = \left(\begin{array}{cccc}1 & 2 & \dots & M \\ e_1 & e_2 & \dots & e_M \end{array}\right)
        \text{ where } e_i = S_i\setminus S_{i-1}.
\end{equation}
For convenience, we use the notation 
    $\Xi = \{e_1,e_2,\dots,e_M\}$,
in lieu of \eqref{eqn:edge_permutation}, to represent an edge permutation.
We seek to rigorously compare the similarity of two edge filtrations $\mathcal{F}_1$, and $\mathcal{F}_2$, which we can do indirectly through their associated edge set permutations.
These permutations can be compared through Kendall's $d_K$ metric \cite{Kendall_1983}:
\begin{defin}[Kendall's $d_K$]
Let $\Xi_1, \Xi_2$ be permutations of $E$, and let $\sigma$ be the element of the permutation group $S_{M}$ satisfying 
\begin{equation}
    \Xi_2 = \sigma \left(\Xi_1\right).
\end{equation}
Kendall's $d_K\left(\Xi_1,\Xi_2\right)$ is defined to be the length of the minimal representation of $\sigma$ in terms of adjacent transpositions.
\end{defin}
The distance $d_K\left(\Xi_1,\Xi_2\right)$ is equivalently given as the number of operations to transform $\Xi_1$ into $\Xi_2$ using bubble sort, and is equal to the number of discordant pairs in $\Xi_1$ and $\Xi_2$.
Kendall's $d_K$ satisfies all the axioms of a metric, hence we can use it as a means of comparing the similarity of two permutations of a set.
We therefore extend the definition of $d_K$ to compare edge filtrations:
\begin{equation}
d_K\left(\mathcal{F}_1,\mathcal{F}_2\right) = d_K\left(\Xi\left(\mathcal{F}_1\right),\Xi\left(\mathcal{F}_2\right)\right).
\end{equation}

\begin{exmp}[$K_3$]
Consider the complete graph on $3$ vertices, $K_3$. Denoting the three edges of this graph as $\{e_1,e_2,e_3\}$, the eight distinct spanning subgraphs of this graph partially ordered by inclusion can be depicted by the following face poset.
\begin{figure}[h]
    \centering
    \includegraphics[width=.5\textwidth]{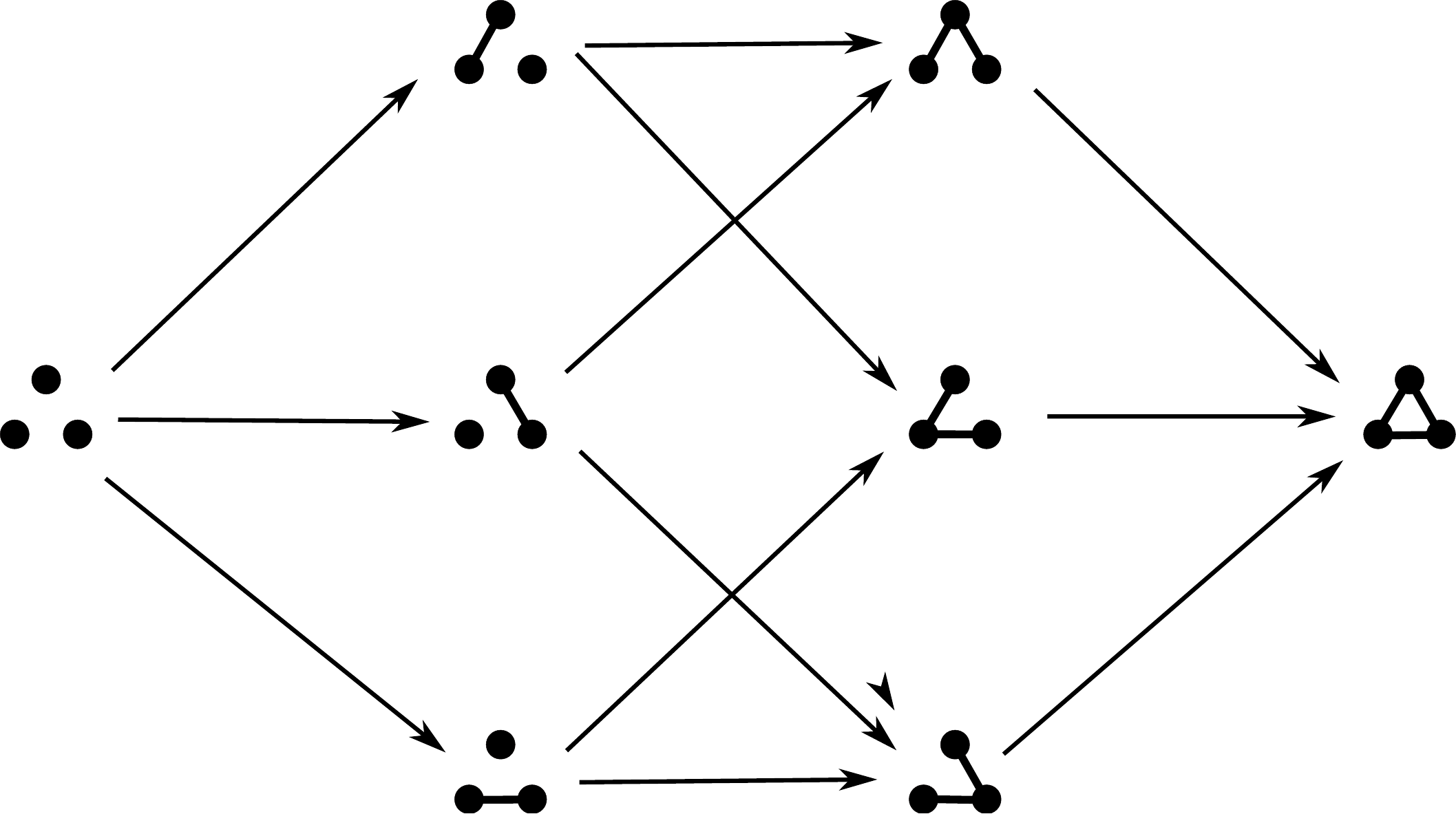}
    \caption{$SS\left(K_3\right)$, the spanning subgraphs of $K_3$ partially ordered by subgraph inclusion.}
    \label{fig:k3}
\end{figure}
\end{exmp}

The size of $SS\left(K_N\right)$ grows quite quickly as $N$, the number of vertices in $V$, increases ($SS\left(K_N\right)| = 2^{\frac{N\left(N-1\right)}{2}}$), which makes constructing the spanning subgraph poset computationally expensive even for modest $N$.
Additionally, we want to focus our attention on the topology of the underlying spanning subgraphs, rather than merely considering their edge sets as combinatorial subsets.
We therefore seek a quotienting of $SS\left(K_N\right)$ under some equivalence relation $\sim$ that preserves the graded structure, and identifies spanning subgraphs that are topologically equivalent in some sense.
We further want this quotiented poset $\mathcal{H}\left(K_N\right)=SS\left(K_N\right)/\sim$ to be directly constructed, rather than computing the entire $SS\left(K_N\right)$, and then further compute all of the equivalence relations. 

We will identify two spanning subgraphs if their geometric realisations are homotopy equivalent, and examine our edge filtrations under this equivalence relation.
\begin{defin}[homotopy equivalence]
Two topological spaces $X$, and $Y$ are \emph{homotopy equivalent} if there exists continuous maps $f:X\rightarrow Y$ and $g:Y\rightarrow X$ such that the compositions $g\circ f$ and $f \circ g$ are homotopic to the identity maps on $X$ and $Y$ respectively, i.e. there exists two continuous one-parameter family of maps $\eta_t :X\rightarrow X$ and $\mu_t:Y\rightarrow Y$ for $t\in\left[0,1\right]$, such that $\eta_0 = \operatorname{id}_X$, $\eta_1 = g \circ f$, $\mu_0 = \operatorname{id}_Y$, and $\mu_1 = f \circ g$.
\end{defin}
This quotienting is analogous to persistent homology, where a sequence of homology groups is obtained from a filtration. The underlying objects we are quotienting are subgraphs; therefore, reducing to homology groups destroys too much information since graphs only have two nontrivial homology groups, $H_0\left(G\right)\cong\mathbb{Z}^{Q}$ where $Q$ is the number of connected components, and $H_1\left(G\right)\cong\mathbb{Z}^{M-N+Q}$ counting the number of loops.
We use homotopy equivalence instead, as this preserves the number of loops in each connected component, rather than merely the number of loops in total.

In the general case determining whether two spaces are homotopy equivalent is  difficult.
Since the spaces we study are graphs, they have a simple enough structure for us to determine homotopy equivalence by explicitly constructing the homotopy equivalence class from a representative member.
Every connected simple graph is homotopy equivalent to a wedge sum of circles \cite{hatcher2002algebraic}, hence every general (possibly disconnected) simple graph is homotopy equivalent to the disjoint union of a wedge sum of circles.

We record this data through a \emph{graph homotopy polynomial} $h_G\left(x\right)$.

\begin{defin}[Graph Homotopy Polynomial]\label{defn:homotopy-poly}
Let $G$ be a graph. 
For each connected component of $G$, associate the term $x^k$ where $k$ is the number of loops in this component.
The graph homotopy polynomial $h_G\left(x\right)$ is the sum of these terms.
\end{defin}

Because the graph homotopy polynomial $h_G\left(x\right)$ is uniquely determined by the homotopy equivalence class of $G$, to determine if two graphs are homotopy equivalent, one must simply compute the graph homotopy polynomial for each and see if these polynomials match.
Further, because the homotopy polynomial encodes a graph's homotopy equivalence classes, any topological quantity of a graph $G$ that is preserved under homotopy equivalence can be computed from $h_G\left(x\right)$.

\begin{exmp}[Topological computations from $h_G$]\label{exmp:topo-comput}
\hfill
\begin{itemize}
\itemsep0em 
\item $h_G\left(1\right)$ is the number of connected components of $G$, the $0$-th Betti number. Therefore $H_0\left(G\right) \cong \mathbb{Z}^{h_G\left(1)\right)}$
\item The number of loops present in $G$ is $\frac{dh_G}{dx}\left(1\right)$, hence the Euler characteristic of $G$ is $L = N-|E|= h_G\left(1\right)-\frac{dh_G}{dx}\left(1\right)$.
\item $H_1\left(G\right) \cong \mathbb{Z}^{\frac{dh_G}{dx}\left(1\right)}$
\end{itemize}
\end{exmp}

Therefore, given an edge filtration $\mathcal{F}$ of $K_N$, we can compute a sequence of graph homotopy polynomials. 
Because the original filtration was a maximal chain in $SS\left(K_N\right)$ graded by the number of edges, or equivalently Euler characteristic, and homotopy equivalence preserves the Euler characteristic, the resulting sequence of graph homotopy polynomials remains a maximal chain in $\mathcal{H}\left(N\right)$. 
We only have to compute $\mathcal{H}\left(N\right)$.

\section{Direct Construction of the Homotopy Poset}\label{sec:pe-quotient-construction}
We seek to directly construct $\mathcal{H}\left(N\right)$, the spanning subgraph poset of the complete graph on $N$ vertices quotiented by graph homotopy equivalence. 
We do this utilising the graph homotopy polynomials defined in the previous section, exploiting the graded structure of the quotiented poset by Euler characteristic.
We note that the unique lowest spanning subgraph in this poset is the subgraph $S_0$ containing no edges, which of necessity is simply the collection of $N$ vertices, and its corresponding graph homotopy polynomial is simply $h_{S_0}\left(x\right) = N$.

Next, we consider the effect that adding an edge to a subgraph has on that subgraph's homotopy polynomial.
Since any subgraph is a disjoint union of connected components, at the graph level, the endpoints of a newly added edge either connect two previously disconnected components, which creates no new loops, and reduces the number of connected components by one, or the endpoints lie in the same previously connected component, increasing the number of loops in that component by one.
To see the effects of this at the level of the graph homotopy polynomial, let us denote the homotopy polynomial of the original subgraph as $h_{S_r}\left(x\right)$, and the homotopy polynomial of the augmented subgraph as $h_{S_{r+1}}\left(x\right)$.
The polynomial $h_{S_r}\left(x\right)$ takes the form
\begin{equation}
    h_{S_r}\left(x\right) = \bigoplus_{k=0}^{\infty}a_k x^k,
\end{equation}
where $a_k$ are non-negative integers, and we have employed a direct sum to indicate that these polynomials can in principle be of arbitrarily high degree, with all but finitely many terms being $0$.
In practice these homotopy polynomials can be implemented as an integer-keyed and integer valued dictionary, where keys are exponents, and values are coefficients.

In the case where the endpoints of the added edge lie in originally disjoint components of the subgraph, originally possessing $i$ and $j$ loops respectively, the number of components in the augmented subgraph possessing $i$ and $j$ loops each decrease by $1$, with the number of components possessing $i+j$ loops increased by $1$. 
The transformation on the homotopy polynomial is thus 
\begin{equation}
    x^i + x^j \rightarrow x^{i+j}
\end{equation}
hence the polynomial transforms as follows
\begin{equation}
\label{eqn:loop_merge}
    h_{S_{r+1}}\left(x\right) = h_{S_r}\left(x\right) - x^i -x^j + x^{i+j}.
\end{equation}
Note that this is true even in the event that $i=j$, or $i=0$, or $j=0$, provided that the incrementing/decrementing is done for each component, i.e. when $i=j$, $a_i\geq 2$, and when $i\neq j$, $a_i, a_j \geq 1$. 

The alternative is that the added edge's endpoints reside in the same originally disjoint component, increasing the number of loops appearing in that component by $1$, say from $i$ to $i+1$.
The transformation is then clearly 
\begin{equation}
    x^i\rightarrow x^{i+1}, 
\end{equation}
and the polynomial transformation is
\begin{equation}
\label{eqn:loop_promotion}
    h_{S_{r+1}}\left(x\right) = h_{S_{r}}\left(x\right) -x^i + x^{i+1}.
\end{equation}
These are the only potential transformations possible, hence any cover of $h_S\left(x\right)$ in $\mathcal{H}\left(N\right)$ must arise from $h_S\left(x\right)$ by one of these transformations.

Because there are only finitely many such transformations, (by virtue of the finite number of nonzero terms in the direct sum), all the possible covers of any element of the poset can be exhaustively considered. 
\begin{algorithm}[h]
\caption{Possible Successors of $h$}
\label{alg:succs}
\begin{algorithmic}
    \Procedure{Succs}{$h$}
    \State $\alpha \gets$ \textbf{dict}$\left(h\right)$ \Comment{Initialise a dictionary with key-value pairs representing exponent-coefficient pairs of $h$}
    \State $S\gets \{\}$ \Comment{Initialise an empty set for successors}
    \For{$y\in $\textbf{ keys}$\left(\alpha\right)$} \Comment{Loop through monomials}
    \State $g \gets \alpha$ \Comment{Initialise $g$ as the dictionary $\alpha$}
    \If{$g\left(y\right)>0$} \Comment{If this monomial's coefficient is nonzero} 
    \State $g\left(y\right) \mathrel{-}= 1$ \Comment{Decrease this coefficient by $1$}
    \For{$z\in$ \textbf{keys}$\left(g\right)$} \Comment{Loop through remaining monomials,
    obtain successors by merging}
        \State $f\gets g$ \Comment{Initialise a dictionary for the original decremented polynomial}
        \If{$f\left(z\right)>0$} \Comment{If this monomial's coefficient is nonzero} 
        \State $f\left(z\right) \mathrel{-}= 1$ \Comment{Decrease this coefficient by $1$}
        \State $f\left(y+z\right) \mathrel{+}= 1$ \Comment{Increment the coefficient for the merged monomial by $1$}
        \State $S \gets$ \textbf{push} $f$ \Comment{Add the successor generated by merging at $y$ and $z$}
        \EndIf
    \EndFor
    \State $g\left(y+1\right) \mathrel{+}= 1$ \Comment{Increment the coefficient corresponding promoting at $y$}
    \State $S \gets$ \textbf{push} $g$ \Comment{Add the successor generated by promoting at $y$}   
    \EndIf
    \EndFor
    \State \textbf{return} $S$ \Comment{Return the set of successors}
\EndProcedure
\end{algorithmic}
\label{alg:algorithm-1}
\end{algorithm}

Further, we note that the homotopy polynomial for any graph $G$ with no cycles is $h_G = k$ for some positive integer $k$. Since removing edges cannot create a cycle, any spanning subgraph lower than $G$ must have a homotopy polynomial $h = k + r$ for some positive integer $r$.
Therefore, any chain with lowest homotopy polynomial equal to some integer can be uniquely extended backwards by simply incrementing the integer by $1$ at each stage. 
Similarly, any connected subgraph has a homotopy polynomial taking the form $h = l^k$ for some non-negative integer $k$. 
Since adding an edge cannot disconnect the graph, any graph higher than this has a homotopy polynomial of the form $h = x^{k+r}$ for positive integer $r$. 
Therefore any chain with highest homotopy polynomial equal to $x^k$ can be uniquely extended forward by incrementing the exponent $k$.
We use this to extend the chains in Section~\ref{sec:application} to maximal chains for comparison in the posets directly constructed in this section.

Because $\mathcal{H}\left(N\right)$ possesses a unique minimal element, any element in $\mathcal{H}\left(N\right)$ can be obtained through the repeated application of these transformations starting with this minimal element.
However, not every element constructed in this way is the image of some subgraph in the original poset under the quotienting operation, so we next present a method for filtering out the constructed covers that cannot be obtained by quotienting a subgraph in the original poset.

We note that we can compute the Euler characteristic of a graph $G$ in two different ways: $L = \|V\| - \|E\|$ or  $L =h_G\left(1\right) - \frac{dh_G}{dx}\left(1\right)$.
Since all spanning subgraphs $S$ have the same number of vertices, $N$, the Euler characteristic of $h_S\left(x\right)$ can give the number of edges that must have been in \emph{any} spanning subgraph that quotients to $h_S\left(x\right)$.
Given $h_S\left(x\right)$, we can compute $C\left(h_S\left(x\right)\right)$, the minimum number of edges present in \emph{any} simple graph (not necessarily on $N$ vertices) that quotients to $h_S\left(x\right)$.
We denote this number $C\left(h_S\left(x\right)\right)$, the \emph{edge cost} of the homotopy polynomial.
A key property of this cost function, is linearity:
\begin{equation}
    C\left(h+g\right)=C\left(h\right)+C\left(g\right).
\end{equation} 
The addition of graph homotopy polynomials corresponds to the disjoint union of their preimages, and the disjoint union of two graphs has the same number of edges as the sum of the edges in the two separate graphs.
We therefore only have to know $C\left(x^k\right)$ for all $k$, with $C\left(h_S\left(x\right)\right)$ being determined by linearity.

Let us consider a few examples for small $k$, and the general pattern will be made clear.
Clearly, $C(x^0)=0$, since the graph consisting of a single vertex with no edges has a homotopy polynomial of $x^0$, and it possesses no edges.
Indeed, the cost of any graph with no edges must be zero, hence the cost of any integer is zero.
Next, considering $k=1$, the minimal simple graph with a single loop is the complete graph on $3$ vertices, which has $3$ edges, hence $C\left(x\right) = 3$.

As one final example, how many edges are required to compute a connected graph with two loops?
If we consider $K_3$, we have a minimal graph with one loop.
To add another loop, we at least have to add another edge; however, because $K_3$ is a complete graph, it doesn't have room for an additional edge. 
We therefore need to add an additional vertex first, and then add two additional edges to first connect the new vertex to $K_3$, and then to form the additional loop. 
Therefore, $C\left(x^2\right) = 5$, and we see a more general pattern emerging:
If a minimal representative of the graph polynomial $x^k$ is not a complete graph, then $C\left(x^{k+1}\right) = C\left(x^k\right)+1$. 
If the minimal representative of the graph polynomial $x^k$ is a complete graph, then $C\left(x^{k+1}\right) = C\left(x^k\right) +2$.
Denoting $C\left(x^k\right)$ as $C_k$, we obtain the sequence
\begin{equation}
    C_k = 0, 3, 5, 6, 8, 9, ...
\end{equation}
This sequence is the natural numbers with a set of numbers skipped, these skipped numbers $B_k = 1,2,4,7,...$ being one more than the number of edges in the complete graphs $K_k$.
To avoid having to compute these skipped numbers and manually extract them from the naturals, we provide the following direct expression for $C_k$:
\begin{equation}
    C_0 = 0,
\end{equation}
\begin{equation}
    C_{k>0} = \left \lfloor k+\frac{3+\sqrt{8k-7}}{2}\right \rfloor.
\end{equation}
The expression for $C_{k>0}$ is obtained by inverting the expression for the number of edges in a complete graph on $k$ vertices, with some additional modifications to ensure the numbers one above these numbers of complete graph edges are skipped.

The general algorithm for constructing $\mathcal{H}\left(N\right)$ is then as follows:

\begin{algorithm}[h]
\caption{Direct Construction of $\mathcal{H}\left(N\right)$}
\label{alg:polyposet}
\begin{algorithmic}
\Procedure{HomotopyPolynomialPoset}{$N$}
    \State $A \gets \{N\}$ \Comment{Initialise a set of polynomials to process with the polynomial $N$}
    \State $V \gets \{\}$ \Comment{Initialise vertex set as empty}
    \State $E \gets \{\}$ \Comment{Initialise edge set as empty}
    \While{$A \neq \{\}$} \Comment{While there are unprocessed polynomials}
        \State $h \gets$ \textbf{pop} $A$ \Comment{Select a polynomial to process}
        \If{$h\notin V$}\Comment{Check to see if we have already processed $h$}
        \State $V \gets$ \textbf{push} $h$ \Comment{Add $h$ to the vertex set $V$}
        \State $m \gets \frac{dh}{dl}(1)-h(1)+N$ \Comment{$m$ is the edge allowance of $h$.}
        \State $B \gets$ \textsc{Succs}$\left(h\right)$ \Comment{Generate the successors of $h$}
        \For{$g$ in $B$} \Comment{Loop through successors}
            \If{$C(g)\leq m+1$} \Comment{Check to ensure edge allowance isn't exceeded}
            \State $A \gets$ \textbf{push} $g$ \Comment{Add $g$ to be processed}
            \State $E \gets$ \textbf{push} $\{h, g\}$ \Comment{Add the edge $\{h,g\}$ into the edge set}
            \EndIf
        \EndFor
        \EndIf
    \EndWhile
    \State \textbf{return} $V$, $E$ \Comment{Return the vertices and edges in the face poset of $\mathcal{H}\left(N\right)$}
\EndProcedure
\end{algorithmic}
\label{alg:algorithm-2}
\end{algorithm}

Note that this algorithm is self terminating; once we reach layer $\frac{N\left(N-1\right)}{2}+1$, all candidate polynomials will violate the cost constraint and we will be left with no further terms to examine.

\begin{exmp}[$\mathcal{H}\left(3\right)$]
We compute the graded poset $\mathcal{H}\left(3\right)$ in two different ways. 
First, we explicitly compute the quotienting of the poset depicted in Figure \ref{fig:k3}.
Note that the graphs appearing at each layer in this Figure are all isomorphic as graphs, so by counting connected components and loops, we easily deduce that the resulting quotiented poset will represented by the following path graph:

\begin{figure}[h]
    \centering
    \includegraphics[width = .5\textwidth]{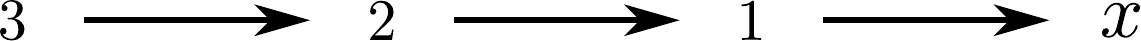}
    \caption{$\mathcal{H}\left(3\right)$}
    \label{fig:K3_homotopy}
\end{figure}
Secondly, we want to compute this poset directly.
\begin{enumerate}
    \item Begin with $A \leftarrow  \{3\}$.
    \item Pop $3$ from $A$; add $3$ to $V$.
    \item $m\leftarrow 0$; $B\leftarrow \textsc{Succs}\left(3\right) = \{2, x+2\}$
    \item $C\left(2\right)=0\leq m+1$ and $C\left(x+2\right)=3>m+1$, so we add $2$ to $A$, add the edge $\{3,2\}$ to $E$, and discard $x+2$.
    \item Pop $2$ from $A$; add $2$ to $V$.
    \item $m\leftarrow 1$; $B\leftarrow \textsc{Succs}\left(2\right) = \{1,x+1\}$
    \item $C\left(1\right) = 0\leq m+1$, and $C\left(x+1\right)=3>m+1$ which means that we add $1$ to $A$, $\{2,1\}$ to $E$ , and discard $x+1$.
     \item Pop $1$ from $A$; add $1$ to $V$.
    \item $m\leftarrow 2$; $B\leftarrow \textsc{Succs}\left(1\right) = \{x\}$
    \item $C\left(x\right) = 3\leq m+1$, so we add $x$ to $A$, $\{1,x\}$ to $E$.
    \item Pop $x$ from $A$; add $x$ it to $V$
    \item $m\leftarrow 3$; $B\leftarrow \textsc{Succs}\left(x\right) = \{x^2\}$
    \item $C\left(x^2\right)=5>m+1$, so we discard $x^2$, and the algorithm terminates.
\end{enumerate}
\end{exmp}

While in this trivial case, computing the quotienting directly is relatively straightforward due to the fact that the spanning subgraphs at each layer are obviously isomorphic to each other, and perhaps easier than directly computing $\mathcal{H}\left(3\right)$ using homotopy polynomials, the advantage of homotopy polynomials is quickly seen in the case of even slightly larger graphs.

\begin{exmp}[$\mathcal{H}\left(5\right)$]
We want to directly construct the quotiented poset for $K_5$ by the above algorithm.
\begin{enumerate}
    \item We begin with $A = \{5\}$.
    \item As in the previous example, we don't satisfy the edge cost criterion with promoted polynomials until layer $3$, so we pick up the algorithm at that state with $V=\{5,4,3\}$, $E=\{\{5,4\},\{4,3\},\{3,2\},\{3,x+2\}\}$, and $A=\{2,x+2\}$.
    \item $\textsc{Succs}\left(2\right)=\{1, x+1\}$, both of which are retained, so we have $V=\{5,4,3,2\}$, add edges $\{2,1\}$ and $\{2,x+1\}$, and $A=\{x+2,1,x+1\}$.
    \item $\textsc{Succs}\left(x+2\right)=\{x+1, 2x+1, x^2+2\}$, but only $x+1$ is retained, so we have $V=\{5,4,3,2,x+2\}$, add edge $\{x+2,x+1\}$, and $A=\{1,x+1\}$.
    \item $\textsc{Succs}\left(1\right) = \{x\}$ which is retained, so we have $V=\{5,4,3,2,x+2,1\}$, add edge $\{1,x\}$, and $A=\{x+1,x\}$.
    \item $\textsc{Succs}\left(x+1\right)=\{x,2x,x^2+1\}$, of which only $x$ and $x^2+1$ are retained, so we have $V=\{5,4,3,2,x+2,1,x+1\}$, add edges $\{x+1,x\}$ and $\{x+1,x^2+1\}$, and $A=\{x,x^2+1\}$.
    \item \textsc{Succs}$\left(x\right)=\{x^2\}$, which is retained, so we have $V=\{5,4,3,2,x+2,1,x+1,x\}$, add edge $\{x,x^2\}$, and $A=\{x^2+1,x^2\}$.
    \item $\textsc{Succs}\left(x^2+1\right)=\{x^2,x^3+1,x^2+x\}$, but only $x^2$ and $x^3+1$ are retained, so we have $V=\{5,4,3,2,x+2,1,x+1,x,x^2+1\}$, add edges $\{x^2+1,x^2\}$ and $\{x^2+1,x^3+1\}$, and $A=\{x^2,x^3+1\}$.
    \item $\textsc{Succs}\left(x^2\right)=\{x^3\}$ and $\textsc{Succs}\left(x^3+1\right) =\{x^3,x^3+x,x^4\}$, but only $x^3$ is retained from either of these, so after two passes through the while loop, we add edges $\{x^2,x^3\}$ and $\{x^3+1,x^3\}$, and $A=\{x^3\}$
    \item Taking successors causes the exponent in $x^3$ simply to increment one by one until we reach $x^7$, which violates the edge cost constraint, and our algorithm terminates.
\end{enumerate}
\begin{figure}[h]
    \centering
    \includegraphics[width = \textwidth]{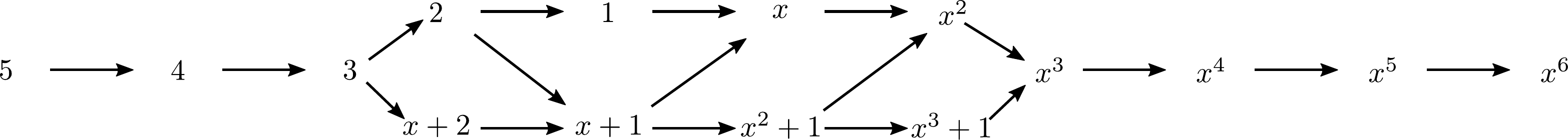}
    \caption{$\mathcal{H}\left(5\right)$ directly computed using graph homotopy polynomials.}
    \label{fig:K5Poset}
\end{figure}

There are $2^{10}$ spanning subgraphs in $K_5$; these ten steps are much simpler than computing and comparing $1024$ subgraphs explicitly, deriving their homotopy polynomials, and grouping the results according to all the induced covering relations derived from the subgraph inclusion relationship.
\end{exmp}

\section{Complexity Bounds and Connections to Number Theory}\label{sec:complexity}

A natural question with any new algorithm is time complexity.
We do not answer the time complexity of Algorithm \ref{alg:polyposet} directly, but instead examine the poset $\mathcal{H}(N)$ as $N$ grows. 
In this section, we establish rough bounds for the size of this poset by instead asking an easier question: how many elements are in $\mathcal{H}(N)$ at Euler characteristic $L = N - M$.
Equivalently, we could ask the following question:
    How many graphs are there up to homotopy equivalence with $M$ edges and Euler characteristic $L$?
We call the quantity answering this questions $H(M,L)$. 
Let $\pol(M,L)$ denote the set of polynomials with nonnegative integer coefficients with cost less than or equal to $M$ and Euler characteristic equal to $L$.
Asking the size of $|\pol(M,L)|$ is equivalent to asking $H(M,L)$, with one notable exception. 
The number of graphs with $M$ edges and Euler characteristic $L$ up to homotopy is exactly the number of elements of $\pol(M,L)$ \textit{that can be attained by the edge contraction construction} from a graph with $M$ edges and $L+M$ vertices. For most values of $M$ and $L$, every element of $\pol(M,L)$ can be attained by this construction. However $\pol(M,0)$ contains the zero polynomial, which if $M > 0$ cannot be obtained as the homotopy polynomial of a real graph, since $N=M-L$, and the zero polynomial is the homotopy polynomial of the empty graph. 
Hence in this case $H(M,L) = |\pol(M,L)| - 1$.

In summary,
\begin{equation}
    H(M,L) = \begin{cases}
            |\pol(M,L)| - 1 &  L = 0,\; M>0 \\
            |\pol(M,L)| & \mathrm{otherwise.} 
        \end{cases}
        \label{eqn:htpypolcount}
\end{equation}
Hence we may study $H(M,L)$ via the size of the set $\pol(M,L)$. The following notion will be extremely useful to us.

\begin{defin}
Let $S \subseteq \mathbb{Z}^d$ and $\mathbf{n} \in \mathbb{Z}^d$, and $\mathbb{N}:= \{0,1,2,3,\ldots\}$. The set of \emph{generalized partitions of} $\mathbf{n}$ \emph{into elements of $S$}, denoted $\pa_S(\mathbf{n})$, refers to the set of all functions $f:S \to \mathbb{N}$ that are zero with finite exceptions satisfying
\begin{equation}
    \sum_{s\in S} sf(s)=\mathbf{n}.
\end{equation}
For convenience, we also define $p_S(\mathbf{n}) : = |\pa_S(\mathbf{n})| \in \mathbb{N} \cup \{\infty\}$.
When $d=1$ we allow ourselves to omit the word generalized, calling $\pa_S(\mathbf{n})$ the partitions of $\mathbf{n}$ into elements of $S$.
\end{defin}

The most famous example of this notion comes not from topology, but number theory. Let $\mathbb{N}_{\ge 1}: =\{1,2,3,\ldots\}$. The number $p_{\mathbb{N}_{\ge 1}}(n)$, often denoted simply by $p(n)$, is called the number of partitions of $n$, the number of ways $n$ can be written as a sum of positive natural numbers, up to commutativity.

For our purposes we will need to study more complicated kinds of generalized partitions. Let
\begin{equation}
    Q := \{(C_k,1-k)\}_{k\in\mathbb{N}}.
\end{equation}
Hence $Q$ is the set of paired costs and Euler characteristics of $l^k$, for $k$ varying across nonnegative numbers. There is a bijection between polynomials with Euler characteristic $L$ and cost \emph{exactly} $M$ and $\pa_Q(M,L)$, namely
\begin{equation}
    \bigoplus_{i = 0}^\infty a_k l^k\; \longmapsto\; \big(f:(C_i,1-i) \mapsto a_i\big).
\end{equation}

Of course, what we are actually interested in is the set $\pol(M,L)$, the polynomials with Euler characteristic $L$ and cost less than or equal to $M$. With a slight modification to $Q$ we can relate $\pol(M,L)$ to a set of generalized partitions. Let
\begin{equation}
    R := Q \cup \{(1,0)\}.
\end{equation}
There is a bijection between $\pol(M,L)$ and $\pa_R(M,L)$ given by
\begin{equation}
    \bigoplus_{i = 0}^\infty a_k l^k\; \longmapsto\; \bigg(f:(C_i,1-i) \mapsto a_i, \; (1,0) \mapsto M - \bigoplus_{i=0}^\infty C_ia_i\bigg) .
\end{equation}.

Hence, $|\pol(M,L)| = p_R(M,L)$. Using this in Equation \ref{eqn:htpypolcount}, we have
\begin{equation}
    H(M,L) = \begin{cases}
            p_R(M,L) - 1 &  L = 0,\; M>0 \\
            p_R(M,L) & \mathrm{otherwise.} 
        \end{cases}
        \label{eqn:htpyparcount}
\end{equation}

We now deduce properties of $p_R(M,L)$ in order to do the same for $H(M,L)$. We have an injection from $\pa_R(M,L)$ to $\pa_R(M+1,L)$, given by $f\mapsto \widetilde{f}$, where $\widetilde{f} = f$ except on $(1,0)$, where $\widetilde{f}(1,0) = f(1,0) + 1$, i.e.
\begin{equation}
    \widetilde{f}(M,L):=\begin{cases}
        f(M,L) + 1 & (M,L) = (1,0)\\
        f(M,L) & \mathrm{otherwise}.
    \end{cases}
\end{equation}

Similarly, we have an injection from $\pa_R(M,L)$ to $\pa_R(M,L+1)$ given by $f\mapsto \overline{f}$, where $\overline{f} = f$ except on $(0,1)$, where $\overline{f}(0,1) = f(0,1) + 1$, i.e.
\begin{equation}
    \overline{f}(M,L):=\begin{cases}
        f(M,L) + 1 & (M,L) = (0,1)\\
        f(M,L) & \mathrm{otherwise}.
    \end{cases}
\end{equation}

This shows:

\begin{prop}
\label{prop:princ}
$p_R(M,L)$ is weakly increasing in both coordinates.
\end{prop}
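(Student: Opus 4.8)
The statement is essentially immediate from the two injections sketched just above it, so the plan is to make those maps explicit, check they are well defined and injective, and then read off the inequalities on cardinalities. I will interpret ``weakly increasing'' using the convention that $a\le\infty$ and $\infty\le\infty$ hold, since $p_R$ a priori takes values in $\mathbb{N}\cup\{\infty\}$; an injection $\pa_R(M,L)\hookrightarrow\pa_R(M',L')$ yields $p_R(M,L)\le p_R(M',L')$ regardless of finiteness.

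First I would isolate the two elements of $R$ that drive the argument. Since $R=Q\cup\{(1,0)\}$ we have $(1,0)\in R$, and since $C_0=0$ the $k=0$ term of $Q$ is $(C_0,1-0)=(0,1)$, so $(0,1)\in Q\subseteq R$ as well. These act as ``unit increments'' in the first and second coordinates respectively, and they are precisely what lets us push a generalized partition of $(M,L)$ forward to one of $(M+1,L)$ or of $(M,L+1)$. For the first coordinate, define $\Phi\colon\pa_R(M,L)\to\pa_R(M+1,L)$ by letting $\Phi(f)=\tilde f$ agree with $f$ everywhere except at $(1,0)$, where $\tilde f(1,0)=f(1,0)+1$. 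Then $\tilde f$ is still finitely supported, and
\[
\sum_{s\in R}s\,\tilde f(s)\;=\;\Big(\sum_{s\in R}s\,f(s)\Big)+(1,0)\;=\;(M,L)+(1,0)\;=\;(M+1,L),
\]
so $\tilde f\in\pa_R(M+1,L)$; and $\Phi$ is injective because every element of its image has value $\ge 1$ at $(1,0)$, so $f$ is recovered by decrementing there. Hence $p_R(M,L)\le p_R(M+1,L)$. The second coordinate is handled identically: replace $(1,0)$ by $(0,1)\in R$ to get an injection $\Psi\colon\pa_R(M,L)\to\pa_R(M,L+1)$, $f\mapsto\bar f$ with $\bar f(0,1)=f(0,1)+1$, whose defining weighted sum computes to $(M,L+1)$; this gives $p_R(M,L)\le p_R(M,L+1)$. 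Combining the two families of inequalities is the whole proposition.

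There is no genuinely hard step: the only thing to be careful about is confirming that $\tilde f$ and $\bar f$ land in the advertised partition sets, which reduces to the observations that $(1,0)$ and $(0,1)$ really lie in $R$ and that adding a single copy of $s$ to the multiset represented by $f$ shifts the weighted sum by exactly $s$. I expect the write-up to be two or three lines. If desired one could additionally remark that $\pa_R(M,L)$ is in fact finite whenever $M$ is finite -- every generator other than $(0,1)$ has first coordinate at least $1$, which bounds all multiplicities except $f(0,1)$ by $M$, and then the Euler-characteristic equation pins down $f(0,1)$ up to a bounded quantity -- so the $\infty$ convention never actually intervenes, but it costs nothing to keep it for safety.
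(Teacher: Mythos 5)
Your proposal is correct and matches the paper's argument exactly: the paper also proves this by exhibiting the two injections $f\mapsto\tilde f$ (incrementing at $(1,0)$) and $f\mapsto\bar f$ (incrementing at $(0,1)\in R$, since $C_0=0$). Your extra remarks on well-definedness, injectivity, and finiteness are fine but not different in substance.
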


From this we can deduce the following about $H(M,L)$.
\begin{cor}
$H(M,L)$ is weakly increasing in $L$. $H(M,L)$ is also weakly increasing in $M$, with the exception that $H(0,0) = 1$, while $H(1,0) = H(2,0) = 0$.
\end{cor}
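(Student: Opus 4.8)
The plan is to deduce the corollary directly from the preceding proposition together with the explicit formula for $H(M,L)$ in Equation~\eqref{eqn:htpyparcount}, handling the exceptional low-degree cases by hand. First I would recall that away from the line $L = 0$ (and away from the origin of that line) we simply have $H(M,L) = p_R(M,L)$, so weak monotonicity in each coordinate is inherited verbatim from the proposition; the only subtlety is what happens on the edge cases where the $-1$ correction kicks in, i.e.\ comparing values with $L = 0$ to values with $L = 0$ or $L = 1$.

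For monotonicity in $L$: fix $M$. If $M = 0$ there is nothing to check beyond $H(0,0) \le H(0,1)$, and since $p_R$ is weakly increasing in $L$ and the $-1$ correction is applied to $H(0,0)$ but not to $H(0,1)$, the inequality $H(0,0) = p_R(0,0) - 1 \le p_R(0,1) = H(0,1)$ holds a fortiori. If $M > 0$, I need $H(M,0) \le H(M,1)$, that is $p_R(M,0) - 1 \le p_R(M,1)$, which again follows from $p_R(M,0) \le p_R(M,1)$ (Proposition) since we are only making the left side smaller. For all $L \ge 1$ the correction is absent on both sides, so $H(M,L) = p_R(M,L) \le p_R(M,L+1) = H(M,L+1)$ is immediate. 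Hence $H(M,L)$ is weakly increasing in $L$ for every fixed $M$.

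For monotonicity in $M$: fix $L$. If $L \ge 1$, then $H(M,L) = p_R(M,L)$ for all $M$ and the claim is just the proposition. If $L = 0$, then $H(0,0) = 1$ (here $p_R(0,0) = 1$, the empty partition, and no correction is applied since $M = 0$), while for $M \ge 1$ we have $H(M,0) = p_R(M,0) - 1$, and $p_R(M,0)$ is weakly increasing in $M$, so $H(M,0)$ is weakly increasing on the range $M \ge 1$. The only failure of monotonicity is therefore at the step from $M = 0$ to $M = 1$ on the line $L = 0$: I would compute $p_R(1,0)$ and $p_R(2,0)$ explicitly. An element of $\pa_R(M,0)$ assigns a nonnegative multiplicity to $(1,0)$ and to each $(C_k, 1-k)$; the Euler characteristic sums to $0$, and since $C_0 = 0$ the pair $(C_0,1) = (0,1)$ contributes cost $0$ but raises $L$, which must then be cancelled by pairs with $k \ge 2$, each of which costs at least $C_2 = 5 > 2$. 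So for $M \in \{1,2\}$ the only way to reach $L = 0$ is to use no $l^k$ terms at all except possibly copies of $l^0$ (which have cost $0$ and $\chi$-contribution $1$, hence cannot appear either since they would force $L \ge 1$), leaving only the $(1,0)$ generator; thus $\pa_R(1,0)$ and $\pa_R(2,0)$ each consist of a single function, giving $p_R(1,0) = p_R(2,0) = 1$ and hence $H(1,0) = H(2,0) = 0$. This records precisely the stated exception, and I expect this bookkeeping of the small cases — making sure the cost thresholds $C_k$ genuinely exclude every nonconstant contribution for $M \le 2$ — to be the only real point requiring care; everything else is a transparent consequence of the proposition.
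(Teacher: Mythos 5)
There is a genuine gap in the monotonicity-in-$L$ argument: you never consider the step from $L=-1$ to $L=0$ for $M>0$, and that is precisely the nontrivial case. You frame the only subtlety as ``comparing values with $L = 0$ to values with $L = 0$ or $L = 1$'', but the $-1$ correction in Equation~\eqref{eqn:htpyparcount} cuts both ways. Going from $L=0$ to $L=1$ the correction only makes the left side \emph{smaller}, so weak monotonicity of $p_R$ suffices, as you note. But going from $L=-1$ to $L=0$ (with $M>0$) you need
\[
H(M,-1)=p_R(M,-1)\ \le\ p_R(M,0)-1=H(M,0),
\]
i.e.\ the \emph{strict} inequality $p_R(M,-1)<p_R(M,0)$. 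This does not follow from the weak monotonicity provided by the preceding proposition, and nothing in your proposal supplies it. (Note that negative $L$ is genuinely in play: $L=N-M$ is the Euler characteristic, and any graph with more edges than vertices has $L<0$; the paper's proof spends its entire first paragraph on exactly this case.) The paper closes the gap by observing that the injection $f\mapsto\bar f$ from $\pa_R(M,-1)$ to $\pa_R(M,0)$ that increments the multiplicity of $(0,1)$ fails to be surjective when $M>0$: the element of $\pa_R(M,0)$ supported entirely on $(1,0)$ is not in its image, forcing a strict increase. You would need some argument of this kind to complete the proof.

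There is also a small internal inconsistency: in the monotonicity-in-$L$ paragraph you write $H(0,0)=p_R(0,0)-1$, applying the correction at $M=0$, whereas in the monotonicity-in-$M$ paragraph you (correctly) state that no correction is applied when $M=0$. This slip does not break your inequality, but it should be cleaned up. The monotonicity-in-$M$ half of your proposal, and the explicit computation showing $p_R(1,0)=p_R(2,0)=1$ by a cost-budget argument, are correct and essentially match the paper's reasoning.
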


\begin{proof}
For the first statement, by Equation \ref{eqn:htpyparcount} and the previous proposition, we only need to check that $H(M,-1) \leq H(M,0)$ when $M>0$, or equivalently, that $p_R(M,-1) < p_R(M,0)$ when $M>0$. This is true since the map $f\mapsto \overline{f}$ from $\pa_R(M,-1)$ to $\pa_R(M,0)$ is not a surjection. Indeed, the function $g \in \pa_R(M,0)$ which sends $(1,0)$ to $M$ and everything else to zero cannot be in the image of this map.

For the second statement, we always have $H(M,L) = H(M+1,L)$ by Equation \ref{eqn:htpyparcount} and the previous proposition unless $M = L = 0$. Here, direct computation shows $p_R(0,0) = p_R(1,0) = p_R(2,0) = 1$, but $p_R(3,0) = 2$. Applying Equation \ref{eqn:htpyparcount} gives the exception.
\end{proof}

Next, we want to show that $H(M,L)$ stabilises as a function of $L$ when $L$ is large enough.
Intuitively, once we have more vertices than edges, adding more vertices cannot increase the number of permissible loops in any connected component. Instead, each homotopy equivalence class will merely have one more connected component for each vertex we added. 
As usual we first prove an analogous result for $p_R$.

\begin{prop}
$p_R(M,L_1) = p_R(M,L_2)$ whenever $L_1,L_2 \geq 0$.
\end{prop}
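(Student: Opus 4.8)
The plan is to exhibit an explicit bijection between $\pa_R(M,L_1)$ and $\pa_R(M,L_2)$ for any two nonnegative values $L_1 \le L_2$, which will prove the equality of cardinalities. The key observation is that the only element of $R$ with a nonzero second coordinate is $(1,0)$... no wait: the pair $(0,1)$ also lies in $R$, since $(C_0, 1-0) = (0,1) \in Q \subseteq R$. This is the ``extra vertex'' generator, and it is the unique element of $R$ whose cost-coordinate is $0$ while its Euler-characteristic coordinate is positive. The idea is that the difference $L_2 - L_1$ in target Euler characteristic can simply be absorbed by adjusting the value of a generalized partition on the generator $(0,1)$, provided we can show that the value on $(0,1)$ can always be made arbitrarily large or small without affecting the rest of the partition — and the constraint preventing us from lowering it is exactly that it must remain nonnegative.

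First I would set up notation: for $f \in \pa_R(M,L)$ with $L \ge 0$, write $a := f(0,1) \ge 0$, and note that the remaining data $f' := f|_{R \setminus \{(0,1)\}}$ must satisfy $\sum_{s \ne (0,1)} s f'(s) = (M, L - a)$. So $f \in \pa_R(M,L)$ decomposes as a choice of $a \ge 0$ together with a generalized partition of $(M, L-a)$ into $R \setminus \{(0,1)\}$. The crucial structural fact I need is: \emph{for the generators in $R \setminus \{(0,1)\}$, the Euler-characteristic coordinate is always $\le 0$}. Indeed $(1,0)$ contributes $0$, and $(C_k, 1-k)$ for $k \ge 1$ contributes $1-k \le 0$. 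Hence any generalized partition of $(M, L')$ into $R \setminus \{(0,1)\}$ forces $L' \le 0$. Combining: $f \in \pa_R(M,L)$ is equivalent to choosing an integer $a$ with $a \ge L$ (so that $L - a \le 0$, which is necessary) and $a \ge 0$, together with a generalized partition of $(M, L - a)$ into $R \setminus \{(0,1)\}$; since $L \ge 0$ the binding constraint is $a \ge L$, i.e. $a = L + b$ for $b \ge 0$, and then $L - a = -b$.

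Therefore $\pa_R(M,L)$ is in bijection with pairs $(b, g)$ where $b \ge 0$ and $g$ is a generalized partition of $(M, -b)$ into $R \setminus \{(0,1)\}$ — and this description does not depend on $L$ at all, as long as $L \ge 0$. This yields $p_R(M,L_1) = p_R(M,L_2) = \sum_{b \ge 0} p_{R \setminus \{(0,1)\}}(M, -b)$ (a finite sum, since for $b$ large the cost constraint forces emptiness), completing the proof. Concretely, the bijection $\pa_R(M,L_1) \to \pa_R(M,L_2)$ sends $f$ to $\tilde f$ with $\tilde f(0,1) = f(0,1) + (L_2 - L_1)$ and $\tilde f = f$ elsewhere; injectivity is clear, and surjectivity follows because any $f_2 \in \pa_R(M,L_2)$ has $f_2(0,1) \ge L_2 \ge L_2 - L_1$, so subtracting $L_2 - L_1$ keeps it nonnegative and lands in $\pa_R(M,L_1)$.

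The main obstacle — really the only subtle point — is justifying that $f(0,1) \ge L$ for every $f \in \pa_R(M,L)$ when $L \ge 0$, i.e. that no other generator can supply positive Euler characteristic. This rests on $C_k$ being defined only for $k \ge 1$ to pair with $1 - k \le 0$, and on $(1,0)$ contributing nothing to the second coordinate; I would state this as a short lemma (``every generator in $R$ other than $(0,1)$ has non-positive second coordinate, and $(1,0)$ has second coordinate $0$'') and then the bijection is immediate. One should also double-check the edge case $L = 0$: then $f(0,1) \ge 0$ is automatic, and the argument goes through verbatim. Everything else is bookkeeping.
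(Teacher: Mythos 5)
Your proof is correct and rests on exactly the same key observation as the paper's: every generator of $R$ other than $(0,1)$ has nonpositive second coordinate, so the $(0,1)$-component of any $f\in\pa_R(M,L)$ must be at least $L$, which means raising or lowering $f(0,1)$ gives the desired bijection. The only cosmetic difference is that you write the bijection $\pa_R(M,L_1)\to\pa_R(M,L_2)$ in one step, whereas the paper iterates the single-step map $f\mapsto\bar f$ (shift by $1$) and proves its surjectivity by contradiction.
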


\begin{proof}
It suffices to show that the map $f \mapsto \overline{f}$ from $\pa_R(M,L)$ to $\pa_R(M,L+1)$ is a surjection and hence a bijection when $L\geq0$. 
If the map $f\mapsto \overline{f}$ is not surjective, that means there is a map $g\in p_R(M,L+1)$ such that $g(0,1) = 0$. Letting $W = R-\{(0,1)\}$, we see
\begin{equation}
    \sum_{s\in W} sg(s) = \sum_{s\in R} sg(s) = (M,L+1).
\end{equation}
On the left we have a sum of elements of $\mathbb{Z}^2$ with nonpositive second coordinate. On the right we have an element of $\mathbb{Z}^2$ with positive second coordinate. This is a contradiction, and we are done.
\end{proof}

Immediate is the following result about $H(M,L)$.
\begin{cor}
$H(M,L_1) = H(M,L_2)$ whenever $L_1,L_2 > 0$.
\end{cor}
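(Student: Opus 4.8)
\section*{Proof proposal}

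The plan is to deduce this corollary directly from the immediately preceding proposition (that $p_R(M,L_1) = p_R(M,L_2)$ for all $L_1,L_2 \geq 0$) together with the piecewise identity \eqref{eqn:htpyparcount}. First I would observe that the hypothesis $L_1,L_2 > 0$ in particular gives $L_1,L_2 \geq 0$, so the proposition applies and yields $p_R(M,L_1) = p_R(M,L_2)$. Second, I would note that the only branch of \eqref{eqn:htpyparcount} in which $H(M,L)$ differs from $p_R(M,L)$ is the case $L = 0$ (and $M > 0$); since $L_1 > 0$ and $L_2 > 0$, neither $(M,L_1)$ nor $(M,L_2)$ falls into that branch, so $H(M,L_i) = p_R(M,L_i)$ for $i = 1,2$. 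Chaining these equalities gives $H(M,L_1) = p_R(M,L_1) = p_R(M,L_2) = H(M,L_2)$, which is the claim.

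There is essentially no obstacle here: the content was already established in the previous proposition, and the corollary is a bookkeeping step that transfers the stabilization statement from $p_R$ to $H$. The only point that warrants a sentence of care is verifying that the strict inequality $L_i > 0$ (as opposed to $L_i \geq 0$) is exactly what is needed to stay out of the exceptional $L = 0$ branch of \eqref{eqn:htpyparcount} — this is why the corollary is stated with $>$ rather than $\geq$, in contrast to the proposition about $p_R$. I would make that explicit so the reader sees why the hypotheses of the corollary and the proposition differ.

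Intuitively, I would also add a one-line remark echoing the text preceding the proposition: once $L > 0$ there are strictly more vertices than edges, so the minimal-edge graph realizing a given homotopy polynomial always has a spare isolated vertex available, and adding further isolated vertices (increasing $L$) neither creates nor destroys homotopy types beyond appending trivial components — so the count is independent of $L$ in this regime. This is not needed for the formal argument but situates the result.
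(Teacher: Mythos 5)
Your proof is correct and is exactly the argument the paper leaves implicit when it labels the corollary ``Immediate'': apply the preceding proposition to get $p_R(M,L_1)=p_R(M,L_2)$, then note that for $L_1,L_2>0$ the piecewise definition of $H$ in terms of $p_R$ has no correction term, so $H(M,L_i)=p_R(M,L_i)$ for $i=1,2$. Your observation about why the hypothesis uses strict inequality (to avoid the $L=0$ branch where $H=p_R-1$ when $M>0$) is precisely the point of care that justifies the change from $\geq$ in the proposition to $>$ in the corollary.
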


In light of this result, it makes sense to define $H(M)$, the limit of $H(M,L)$ as $L$ approaches infinity. The fact that $H(M)$ is dependent on only one variable suggests that we might be able write $H(M)$ as the number of partitions of $M$ into elements of some subset of $\mathbb{N}_{\ge 1}$. To this end, we define the set
\begin{equation}
    T := \{1\} \cup \{C_k\}_{k\in \mathbb{N}_{\ge 1}} = \{1,3,5,6,8,9,\ldots\}.
\end{equation}
It turns out $H(M)$ is the number of partitions of $M$ into elements of $T$:
\begin{prop}
$H(M) = p_T(M)$.
\end{prop}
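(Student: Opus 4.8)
The plan is to show $p_T(M) = H(M)$ by relating $T$-partitions of $M$ to $R$-partitions of $(M, L)$ for $L$ large. Recall $R = Q \cup \{(1,0)\}$ where $Q = \{(C_k, 1-k)\}_{k \in \mathbb{N}}$, and $T = \{1\} \cup \{C_k\}_{k \geq 1}$. The key observation is that $T$ is obtained from $R$ by forgetting the second (Euler characteristic) coordinate and then \emph{discarding the element $C_0 = 0$ that would have come from $(C_0, 1) = (0,1)$}. So I expect the right statement to be: $p_T(M) = \lim_{L\to\infty} p_R(M,L)$, and then apply Equation~\ref{eqn:htpyparcount} together with the fact that the exceptional case $L=0$ does not occur in the limit, to conclude $p_T(M) = H(M)$.

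First I would fix $M$ and choose $L$ with $L \geq M$ (or any sufficiently large $L$; by the previous corollary $H(M,L)$ is already constant for $L > 0$, so I only need one convenient large value). Given a $T$-partition $f : T \to \mathbb{N}$ of $M$, I would build an $R$-partition $\hat f : R \to \mathbb{N}$ of $(M,L)$ as follows: send $(C_k, 1-k) \mapsto f(C_k)$ for each $k \geq 1$, send $(1,0) \mapsto f(1)$, and use the remaining freedom in the generator $(0,1) = (C_0, 1)$ to absorb whatever second-coordinate deficit remains — that is, set $\hat f(0,1)$ equal to $L$ minus the second coordinate contributed by the other terms. The first coordinate automatically sums to $M$ since $(0,1)$ contributes nothing to it and the $C_k$ and $1$ pieces sum to $M$ by hypothesis; the second coordinate sums to $L$ by construction. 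Conversely, given any $\hat f \in \pa_R(M,L)$, restrict to a function $T \to \mathbb{N}$ by $f(C_k) := \hat f(C_k, 1-k)$ for $k \geq 1$ and $f(1) := \hat f(1,0) + \hat f(0,1)$ — note $(0,1)$ and $(1,0)$ are the only generators whose first coordinate is nonzero-but-contributing-to-$1$; here I need to be slightly careful since $C_1 = 3 \neq 1$, but $1$ itself only arises from $(1,0)$, while $(0,1)$ contributes $0$ to the first coordinate, so actually $f(1) := \hat f(1,0)$ and I simply \emph{forget} $\hat f(0,1)$. I would check that forgetting $\hat f(0,1)$ is harmless precisely because, for $L \geq 0$, the value $\hat f(0,1)$ is \emph{forced}: by the proof of the preceding proposition, the second coordinate $L$ must be supplied entirely through $(0,1)$ up to the (nonpositive) contributions of the $(C_k,1-k)$ with $k\geq 1$, so $\hat f(0,1) = L + \sum_{k\geq 1}(k-1)\hat f(C_k,1-k)$ is determined by the rest of $\hat f$. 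Hence the restriction map and the extension map are mutually inverse, giving a bijection $\pa_T(M) \cong \pa_R(M,L)$.

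The main obstacle — and the only subtle point — is verifying that the extension map genuinely lands in $\pa_R(M,L)$, i.e. that the required value $\hat f(0,1) = L - \sum_{k\geq 1}(1-k)f(C_k)$ is a \emph{nonnegative} integer. Since $1 - k \leq 0$ for $k \geq 1$, the sum $\sum_{k \geq 1}(1-k)f(C_k)$ is $\leq 0$, so $\hat f(0,1) \geq L \geq 0$; no problem, and this is exactly why choosing $L$ large (or just $L \geq 0$) works and why the dependence on $L$ washes out. With the bijection in hand, $p_T(M) = p_R(M,L)$ for every $L \geq 0$, hence $p_T(M) = \lim_{L\to\infty} p_R(M,L) = \lim_{L\to\infty} H(M,L) = H(M)$, using Equation~\ref{eqn:htpyparcount} and the corollary that $H(M,L)$ is eventually constant in $L$ (the $L=0$ exception in \ref{eqn:htpyparcount} being irrelevant once $L>0$). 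I would close by remarking that this also re-proves, combinatorially, that $H(M)$ is well-defined and depends only on $M$.
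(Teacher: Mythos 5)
Your proposal is correct and follows essentially the same route as the paper: the heart of both arguments is the bijection between $\pa_T(M)$ and $\pa_R(M,L)$ obtained by forgetting the generator $(0,1)=(C_0,1)$, whose multiplicity is forced by the second-coordinate constraint and is nonnegative because $1-k\le 0$ for $k\ge 1$. The paper specializes to $L=0$ and invokes the earlier stabilization proposition to transfer to $L=1$, whereas you carry out the bijection uniformly for all $L\ge 0$ (thereby implicitly re-deriving that stabilization), but this is a cosmetic difference and the combinatorial content is identical.
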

\begin{proof}
By the previous corollary it suffices to show that $H(M,1) = p_T(M)$. Then by Equation \ref{eqn:htpyparcount} and the last proposition it suffices to show that $p_R(M,0) = p_T(M)$. We have a map from $\pa_R(M,0)$ to $\pa_T(M)$:

\begin{equation}
    \bigg(f:(C_k,1-k) \mapsto a_k, \; (1,0) \mapsto b\bigg)\; \longmapsto \bigg(g:1 \mapsto b,\; C_k \mapsto a_k \textrm{ for }k\geq 1\bigg),
\end{equation}
and this map has inverse
\begin{equation}
    \bigg(g:1\mapsto b,\; C_k\mapsto a_k \textrm{ for }k\geq 1\bigg)\; \longmapsto \;\bigg(f:(1,0) \mapsto b,\; (C_k,1-k) \mapsto a_k \textrm{ for } k\geq 1,\; (0,1) \mapsto -\bigoplus_{k=1}^\infty a_k(1-k)\bigg),
\end{equation}
completing the proof.
\end{proof}

We now state the main result of the section

\begin{theorem}
The functions $H(M)$ and $|\mathcal{H}(N)|$ satisfy the asymptotic formulae
\begin{align*}
    \log H(M) & \sim \pi \bigg[\frac{2}{3}\bigg]^{1/2} M^{1/2}\\
    \log |\mathcal{H}(N)| &\sim 3\Big[\frac{1}{4}\zeta(3)\Big]^{1/3}N^{2/3},
\end{align*}
where $\zeta(3)$ is the constant
\begin{equation*}
    \zeta(3) := \sum_{n = 1}^{\infty} \frac{1}{n^3}.
\end{equation*}
\end{theorem}

\begin{proof}
As a formal power series we have the equality
\begin{equation}
\label{eqn:ptgen}
    \sum_{n = 0}^\infty p_T(n)z^n = \prod_{k=1}^\infty (1-z^{T_k})^{-1},
\end{equation}
where $T_k$ is the $k^{\mathrm{th}}$ smallest entry of $T$. This is obtained by factoring the right side of Equation \ref{eqn:ptgen} using
\begin{equation*}
    (1-z^{T_k})^{-1} = (1 + z^{T_k} + z^{2T_k} + \ldots).
\end{equation*}

Similarly, one obtains
\begin{equation*}
    \sum_{(n,m)\in \mathbb{Z}^2}^\infty p_R(n,m)z_1^nz_2^m = (1-z_1)^{-1}\prod_{k=1}^\infty (1-z_1^{C_k}z_2^{1-k})^{-1}.
\end{equation*}
Setting $z = z_1 = z_2$ we have
\begin{equation*}
    \sum_{n = 0}^{\infty}\sum_{a + b = n} p_R(a,b) z^n = (1-z)^{-1}\prod_{k=1}^\infty (1-z^{C_k + 1-k})^{-1} = (1-z)^{-2}\prod_{k=3}^{\infty}(1-z^k)^{-(k-2)}.
\end{equation*}
The last equality above follows from the fact that the sequence $\{C_k\}_{k\geq 1}$ can be constructed by starting with 3, then skipping the number 4, continuing with the next two numbers (5 and 6), skipping the number 7, continuing with the next three numbers, and so on.

By replacing $z$ with $e^{-s}$ we have
\begin{align*}
        \sum_{n = 0}^\infty p_T(n)e^{-sn} &= \prod_{k=1}^\infty (1-e^{-sT_k})^{-1}\\
        \sum_{n = 0}^{\infty}\sum_{a + b = n} p_R(a,b) e^{-sn} &= (1-e^{-s})^{-2}\prod_{k=3}^{\infty}(1-e^{-sk})^{-(k-2)}.
\end{align*}
The numbers $p_T(n)$ are increasing in $n$ since $1\in T$. Similarly, the numbers $\sum_{a+b = n}p_R(a,b)$ are increasing in $n$ by Proposition \ref{prop:princ}.
Notice also that letting $Y = \{\frac{n(n+1)}{2} + 1 : n\in\mathbb{N}\}$ and $Y_{\leq n} = \{k\in Y: k\leq n\}$, we have
\begin{align*}
    \big|T \cap \{1,\ldots, n\}\big| =  n - |Y_{\leq n}| + 1 = n + O(\sqrt{n}).
\end{align*}
Moreover, for $w\geq 3$
\begin{equation*}
    2 + \sum_{k = 3}^w (k-2) = 2 + \sum_{k=1}^{w-2} k = 2 + \frac{(w-2)(w-1)}{2} = \frac{1}{2}w^2 + O(w).
\end{equation*}
Applying to the above conclusions to the main result of \cite{brigham1950general}, we obtain the asymptotic formulas
\begin{align*}
    \log p_T(n) & \sim \pi \bigg[\frac{2}{3}\bigg]^{1/2} n^{1/2}\\
    \log \sum_{a + b = n} p_R(a,b) &\sim 3\Big[\frac{1}{4}\zeta(3)\Big]^{1/3}n^{2/3}.
\end{align*}
The theorem then follows from the fact that $H(M) = p_T(M)$ and
\begin{equation*}
    |\mathcal{H}(N)| = \sum_{M + L = N} H(M,L) = -1 + \sum_{M+L=N}p_R(M,L),
\end{equation*}
whenever $N\geq 1$.
\end{proof}

\section{Chain Comparisons}\label{sec:chain-compare}
We have established the space of possible spanning subgraphs, partially ordered by inclusion, graded by Euler characteristic, and quotiented by homotopy equivalence.
We now must construct a metric to compare chains in these posets analogous to our comparison in Section \ref{sec:graph-filtrations} using Kendall's $d_K$. 
To do this, we take inspiration from the unquotiented poset, and generalise to the quotiented poset.
After quotienting by homotopy equivalence, the underlying symmetric group structure is destroyed, which is apparent by observing that while the unquotiented posets are symmetric with respect to reversing the order relation (with this transformation being identified with replacing each element with its complement), but the quotiented poset is not symmetric with respect to reversing the order relation.
Therefore, we seek to generalise Kendall's $d_K$ to chains in more general graded posets; we want to examine the impact of an adjacent transposition in $\Xi$ on its associated edge filtration, which we denote $\mathcal{F}\left(\Xi\right)$.

Suppose $\Xi = \{e_1,e_2...,e_k,e_{k+1},...,e_{M-1}, e_M\}$, and let $\sigma$ be the transposition between elements $k$ and $k+1$.
Examining $\mathcal{F}\left(\Xi\right)$ and $\mathcal{F}\left(\sigma\left(\Xi\right)\right)$, we see that for $i<k$, $S_i\left(\Xi\right)=S_i\left(\sigma\left(\Xi\right)\right)$, since $\Xi$ and $\sigma\left(\Xi\right)$ only differ at positions $k$ and $k+1$.
At position $k$, $S_k\left(\Xi\right)$ contains $e_k$, while $S_k\left(\sigma\left(\Xi\right)\right)$ contains $e_{k+1}$, hence these two chains differ at position $k$.
However, at position $k+1$, both of these chains contain elements $e_k$ and $e_{k+1}$, and hence are equal at position $k+1$, and all following positions.
Hence an adjacent transposition in $\Xi$ results in a change in $\mathcal{F}$ at exactly one position. 
This motivates the definition of a \emph{discrete homotopy}, through \emph{adjacent chains}.

\begin{defin}[Adjacent Chains]
    Two maximal chains $\mathcal{F}_1$ and $\mathcal{F}_2$ are \emph{adjacent} if they differ in exactly one position.
\end{defin}
\begin{defin}[Discrete Homotopy]
Let $\{\mathcal{F}_i\}|_{i=0}^d$ be a sequence of maximal chains in a graded poset with the property that $\mathcal{F}_k$ and $\mathcal{F}_{k+1}$ are adjacent for all $k\in\{0,...,d-1\}$.
The sequence $\{\mathcal{F}_k\}$ is a discrete homotopy of length $d$.
\end{defin}
Note that because the posets we are working in possess unique lowest and highest elements, all the discrete homotopies we will encounter can be considered relative discrete homotopies, i.e. those with fixed endpoints.
With this, we can generalise Kendall's $d_K$ metric: 
\begin{defin}[Discrete Homotopy Metric]\label{defn:sec:chain-compare:homotopy-metric}
Let $\mathcal{F}_I$ and $\mathcal{F}_F$ be maximal chains in a graded poset. The \emph{discrete homotopy distance} $d_H\left(\mathcal{F}_I,\mathcal{F}_F\right)$ between $\mathcal{F}_I$ and $\mathcal{F}_F$ is the length of the shortest discrete homotopy $\{\mathcal{F}_i\}|_{i=0}^d$ with $\mathcal{F}_0 = \mathcal{F}_I$ and $\mathcal{F}_d = \mathcal{F}_F$.
\end{defin}

This clearly reduces to Kendall's $d_K$ when the graded poset is the power set of some set partially graded by inclusion, however this definition can be applied to more general graded posets.
This metric can be formulated as the standard metric on a graph, where we consider a graph whose vertices are maximal chains, and whose edges connect maximal chains differing at exactly one position.
This metric is clearly bounded from below by $0$, but more sharply, for given $\mathcal{F}_I$ and $\mathcal{F}_F$, this metric is bounded from below by the number of locations where $\mathcal{F}_I$ differs from $\mathcal{F}_F$.

\begin{exmp}[Complete Construction for $K_4$]
\begin{figure}[h!]
    \centering
    \includegraphics[width = .25 \textwidth]{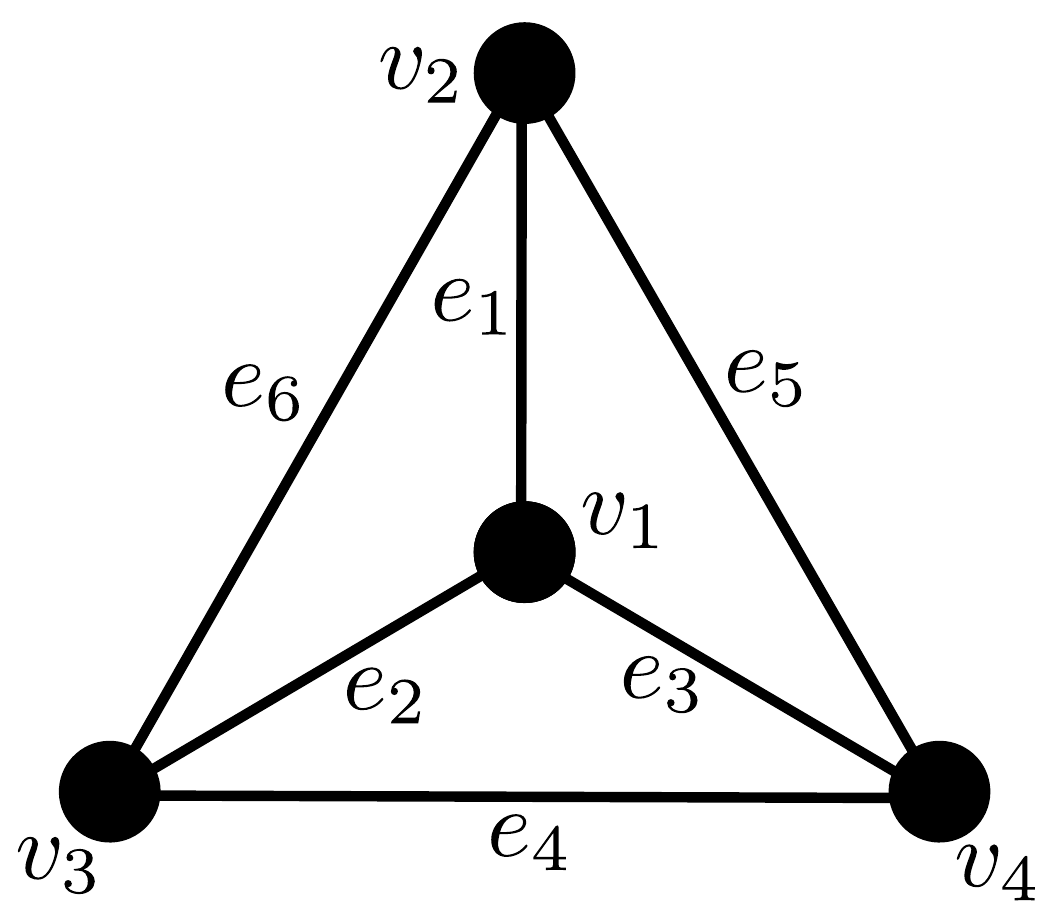}
    \caption{The labelled complete graph $K_4$.}
    \label{fig:K4}
\end{figure}
In order to fully convey the entire analysis pipeline, we examine the smallest complete graph that generates a nontrivial filtration space after quotienting, i.e. the smallest complete graph generating a poset with more than one maximal chain after quotienting by homotopy equivalence: $K_4$.

We begin with the complete graph on $4$ vertices, with vertices and edges labelled as in Figure \ref{fig:K4}.
Suppose we have two edge filtrations $\mathcal{F}_1$ and $\mathcal{F}_2$, given by the respective sequences $\Xi_1 = \{e_1,e_2,e_3,e_4,e_5,e_6\}$ and $\Xi_2 = \{e_6,e_5,e_4,e_3,e_2,e_1\}$.
We obtain the spanning subgraph poset depicted in Figure \ref{fig:PK4}, where the filtration $\mathcal{F}_1$ is coloured red, and $\mathcal{F}_2$ is coloured blue.

\begin{figure}[h!]
    \centering
    \includegraphics[width = .6 \textwidth]{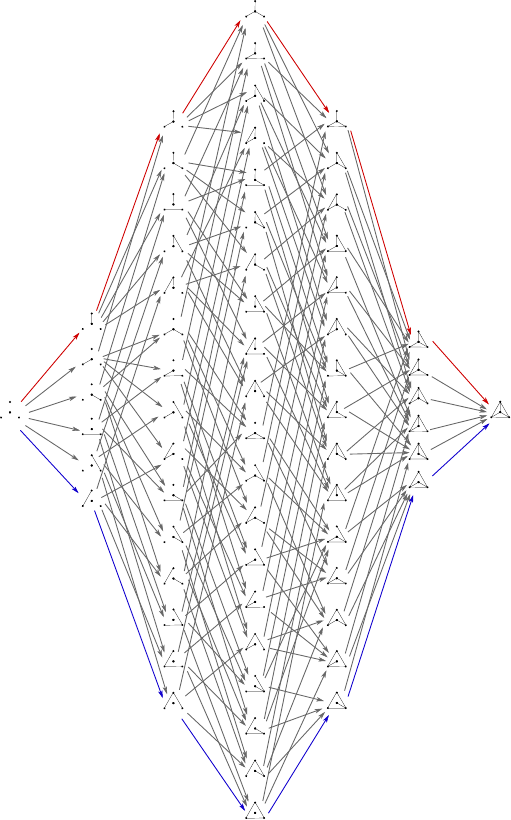}
    \caption{A depiction of $SS\left(K_4\right)$, partially ordered by subgraph inclusion, with filtration $\mathcal{F}_1$ in red and $\mathcal{F}_2$ in blue.}
    \label{fig:PK4}
\end{figure}

If we wanted to compare these two filtrations, we can use Kendall's $d_K$, since $d_K\left(\Xi_1,\Xi_2\right)=15$, and as shown by, these paths in Figure \ref{fig:PK4}, these two chains are as far apart as possible when measured according to Kendall's $d_K$.
This diagram, however is quite large, even for the small graph $K_4$, so we seek to quotient it by homotopy equivalence.
Applying Algorithm \ref{alg:polyposet}, we obtain the poset depicted in Figure \ref{fig:PK4quotient}.

\begin{figure}[h!]
    \centering
    \includegraphics[width = .6\textwidth]{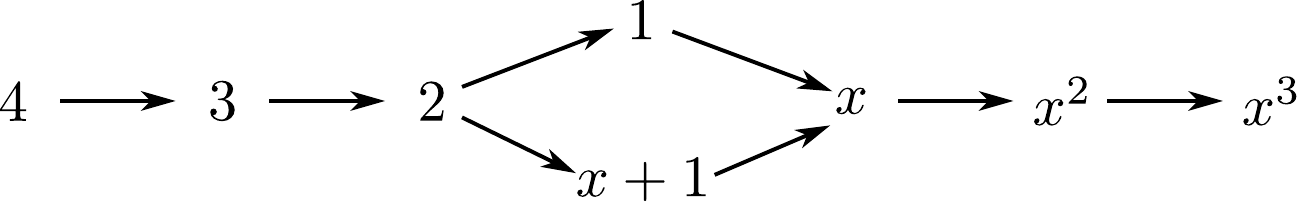}
    \caption{$\mathcal{H}\left(4\right)$ where each equivalence class is represented by its homotopy polynomial.}
    \label{fig:PK4quotient}
\end{figure}
We see clearly that the quotiented poset is significantly smaller than the unquotiented poset. 
Examining the two filtrations, we obtain the two sequences of homotopy polynomials $\mathcal{F}_1/\sim \,=\, \{4,3,2,1,x,x^2,x^3\}$ and $\mathcal{F}_2/\sim \,=\, \{4,3,2,x+1,x,x^2,x^3\}$.
These two chains are adjacent (differing only at position 3), hence they have discrete homotopy distance $1$, and are precisely the $2$ distinct maximal chains in Figure \ref{fig:PK4quotient}.
\end{exmp}

\section{Sufficient Conditions for Bounded $d_H$}\label{sec:meshes}
For the unquotiented poset, we clearly have a finite discrete homotopy distance between arbitrary maximal chains, since the symmetric group on the edge set is a finite group generated by adjacent transpositions.
This means discrete homotopy distances correspond to $d_K$, taking its maximum value of $\frac{M\left(M-1\right)}{2}$ for a total order inversion.
Therefore, the images of any two actual chains under quotienting will also have a finite discrete homotopy distance between them.
We can take the discrete homotopy connecting them in the unquotiented poset, and map that sequence under the quotient map to obtain a sequence of chains in the quotiented poset.
This sequence is not necessarily a discrete homotopy in the quotiented poset, since chains that initially differ may get mapped to the same quotiented chain, but the number of differences in successive chains cannot increase.
Therefore, successive elements in the quotiented sequence of chains will differ in at most $1$ position.
Removing duplicated chains will yield a shorter discrete homotopy in the quotiented space.
Therefore, any two actual chains in $\SS\left(K_N\right)$ will have a finite discrete homotopy connecting their images in $\mathcal{H}\left(N\right)$, bounded from above by $d_K$ applied to the unquotiented chains.
However, not all chains in a general quotiented poset are the images of chains in the corresponding unquotiented poset.
We seek to define a more general structure that is sufficient for guaranteeing the existence of finite discrete homotopies between arbitrary maximal chains, since we may be interested in comparing general chains in quotiented graded posets.

To prove that this metric is finite for a more general graded poset, we have to compute an upper bound for arbitrary maximal chains.
We introduce a condition which provides a sufficiency condition for the existence of arbitrary discrete homotopies.
\begin{defin}[left-covering condition]
Let $P$ be a finite graded poset such that for every $z_1$, $z_2\in P$ satisfying $z_0\lessdot z_1$ and $z_0 \lessdot z_2$ for some $z_0\in P$, there exists $z_3$ satisfying $z_1\lessdot z_3$ and $z_2 \lessdot z_3$. Such a $P$ satisfies the left-covering condition.
This condition is expressed in the following diagram, where the existence of $z_0$, $z_1$, $z_2$, and the solid arrows implies the existence of $z_3$ and the dashed arrows:
\[
\begin{tikzcd}[row sep=small]
&z_1\arrow[dr, dashed]&\\
z_0\arrow[ur] \arrow[dr] && z_3\\
&z_2\arrow[ur, dashed] &
\end{tikzcd}
\]
\end{defin}

The right-covering condition is obtained by reversing the order relationship in the above definition, and is similarly expressed by the following diagram:
\[
\begin{tikzcd}[row sep=small]
&z_1\arrow[dr]&\\
z_3\arrow[ur, dashed] \arrow[dr, dashed] && z_0 \\
&z_2\arrow[ur] &
\end{tikzcd}
\]

We have the following useful result:
\begin{theorem}
If a poset satisfies the left-covering condition, and has a unique lowest element, then arbitrary maximal chains can be connected by a finite discrete homotopy.
\end{theorem}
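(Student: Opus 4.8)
The plan is to induct on the "distance from the bottom" of the position at which two maximal chains first diverge, and use the left-covering condition to push that divergence point upward one rank at a time until the two chains agree everywhere. Concretely, let $\mathcal{F}_I$ and $\mathcal{F}_F$ be two maximal chains. Since the poset has a unique lowest element $\hat{0}$, both chains start at $\hat{0}$ (rank $0$). Let $k$ be the smallest rank at which $\mathcal{F}_I$ and $\mathcal{F}_F$ carry different elements; so they share elements $z_0 := \mathcal{F}_I(k-1) = \mathcal{F}_F(k-1)$ at rank $k-1$, and $z_0 \lessdot z_1 := \mathcal{F}_I(k)$ and $z_0 \lessdot z_2 := \mathcal{F}_F(k)$. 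The left-covering condition supplies $z_3$ with $z_1 \lessdot z_3$ and $z_2 \lessdot z_3$.

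The key step is then to build a finite discrete homotopy from $\mathcal{F}_I$ to some chain $\mathcal{F}'$ that agrees with $\mathcal{F}_F$ on ranks $0,\dots,k$. First I would produce, using the unique-minimum hypothesis together with left-covering applied repeatedly, a "local modification" of $\mathcal{F}_I$: replace the tail $z_1 \lessdot \mathcal{F}_I(k+1) \lessdot \cdots$ by a tail that passes through $z_3$ at rank $k+1$. This requires a sub-lemma: any two maximal chains through a common element $z_1$ (here $\mathcal{F}_I$ restricted above rank $k$, and some maximal chain through $z_1 \lessdot z_3$) can be connected by a finite discrete homotopy — which is just the statement of the theorem applied to the interval $[z_1,\hat{1}]$, or more carefully an induction on the rank of $z_1$, since the principal filter above $z_1$ inherits the left-covering condition and the common minimum $z_1$. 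Having routed $\mathcal{F}_I$ through $z_0 \lessdot z_1 \lessdot z_3$, one adjacent move swaps $z_1$ for $z_2$ at rank $k$ (legal since $z_0 \lessdot z_2 \lessdot z_3$), giving a chain through $z_0 \lessdot z_2 \lessdot z_3$. Now this chain and $\mathcal{F}_F$ agree up to rank $k$; by the same sub-lemma applied to the filter above $z_2$, connect their tails by a finite discrete homotopy. Splicing all these pieces yields a finite discrete homotopy $\mathcal{F}_I \rightsquigarrow \mathcal{F}_F$.

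To make the induction well-founded I would set up a double induction: the outer induction is downward on the total rank $M$ of the poset (equivalently, an interval $[z,\hat{1}]$ of small height has few maximal chains and the statement is trivial when the height is $0$ or $1$), and the inner induction is on $M - k$, the codimension of the first divergence. Each application of left-covering strictly increases $k$ or reduces to a strictly shorter interval, so the process terminates; finiteness of the poset guarantees each intermediate discrete homotopy has finite length, and only finitely many are spliced.

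The main obstacle I anticipate is the sub-lemma that two maximal chains sharing a bottom element $z_1$ in the filter $[z_1,\hat{1}]$ are connected by a finite discrete homotopy — this is essentially the theorem again but on a smaller poset, so care is needed to organize the induction so it is not circular. The clean way is to phrase the whole argument as induction on the height of the poset: for height $\le 1$ it is immediate; for the inductive step, use left-covering at the bottom to move the divergence up to rank $\ge 1$, then apply the inductive hypothesis to the filter above a rank-$1$ element (which has strictly smaller height and still satisfies left-covering and has a unique minimum). A secondary point to check is that left-covering is genuinely needed only "at the bottom" of each filter we pass to, and that the filters we use indeed have unique minima — which holds because we always pass to a principal filter $[z,\hat{1}]$, whose minimum is $z$.
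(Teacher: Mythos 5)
Your proof uses the same core mechanism as the paper's: at the first rank where the chains diverge, invoke the left-covering condition to obtain a common cover $z_3$ of the two branches, reroute both chains through $z_3$, and recurse; the paper organizes the recursion as a single induction on the number of tail positions in which the chains may still differ, constructing the common tail of the intermediate chains $\mathcal{F}_{M_1},\mathcal{F}_{M_2}$ explicitly by iterated left-covering, whereas you induct on poset height and pass to principal filters, which is a cosmetically different bookkeeping of the same argument. One small slip worth fixing: the theorem does not posit a unique maximum $\hat{1}$, so you should speak of the principal filter $\{x : x \ge z_1\}$ (which has unique minimum $z_1$ and inherits the left-covering condition) rather than of the interval $[z_1,\hat{1}]$.
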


\begin{proof}
Let $\mathcal{F}_I$ and $\mathcal{F}_F$ be two arbitrary maximal chains in such a poset that we want to connect through a discrete homotopy. 
We constructively prove a discrete homotopy between $\mathcal{F}_I$ and $\mathcal{F}_F$ exists by induction by demonstrating:
\begin{itemize}
    \item (base case): Chains in this poset differing in only the last position can be connected by a discrete homotopy.
    \item (inductive step): If arbitrary chains in this poset differing in the last $k$ positions can be connected by a discrete homotopy, then arbitrary chains in this poset only differing in the last $k+1$ positions can be connected by a discrete homotopy.\\
\end{itemize}

First proving the base case.
By definition, if two chains, $\mathcal{F}_1$ and $\mathcal{F}_2$, only differ in the last position, they are adjacent chains.
They are connected by the discrete homotopy $\{\mathcal{F}_1,\mathcal{F}_2\}$.\\

Next, proving the inductive step.
Let $\mathcal{F}_1$ and $\mathcal{F}_2$ be maximal chains differing only in the last $k+1$ positions.
Let $p$ be the fist position where $\mathcal{F}_1$ differs with $\mathcal{F}_2$, $z_0$ be the common element in both chains at position $p-1$, and $z_1$ and $z_2$ be the differing elements at position $p$ in $\mathcal{F}_1$ and $\mathcal{F}_2$ respectively. 
By hypothesis, this poset possesses a unique minimal element, and $\mathcal{F}_1$ and $\mathcal{F}_2$ are maximal.  
Therefore $p > 0$, because at $p=0$, both chains must take the unique minimal element, i.e. $z_0$ exists.
Therefore, we have $z_0 \lessdot z_1$ and $z_0 \lessdot z_2$.
By the left covering condition, there therefore exists a common element $z_3$ at position $p+1$ such that $z_1 \lessdot z_3$ and $z_2 \lessdot z_3$.

We seek to construct two adjacent maximal chains \[\mathcal{F}_{M_1} = \{...,z_0,z_1,z_3,...\}\] and \[\mathcal{F}_{M_2} = \{...,z_0,z_2,z_3,...\}\] only differing at position $p$, with $\mathcal{F}_{M_1}$ identical to $\mathcal{F}_1$ and $\mathcal{F}_{M_2}$ identical to $\mathcal{F}_2$ up to and including position $p$.
Such chains, if they exist, only differ with $\mathcal{F}_1$ and $\mathcal{F}_2$ in the last $k$ positions, and therefore by hypothesis there exists a discrete homotopies  $\{\mathcal{F}_1,...,\mathcal{F}_{M_1}\}$ and $\{\mathcal{F}_{M_2},...,\mathcal{F}_2\}$.
However, because $\mathcal{F}_{M_1}$ and $\mathcal{F}_{M_2}$ are adjacent to each other, these discrete homotopies can be concatenated to form a discrete homotopy between $\mathcal{F}_1$ and $\mathcal{F}_2$: $\{\mathcal{F}_1,...,\mathcal{F}_{M_1},\mathcal{F}_{M_2},...,\mathcal{F}_2\}$.
All that remains is to show that $\mathcal{F}_{M_1}$ and $\mathcal{F}_{M_2}$ exist.\\

Because $\mathcal{F}_{M_1}$ and $\mathcal{F}_{M_2}$ agree with both $\mathcal{F}_{1}$ and $\mathcal{F}_{2}$ before position $p$, take either the value $z_1$ or $z_2$ at position $p$, take the value $z_3$ at position $p+1$, the values of these chains are already defined up to position $p+1$.
Further, these chains must agree with each other at all positions after $p+1$, so we must find a sequence of elements, each covering the previous, beginning at $z_3$ extending $\mathcal{F}_{M_1}$ and $\mathcal{F}_{M_2}$ into maximal chains.

If $z_3\in \mathcal{F}_1$, then we can take this continuation to simply be the remainder of $\mathcal{F}_1$.
If not, then both $z_3$ and the element in $\mathcal{F}_1$ at position $p+1$ cover $z_1$, which by the left covering condition implies the existence of another element, $z_4$ covering the element in $\mathcal{F}_1$ at position $p+1$, such that $z_3\lessdot z_4$.
We take this $z_4$ to be the next element of $\mathcal{F}_{M_1}$ and $\mathcal{F}_{M_2}$.
If these chains are not yet maximal, we repeat this argument to obtain the next element.

Namely, if $z_4\in \mathcal{F}_1$, we take the remainder of these chains to be identical with $\mathcal{F}_1$, and if not, we know there exists an element $z_5$ that covers both $z_4$ and the element in $\mathcal{F}_1$ at position $p+2$, because both of these elements in turn cover the element in $\mathcal{F}_1$ at position $p+1$.

Continuing by induction, suppose we have extended $\mathcal{F}_{M_1}$ and $\mathcal{F}_{M_2}$ up to the element $z_{s+2}$ that covers the element in $\mathcal{F}_1$ at position $p+s-1$.
If $z_{s+2}\in \mathcal{F}_1$, we take the remainder of these chains to be identical with $\mathcal{F}_1$, and if not, we know there exists an element $z_{s+3}$ that covers both $z_{s+2}$ and the element in $\mathcal{F}_1$ at position $p+s$, because both of these elements in turn cover the element in $\mathcal{F}_1$ at position $p+s-1$.
Because $\mathcal{F}_1$ is maximal, the left covering condition implies that this can be repeated until $\mathcal{F}_{M_1}$ and $\mathcal{F}_{M_2}$ are maximal, finishing the construction.

\end{proof}
Dually, arbitrary maximal chains in posets satisfying the right-covering condition with unique highest elements can be connected with finite discrete homotopies, and the proof is obtained by dualising the above proof.
The quotiented posets we are working with satisfy both the left and right-covering conditions with unique lowest and highest elements, hence both of these constructions can be used to establish upper bounds on the discrete homotopy metric between two arbitrary maximal chains we encounter.

This proof does not necessarily construct the optimal discrete homotopy, though it does bound its length from above.
The exact computation of $d_H\left(\mathcal{F}_1,\mathcal{F}_2\right)$ could be accomplished through a bidirectional search beginning from $\mathcal{F}_1$ and $\mathcal{F}_2$, and moving alternatively through adjacent chains until a discrete homotopy connecting $\mathcal{F}_1$ and $\mathcal{F}_2$ is found. 
By construction this discrete homotopy will have minimal length, however this bidirectional search rapidly becomes computationally intractable.
Therefore, we seek to establish an easily computed lower bound; in many cases the lower bound we will construct in the next section coincides with the length of the discrete homotopy described above, proving optimality for these examples while sidestepping the cost of directly computing $d_H$.

In the many query setting, i.e. when we want to compute many distances between maximal chains in the same $\mathcal{H}\left(N\right)$, it may be more efficient to construct all maximal chains, taking this set as the vertex set of a graph.
The edges of this graph can then be taken to be the set of pairs of adjacent chains.
This graph must only be constructed once; the computation of the discrete homotopy metric then becomes the problem of finding a shortest path through this unweighted graph.

\section{Comparison with Homology}\label{sec:homotopy_vs_homology}
We have considered quotienting the poset of spanning subgraphs by homotopy equivalence, and defining and computing a discrete metric on the space of chains in this quotiented poset.
This construction can similarly be repeated, but using homology as the equivalence relation for quotienting rather than homotopy equivalence.

Because homology groups can be determined from graph homotopy polynomials, this homology poset can be obtained by quotienting the directly constructed homotopy poset, without having to compute the spanning subgraph poset.
This means that chains and discrete homotopies in the poset quotiented by homotopy equivalence can be directly converted to chains and discrete homotopies in the poset quotiented by homology by computing the homology groups of each element in the chains, and deleting duplicated chains in the discrete homotopy.
Further, because graphs have only two nontrivial homology groups, the resulting homology poset can be represented by a subgraph of the square lattice graph, where the vertex coordinates $(b_1,b_0)$ are the first and zeroth Betti numbers.

Because the edges in this lattice graph are directed and either increment $b_1$ or decrement $b_0$ (i.e. pointing from $(b_1,b_0)$ to either $(b_1+1,b_0)$ or $(b_1,b_0-1)$), if a chain in this poset possesses an adjacent chain at a specified position, this adjacent chain is unique, and the four edges not held in common form a square. This can be seen by considering all four possible sequences of three homology groups in this poset, depicted in figure \ref{fig:homology_chains}.

\begin{figure}[h]
     \centering
     \begin{subfigure}[b]{0.22\textwidth}
         \centering
         \includegraphics[width=\textwidth]{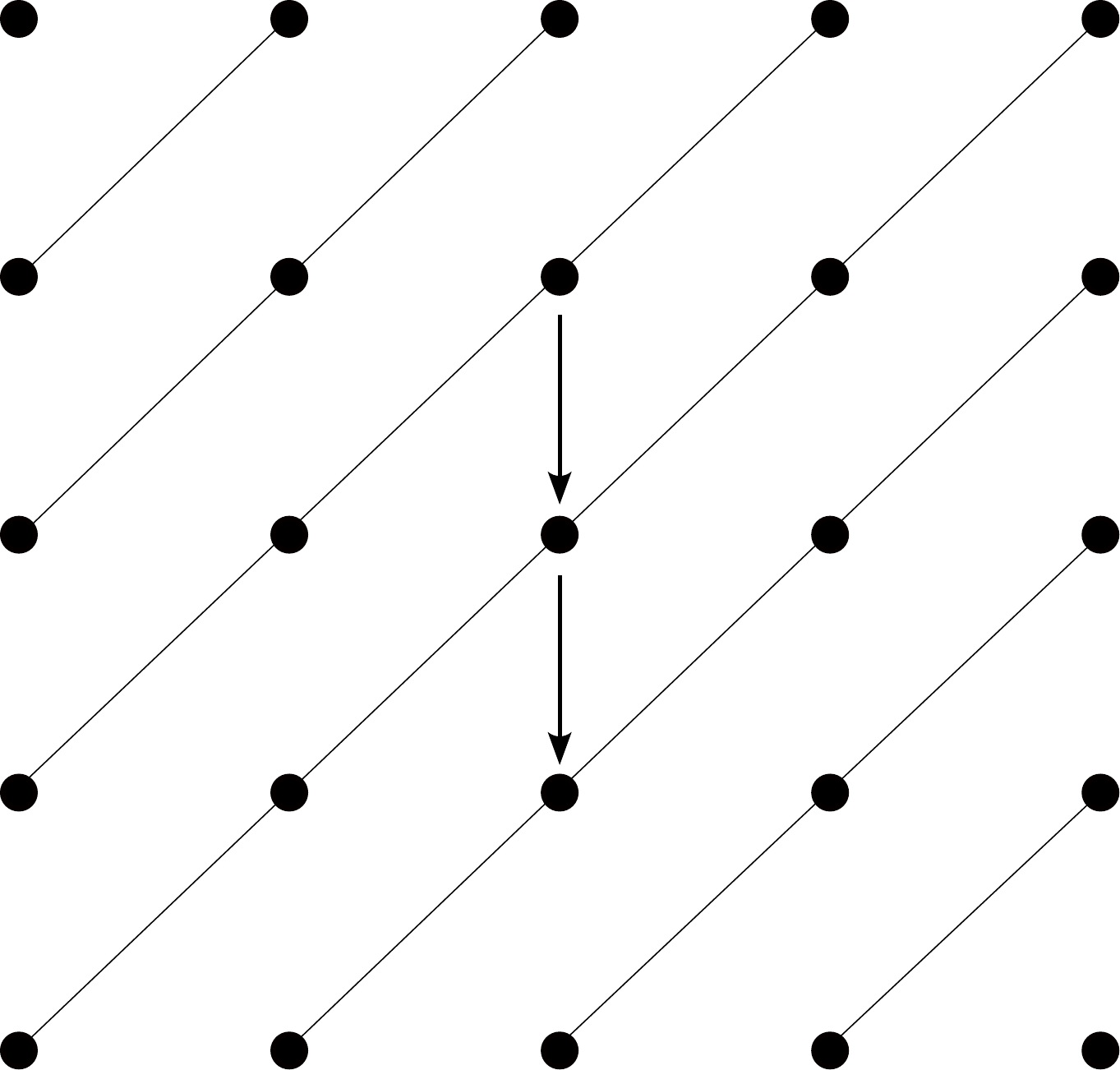}
         \caption{Twice decrementing $b_0$}
         \label{fig:2dec_b0}
     \end{subfigure}
     \hfill
     \begin{subfigure}[b]{0.22\textwidth}
         \centering
         \includegraphics[width=\textwidth]{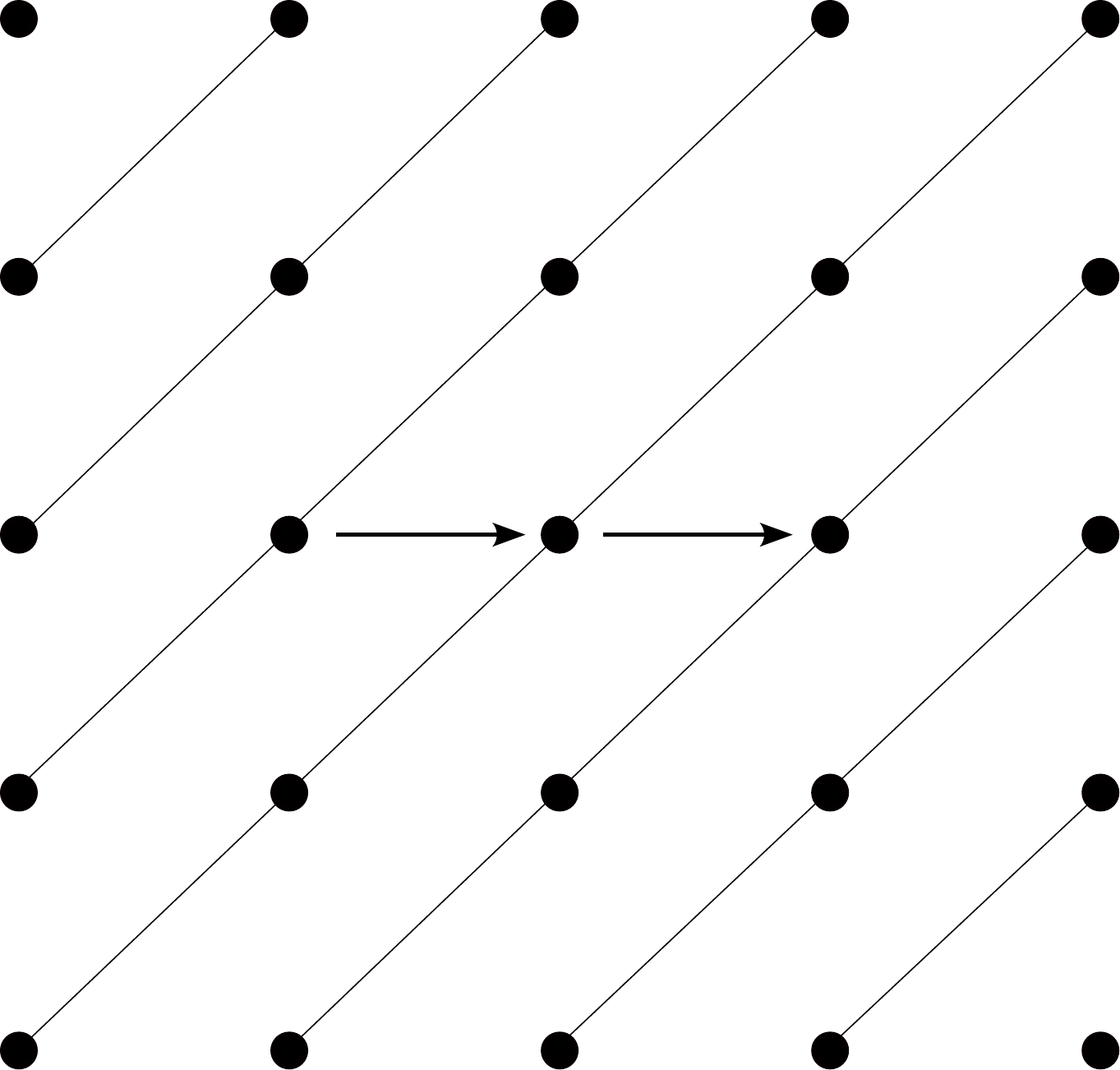}
         \caption{Twice incrementing $b_1$}
         \label{fig:2inc_b1}
     \end{subfigure}
     \hfill
     \begin{subfigure}[b]{0.22\textwidth}
         \centering
         \includegraphics[width=\textwidth]{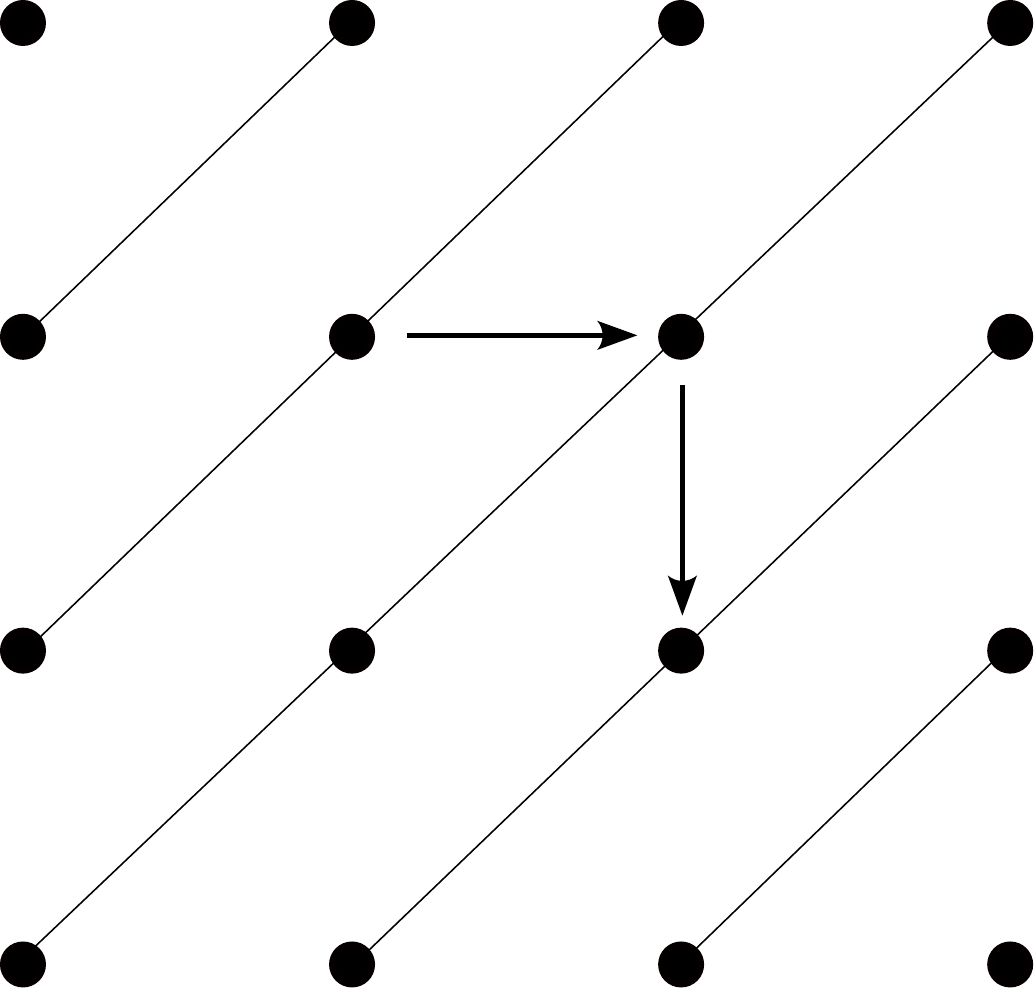}
         \caption{incrementing $b_1$ then decrementing $b_0$}
         \label{fig:inc_b1_dec_b0}
     \end{subfigure}
     \hfill
      \begin{subfigure}[b]{0.22\textwidth}
         \centering
         \includegraphics[width=\textwidth]{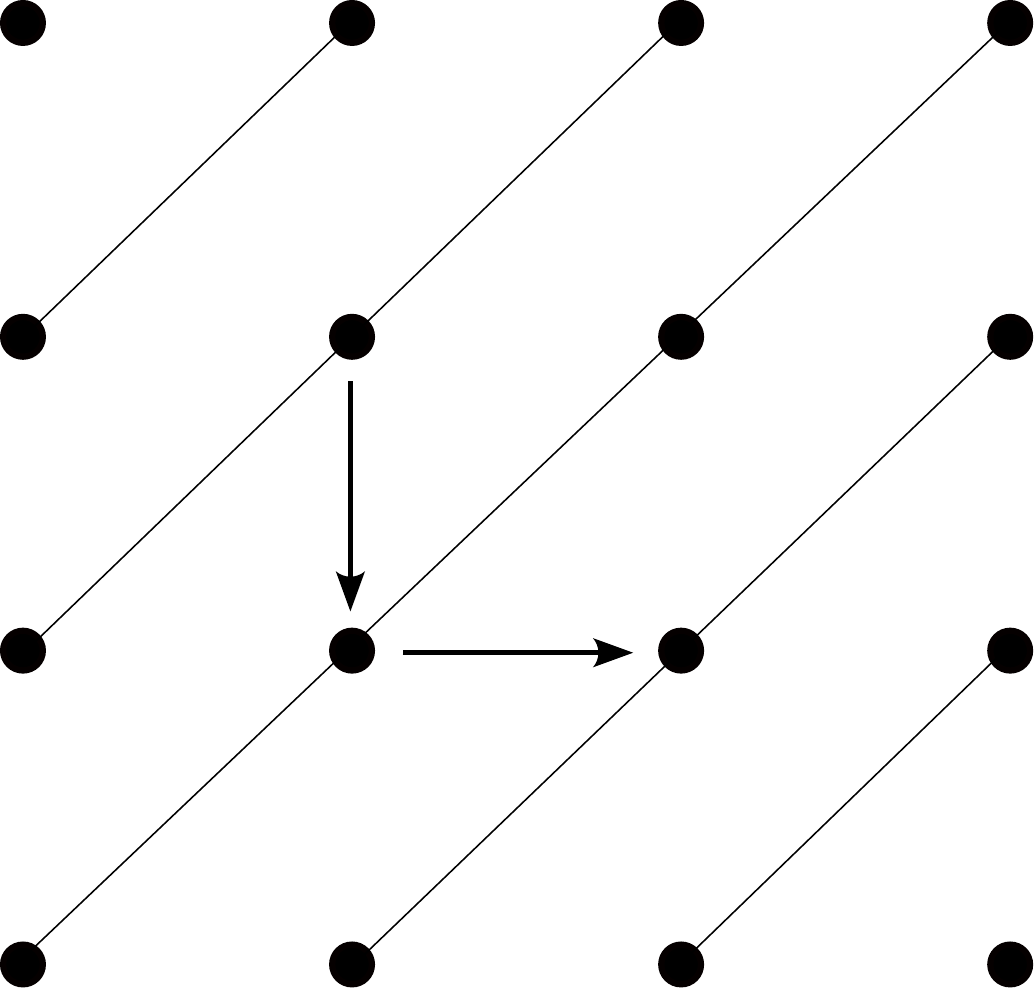}
         \caption{decrementing $b_0$ then incrementing $b_1$}
         \label{fig:dec_b0_inc_b1}
     \end{subfigure}
        \caption{All four possible three graph chains up to homology. The first two do not possess any adjacent chains at the central position, while the latter two are the unique chains adjacent to each other at the central position. Diagonal lines indicate the poset's grading.}
        \label{fig:homology_chains}
\end{figure}

A direct corollary of this is that the quotiented poset can be represented by a planar graph, and the discrete homotopy metric between chains in this poset corresponds to computing the area between the two boundary chains in the plane containing this lattice graph, since the area enclosed between two adjacent chains is exactly $1$.
This provides a qualitative interpretation of the discrete homotopy metric as a surface area bounded between curves, though to apply this interpretation to more general posets the embedding surface may require nontrivial topology.

Further, because discrete homotopies in the homotopy poset can be mapped to discrete homotopies in the homology poset, the discrete homotopy metric in the homology poset bounds the discrete homotopy metric in the homotopy poset from below, giving us an easily computed sharper lower bound for $d_H$.
\section{Application to Alzheimer's disease}\label{sec:application}
Alzheimer's disease (AD) is the most common form of dementia and is the fifth leading cause of death, for adults 65 and over, worldwide.  Alzheimer's disease is thought to be due to specific proteins in the brain, namely the proteins  \textit{Amyloid-$\beta$ (A$\beta$)} and \textit{Tau protein ($\tau$P)}, switching from a healthy form to a toxic one.  Toxic versions of these proteins have two main effects: toxic A$\beta$ and $\tau$P can cause their healthy counterparts to become toxic; and toxic A$\beta$ and $\tau$P can `stick together' to form large protein structures that deposit within in the brains of deceased AD patients.  Toxic proteins, from single proteins to smaller aggregates, are mobile within the brain \cite{braak1991neuropathological,jucker2013,cho2016vivo}.  Toxic proteins use the connections between different brain areas, called axon fibers, to move from one brain region to another, spread their toxic state and ultimately form the aggregated protein deposits that are the hallmarks of AD.  Moreover, toxic $\tau$P and its aggregates are thought to play a role in a number of brain malfunctions in AD, such as reducing the brain's energy creation capacity and increasing brain inflammation, and serve as a key factor causing primary brain cells, called neurons, to die as part of the disease.

Our natural abilities, such language and memory, are associated with specific brain regions and understanding the differences in how toxic $\tau P$ moves around in the brain can help us understand the specific challenges, and needs for care, faced by our elderly.  Typical AD, experienced by the largest number of AD patients, is characterised by aggregates of toxic $\tau P$ depositing in the brain in a relatively orderly fashion through six specific brain areas, called the \textit{Braak regions} \cite{braak1991neuropathological,kuhl2021pet}.  However, not all patients exhibit this type of toxic $\tau P$ progression.  Other forms of AD, called \textit{AD subtypes} \cite{murray2011, vogel2021}, are characterised by a different temporal, regional patterns of toxic $\tau P$ aggregates depositing within the brain.  Due to the differences in toxic $\tau P$ progression, each subtype differs in the way in which AD manifests as cognitive decline.  

A simple, but fundamental, question we can ask is: might AD subtypes display \textit{topological} differences?  More specifically, does the pattern of toxic $\tau P$ progression, how toxic $\tau P$ moves from one brain region to another, using the connective brain axonal fibers, exhibit different topological features from one AD subtype to the next?  A recent, large-scale patient imaging data study \cite{vogel2021} suggests that the AD subtype differences in the patterns of toxic $\tau P$ progression can be explained by differences in the origin of the toxic $\tau P$ infection.  The anatomical of the brain where toxic $\tau P$ originates in AD is typically referred to as the $\tau P$ \textit{seeding location}.  In the sections that follow, we apply a contemporary model of toxic $\tau P$ spreading through the brain.  As the model progresses, we track a marker of damage done to the brain by the toxic $\tau P$ which, in turn, we use to build a filtration from which we produce a maximal chain in $\mathcal{H}(N)$.  We will do this for each AD subtype and compare the differences between the corresponding chains in $\mathcal{H}(N)$.

 \subsection{Generating filtrations from graph neurodegeneration
 modelling}\label{subsec:application:filtration-gen}
In AD, toxic $\tau P$ moves from one anatomical brain region to the next and it does this using the brain's internal highway of \textit{axon bundles} connecting anatomically distinguished brain areas.  Toxic $\tau P$ traverses the brain's axon bundles from region to region and recruits the brain's existing population of healthy $\tau P$ to a toxic form.  This process has been mathematically modelled, by several authors, as a dynamical system posed on a special graph that represents the brain.  In this section, we describe the brain graph, the mathematical model of $\tau P$ spreading and the subsequent generation of a filtration, from the $\tau P$ spreading pattern, for each of four AD subtypes.

\subsubsection{The structural connectome}

Neurons, the primary brain cell that we use for processing electrical impulses, project a long, thin nerve fibre, called an axon, that they use to exchange electrical signals with other neurons.  In the brain, axon fibres from many neurons can join together to form a bundle, similar to how a rope is a bundle made up of numerous individual fibres.  The axon bundles stretch across the brain and provide a highway for the exchange of electrical impulses from one anatomical region of the brain to another.  This brain highway network of axonal bundles heavily biases the flow of watery fluid within the brain.  Two sets of medical images can be gathered from a patient: the first set of images (T1/T2) can be used to scan the brain's anatomical structure; the second set of images (DTI) gives information about the way water flows in the brain.  

Powerful, open-source scientific software can be used to transform patient images into a digital representation of a network graph that can be used to mathematically model the movement of toxic $\tau P$ in AD.  Software, called FreeSurfer \cite{dale1999,desikan2006}, can use the T1/T2 images to group together, or \textit{parcellate}, a patient's brain scan into its respective anatomical regions.  The set of DTI images can be processed by additional software packages, such as FSL \cite{jenkinson2012} or MRtrix \cite{tournier2019},  to construct approximations to the connectivity of the brain's axon bundle network.  Once a patient's brain regions are labelled and the connectivity between regions is known, this information defines a graph, $G=(V,E)$,  whose set of vertices, $V$, correspond to the anatomically labeled brain regions and whose set of edges, $E$, correspond to the axonal bundles connecting these regions.  Such a graph is referred to as a \textit{structural brain connectome}.

In this manuscript, we will use freely available structural brain connectomes based on patient data from the Human Connectome Project \cite{daducci2012,kerepesi2017}.  These graphs come with additional information about each edge (axon bundle) connecting two brain regions.  Using this information, we can form a canonical \textit{weighted adjacency matrix} for the graph which has entries defined by
\begin{equation}\label{eqn:weighted-adjacency-matrix}
\omega_{ij}=\frac{n_{ij}}{\ell_{ij}^2},\qquad i,j=1,\ldots,N,
\end{equation}
where $n_{ij}$ is the number of axon fibres in the (edge) axon bundle and $\ell_{ij}$ is the total length of the bundle connecting (vertex) brain region $i$ to (vertex) brain region $j$.  

The basic structural connectome has $|V|=N=83$ and is shown in Fig.~\ref{fig:connectomes-33}.  However, due to its size, working with $\mathcal{H}(N)$ is computationally intractable without further research into more optimal computer algorithms.  To work around this issue, we will use a selection of nested subgraphs of Figure~\ref{fig:connectomes-33} that range from $N=4$ vertices to $N=18$ vertices and all reside within the left hemisphere.  As we will discuss in Sec.~\ref{subsec:math-model-and-filtration}, there are four brain regions (vertices) that will need to be included in each graph, these are: the left entorhinal cortex (vertex $\tilde{v}_1$), the middle temporal gyrus (vertex $\tilde{v}_2$), the fusiform gyrus (vertex $\tilde{v}_3$) and the inferior temporal gyrus (vertex $\tilde{v}_4$).  To proceed, we constructed five nested, left hemisphere subgraphs of the structural connectome graph $G$, shown in Fig.~\ref{fig:connectomes-33}, denoted by $G_4 \subset G_6 \subset G_8 \subset G_{12} \subset G_{15}\subset G_{18}$, containing $4$, $6$, $8$, $12$, $15$ and $18$ vertices, respectively.  The nested subgraphs were constructed by: starting with the vertex set $\left\{\tilde{v}_1,\tilde{v}_2,\tilde{v}_3,\tilde{v}_4\right\}$ corresponding to the anatomical regions discussed above; considering the neighbours of the initial vertices and selecting, for each vertex, those neighbours whose edge weights $w_{ij}$ lie in the top 5\% of all left hemisphere vertex neighbours and adding these vertices to the set; this process was repeated until a vertex threshold for the subgraph was reached, resulting in a set of vertices $V_k$; the subgraph $G_k$ 
is then $G_k = (V_k, E_k)$ where $V_k$ is the constructed vertex set and $E_k$ consists of all edges in the original connectome graph (Fig.~\ref{fig:connectomes-33}) between any two vertices in $V_k$.  Since we begin with the initial set of vertices $\left\{\tilde{v}_1,\tilde{v}_2,\tilde{v}_3,\tilde{v}_4\right\}$ each time, this method of subgraph construction assures that $G_i \subset G_j$ whenever $i \leq j$.  The first five of these nested subgraphs are shown in Fig.~\ref{fig:connectome-33-subgraphs}.

\begin{remark}
{{
Structural connectomes, as in Figure~\eqref{fig:connectomes-33}, arise from applying algorithmic techniques to patient diffusion magnetic resonance imaging data.  Several factors can influence the final result, including the type of scanner used to collect the data, the choice of parcellation for the connectome graph, the algorithm used to carry out the structural tractography on the resulting patient image and the choice of thresholding method.  As far as the authors are aware, there is currently no `gold standard' approach for constructing a human brain connectome.}}
\end{remark}

\begin{figure}[]
  \centering
  \includegraphics[width=.7\textwidth]{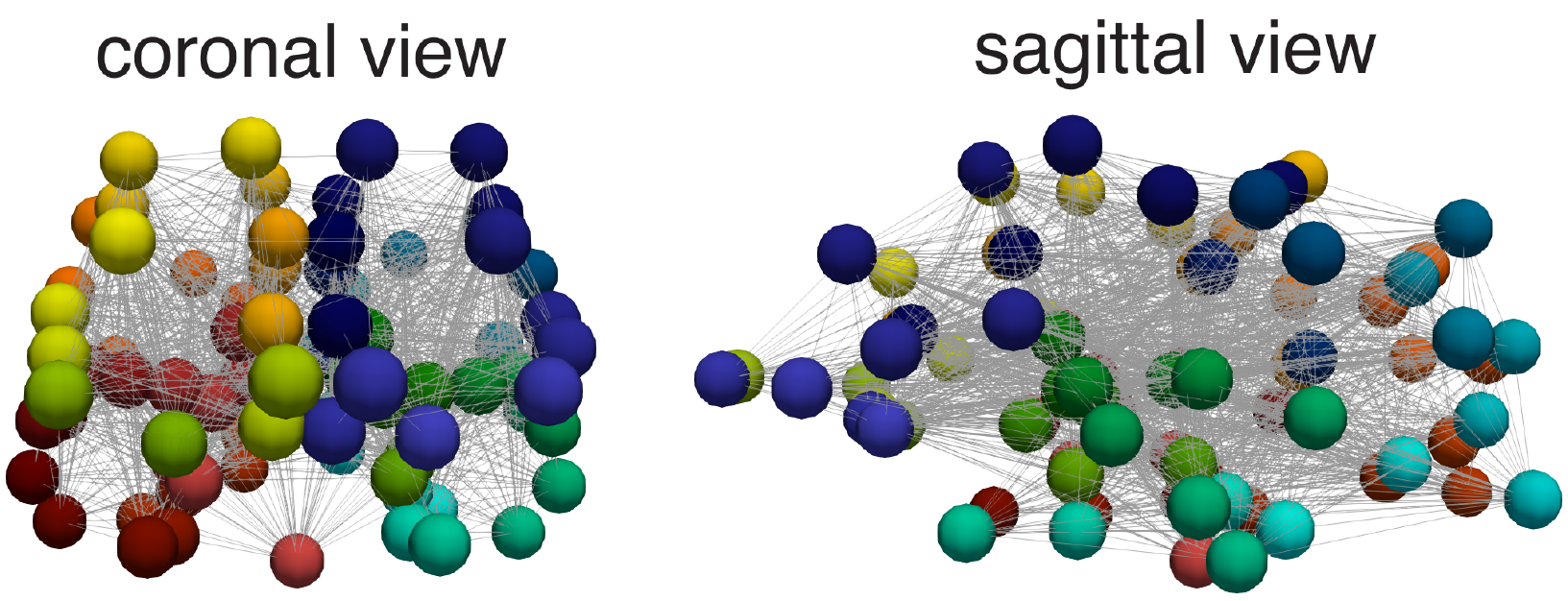}
    \centering
\caption{A structural connectome, with $N=83$, constructed from human magnetic resonance images.  Vertices represent cortical areas and edges represent bundles of axons connecting these cortical areas; individual vertices are colored by their classification into 83 distinct anatomical regions.}
\label{fig:connectomes-33}
\end{figure}

\begin{figure}
     \centering
     \begin{subfigure}[b]{0.2\textwidth}
         \centering
         \includegraphics[width=\textwidth]{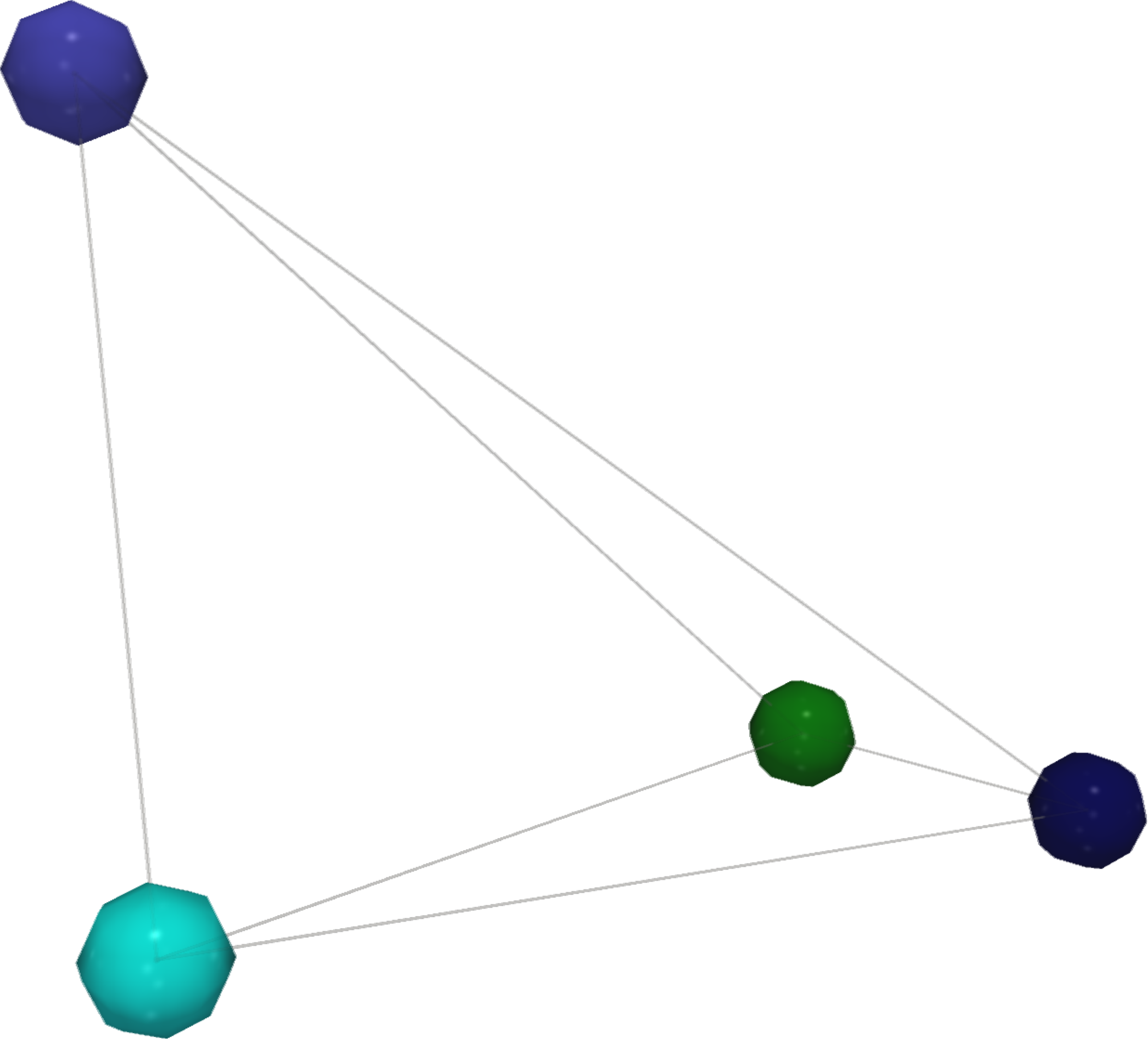}
         \caption{$N=4$}
         \label{fig:connectome-33-subgraphs:4}
     \end{subfigure}
     \hfill
     \begin{subfigure}[b]{0.2\textwidth}
         \centering
         \includegraphics[width=\textwidth]{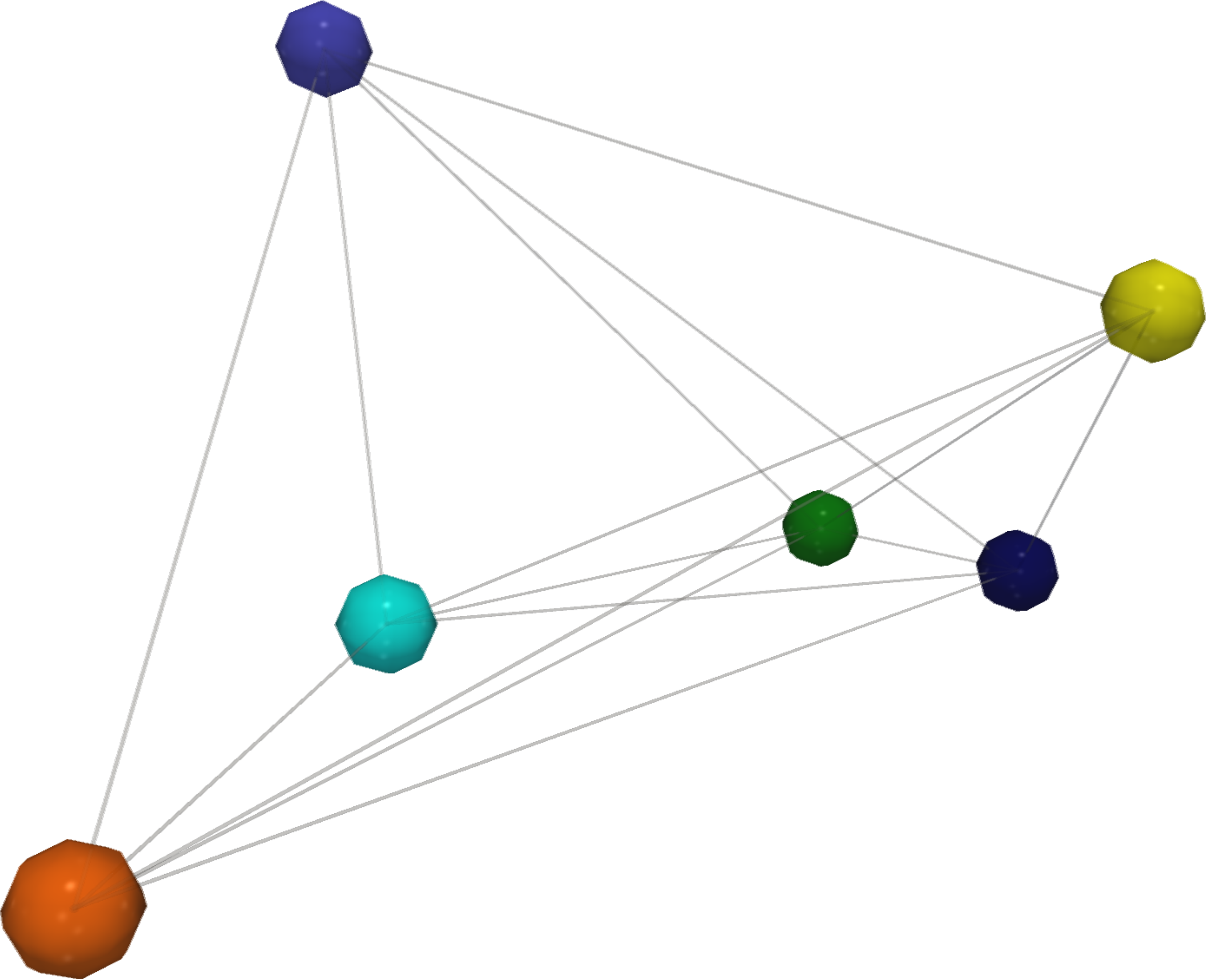}
         \caption{$N=6$}
         \label{fig:connectome-33-subgraphs:6}
     \end{subfigure}
     \centering
     \begin{subfigure}[b]{0.15\textwidth}
         \centering
         \includegraphics[width=\textwidth]{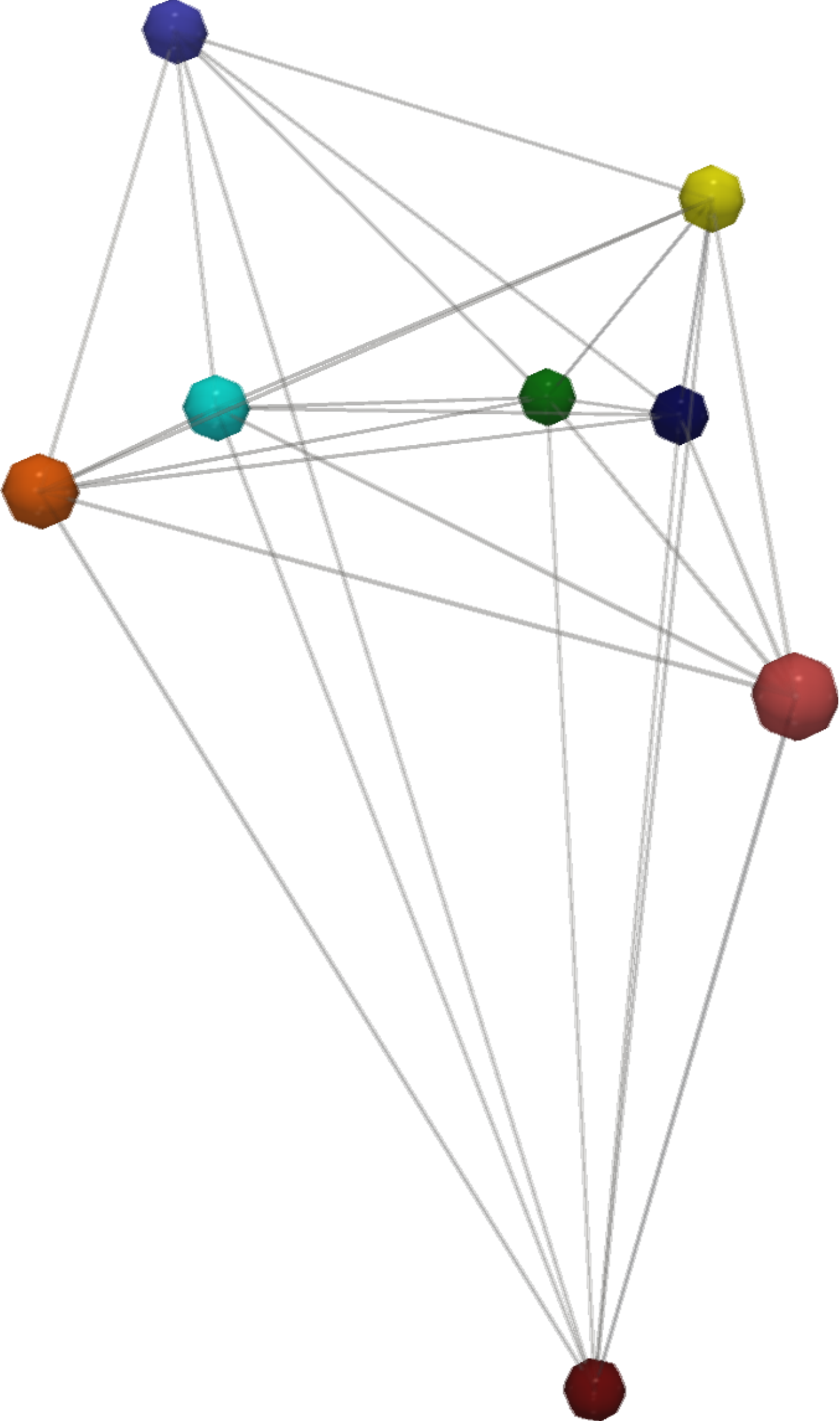}
         \caption{$N=8$}
         \label{fig:connectome-33-subgraphs:8}
     \end{subfigure}
     \hfill
     \begin{subfigure}[b]{0.15\textwidth}
         \centering
         \includegraphics[width=\textwidth]{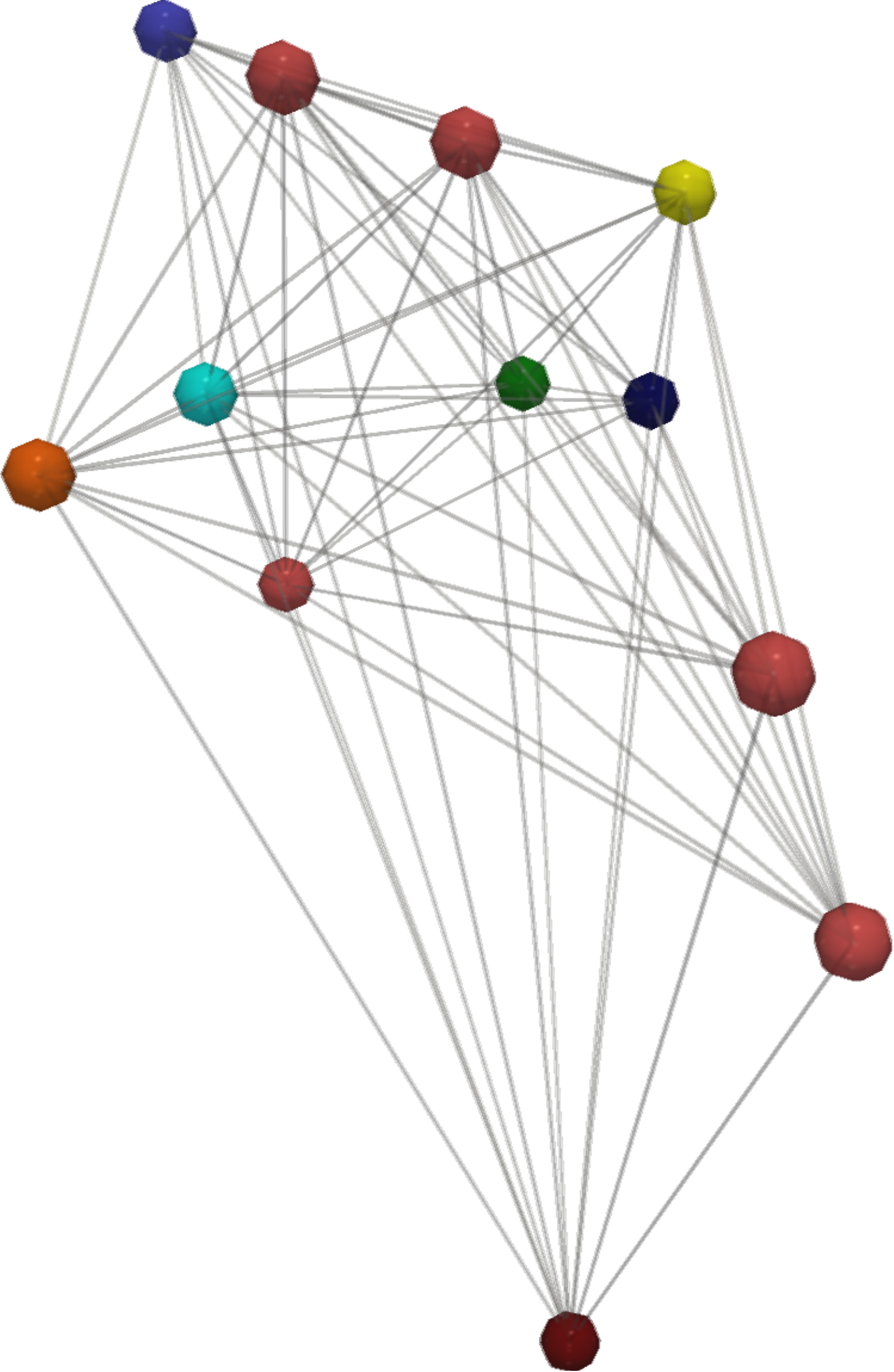}
         \caption{$N=12$}
         \label{fig:connectome-33-subgraphs:12}
     \end{subfigure}
    \hfill
     \begin{subfigure}[b]{0.15\textwidth}
         \centering
         \includegraphics[width=\textwidth]{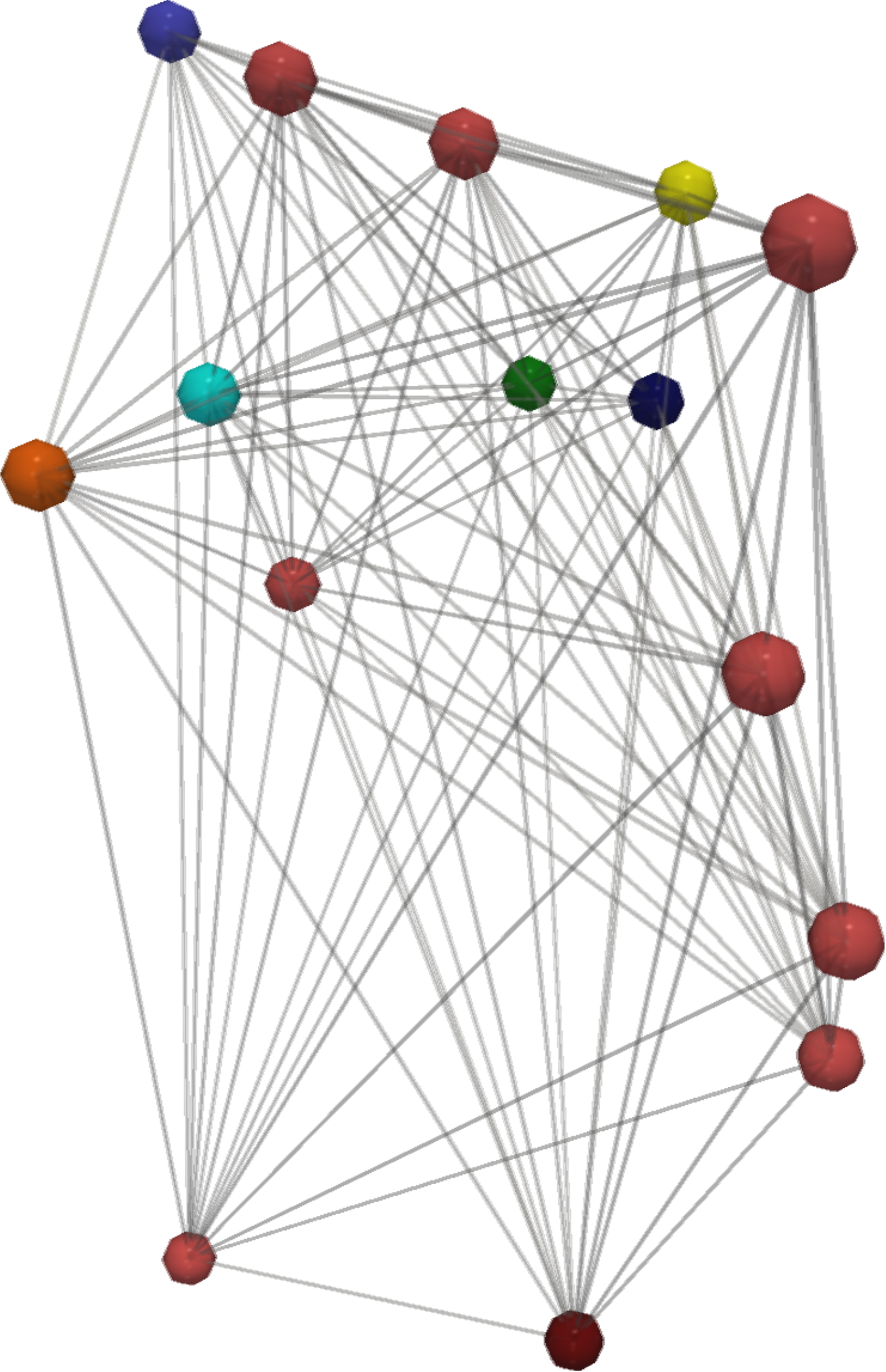}
         \caption{$N=15$}
         \label{fig:connectome-33-subgraphs:15}
     \end{subfigure}
        \caption{Nested brain region subgraphs extracted from the left hemisphere of the full connectome shown in  Figure~\ref{fig:connectomes-33}. The left entorhinal cortex and left middle temporal gyrus, respectively, correspond to the top left (royal blue) and bottom right (dark blue) vertices, respectively, shown in Figure~\ref{fig:connectome-33-subgraphs:4}.}
        \label{fig:connectome-33-subgraphs}
\end{figure}

 \subsubsection{A model of toxic $\tau P$ progression in AD and a filtration reflecting neurodegeneration}\label{subsec:math-model-and-filtration}

In this section, we introduce a network mathematical model of $\tau P$ progression in AD and discuss the construction of a filtration based on the neurodegenerative effects of toxic $\tau P$ on the brain.  As previously discussed, the general dynamics of toxic $\tau P$ in AD are: spreading from region to region using the brain's network of axon fibre bundles; and the inducing of the toxic state, in the presence of toxic $\tau P$, in otherwise healthy $\tau P$ \cite{jucker2013}.  The model introduced in this section also includes the neurodegenerative effects of the toxic $\tau P$ on the brain's axonal fibres.\\

\textbf{A parsimonious mathematical model of toxic $\tau P$ neurodegeneration in AD} The study of mathematical models on networks describing the propagation of toxic proteins within the brain in the context of dementia, like AD, is a mature field with over a decade of history \cite{sveva2,fornari2019,pandya2017brain,goriely2022,raj2015network,raj2012network,wang2017brain,weickenmeier2018multiphysics}.  High-dimensional models have been used to study the specifics toxic proteins may aggregation within the brain \cite{sveva2,fornari2019,gorielyclearance2021} while simpler Fisher-type network models,  approximating the dynamics undergirding the Nobel-award winning discovery of prion disease mechanisms \cite{prusiner1998}, have been used to study patient imaging data \cite{jack2013,kuhl2021pet,kuhl2020pet,kuhl2021bpet} and are now well established in the literature.  Our aim is to investigate the nascent potential for considering topological differences between AD subtypes and we select a parsimonious Fisher-type model, published previously \cite{goriely2020neuronal}, to do so. The model of \cite{goriely2020neuronal} reads:

\begin{subequations}\label{eqn:fkpp-edge-degrad}
\begin{align}
&\dot c_i = \rho \sum_{j=1}^N \left(w_{ij}-\delta_{ij} 
\sum_{k=1}^{N} w_{ik}
\right)c_j+\alpha c_i (1-c_i),&  i&=1,\dotsc,N,\label{eqn:fkpp-edge-degrad:c}\\
&\dot q_i =\beta c_i (1-q_i),\quad & i&=1,\dotsc, N, \label{eqn:fkpp-edge-degrad:q}\\
&\dot w_{ij} = -\gamma w_{ij} (q_i+q_j), & i,j&=1,\dotsc, N,\label{eqn:fkpp-edge-degrad:w}
\end{align}
\end{subequations}
where $\alpha,\beta,\gamma,\rho$ are parameters. For the initial conditions we have 
\begin{subequations}\label{eqn:fkpp-edge-degrad-ic}
\begin{align}
& c_i(0)=\epsilon \delta_{is} &i=1,\dotsc, N,\label{eqn:fkpp-edge-degrad-ic:c}\\
&q_i (0)=0, &i=1,\dotsc, N,\label{eqn:fkpp-edge-degrad-ic:q}\\
&w_{ij}(0) =\omega_{ij}, & i,j=1,\dotsc, N\label{eqn:fkpp-edge-degrad-ic:w},
\end{align}
\end{subequations}

The primary model, described mathematically by \eqref{eqn:fkpp-edge-degrad}, expresses three main attributes: the movement of toxic $\tau P$ from one brain region to the next; the conversion of healthy $\tau P$ to toxic $\tau P$ in the presence of toxic $\tau P$; and the damaging, neurodegenerative effects of toxic $\tau P$ on the brain.  
At each vertex, the concentration of toxic $\tau P$ is denoted by $c_i$.  The transport of toxic $\tau P$ from one brain region (vertex) to a connected neighbour is  modelled using the usual weighted \textit{graph Laplacian} tensor whose components are 
\begin{equation}
 L_{ij} = -w_{ij}+ \mbox{$\delta_{ij}\sum_{j=1}^{N}$} w_{ij},\qquad i,j\in\{1,\dots,N\},
 \end{equation}
where $w_{ij}$ is the corresponding entry in the weighted adjacency matrix for the graph being considered and $\delta_{ij}$ is the Kronecker delta ($\delta_{ij}=1$ for $i=j$ and zero otherwise).  The conversion of healthy $\tau P$ to toxic $\tau P$ in the presence of toxic $\tau P$ is modelled by the local logistic growth term $\alpha c_i(1-c_i)$, where $\alpha$ is a growth rate, in \eqref{eqn:fkpp-edge-degrad:c}.  This equation does not explicitly model the healthy $\tau P$ population but is derived from a model that does so \cite{fornari2019} and the saturation of toxic protein populations in AD has been observed in medical data \cite{jack2013}.  The trafficking of toxic $\tau P$ over the brain's axonal fibre bundle network  and the local reproduction of toxic $\tau P$ are expressed together in \eqref{eqn:fkpp-edge-degrad:c}, $\rho$ denotes a characteristic transport rate.

The neurodegenerative effect of toxic $\tau P$, on the various regions of the brain, is the final attribute reflected in \eqref{eqn:fkpp-edge-degrad}.  There are many harmful effects of toxic $\tau P$; the severity of harmful, degenerative effects correlates with the appearance of large aggregates of toxic $\tau P$ in AD.  The term $q_i$, in \eqref{eqn:fkpp-edge-degrad:q}, tracks the neurodegenerative damage done by the toxic $\tau P$ to the local brain region (graph vertex).  In the presence of toxic concentration $c_i$, $q_i$ increases from zero (healthy) to one (fully damaged).  Local damage to neurons within a brain region (graph vertex) causes those neurons and their axonal fibres to die.  The progression of local damage in the brain results in a degradation of the integrity of the edges (axonal fibre bundles) of the graph and is reflected in \eqref{eqn:fkpp-edge-degrad:w}. 

\begin{table}[t]
    \centering
    \caption{Alzheimer's disease subtypes, of \cite{vogel2021}, and their associated cortical $\tau$P seeding locations}
    \begin{tabular}{|c|c|c|}
    \hline
    Subtype name    &   Cortical seeding location  & Global vertex index\\
    \hline
    Limbic      &   Entorhinal cortex & $v_1$\\
    MTL sparing &   Middle temporal gyrus & $v_2$\\
    Posterior   &   Fusiform gyrus & $v_3$\\
    Temporal    &   Inferior temporal gyrus & $v_4$\\
    \hline
    \end{tabular}
    \label{tab:vogel-subtypes}
\end{table}

Various subtypes of AD have been identified in post-mortem \cite{murray2011} and imaging \cite{vogel2021} studies. It has been suggested \cite{vogel2021} that different subtypes of AD may arise from different locations where the initial toxic $\tau P$ population takes hold, or is \textit{seeded}.  The various AD subtypes, and their initial seeding locations, are summarised in Table~\ref{tab:vogel-subtypes}.  In the model, seeding corresponds to the initial conditions \eqref{eqn:fkpp-edge-degrad-ic}.  A specific brain region, corresponding to vertex $v_s$, is chosen as the seed region and \eqref{eqn:fkpp-edge-degrad-ic:c} indicates that only this region begins with a nonzero initial concentration, $0 < \epsilon \ll 1$, of toxic $\tau P$.  Varying the brain region corresponding to $v_s$, according to Table~\ref{tab:vogel-subtypes}, models the individual AD subtypes. The condition \eqref{eqn:fkpp-edge-degrad-ic:q} indicates that each brain region begins in an undamaged state while \eqref{eqn:fkpp-edge-degrad-ic:w} and \eqref{eqn:weighted-adjacency-matrix} fix the initial set of brain graph edges. 

\begin{remark}
{{
Propagation models, such as \eqref{eqn:fkpp-edge-degrad}, depend on the underlying structural connectome network, the choice of graph Laplacian weights and the parameters of the model.  The authors have previously studied propagation \cite{goriely2022} on a fixed connectome.  The results of that study indicate that the arrival of misfolded proteins at a target node are stable under small perturbations of the diffusion ($\rho$) and growth ($\alpha$) parameters of \eqref{eqn:fkpp-edge-degrad:c} in addition to the  initial seed value \eqref{eqn:fkpp-edge-degrad-ic}.  The question of differences in the evolution of trajectories under alterations of the underlying connectome graph, such as varying the tractography type, parcellation strategy, and edge weights, has also been previously investigated by the authors \cite{goriely2021}. Our findings in this study suggest that further efforts, from the imaging neuroscience community, are needed to standardise the construction of human structural brain connectomes and could greatly improve the interpretability, stability, and reproducibility of propagation models.}}
\end{remark}

 \begin{figure}[t]
     \centering
     \begin{subfigure}[b]{0.24\textwidth}
         \centering
         \includegraphics[width=\textwidth,trim=5 5 5 5,clip]{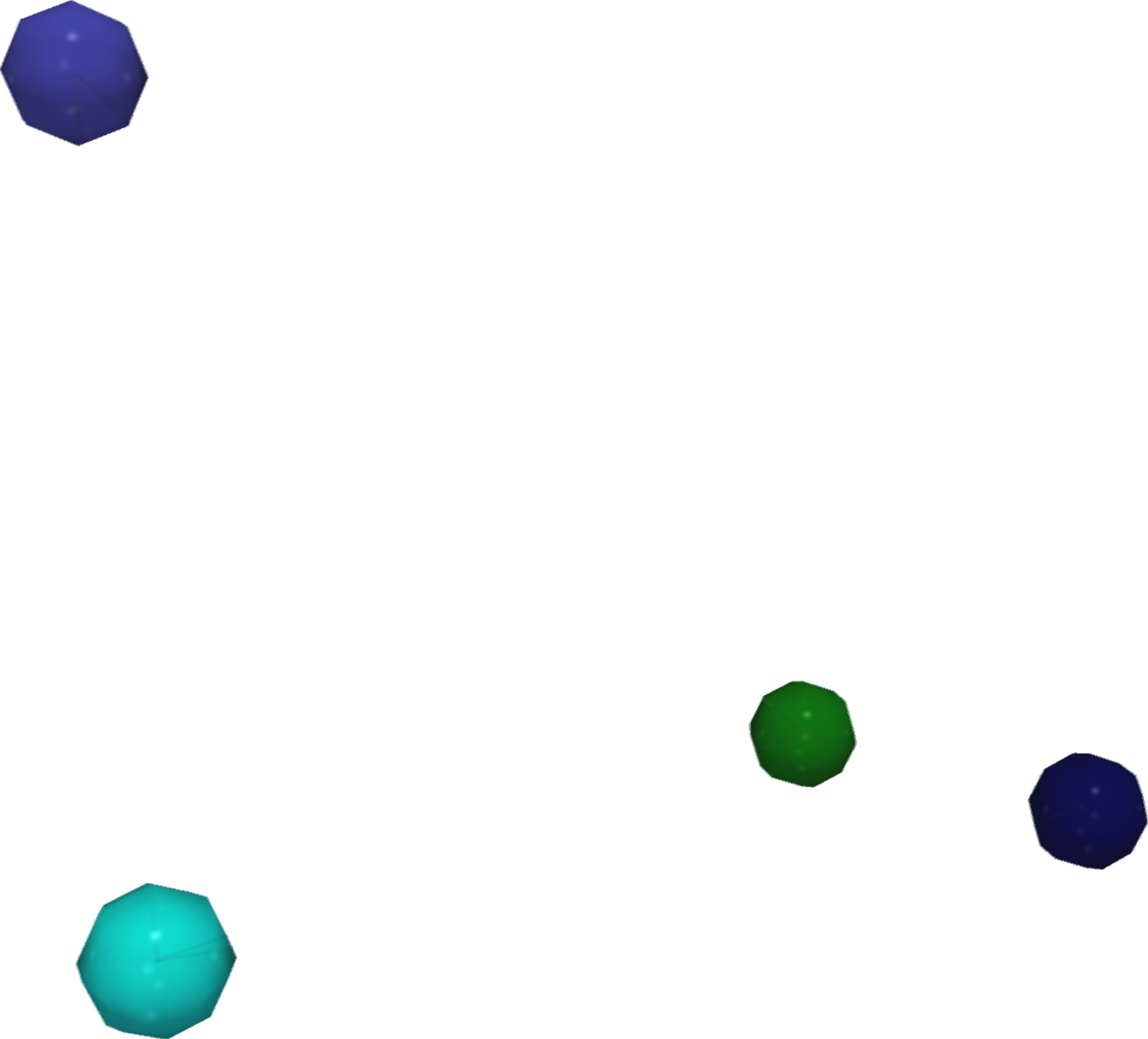}
         \caption{$S_0$}
     \end{subfigure}
     \hfill
     \begin{subfigure}[b]{0.24\textwidth}
         \centering
         \includegraphics[width=\textwidth,trim=5 5 5 5,clip]{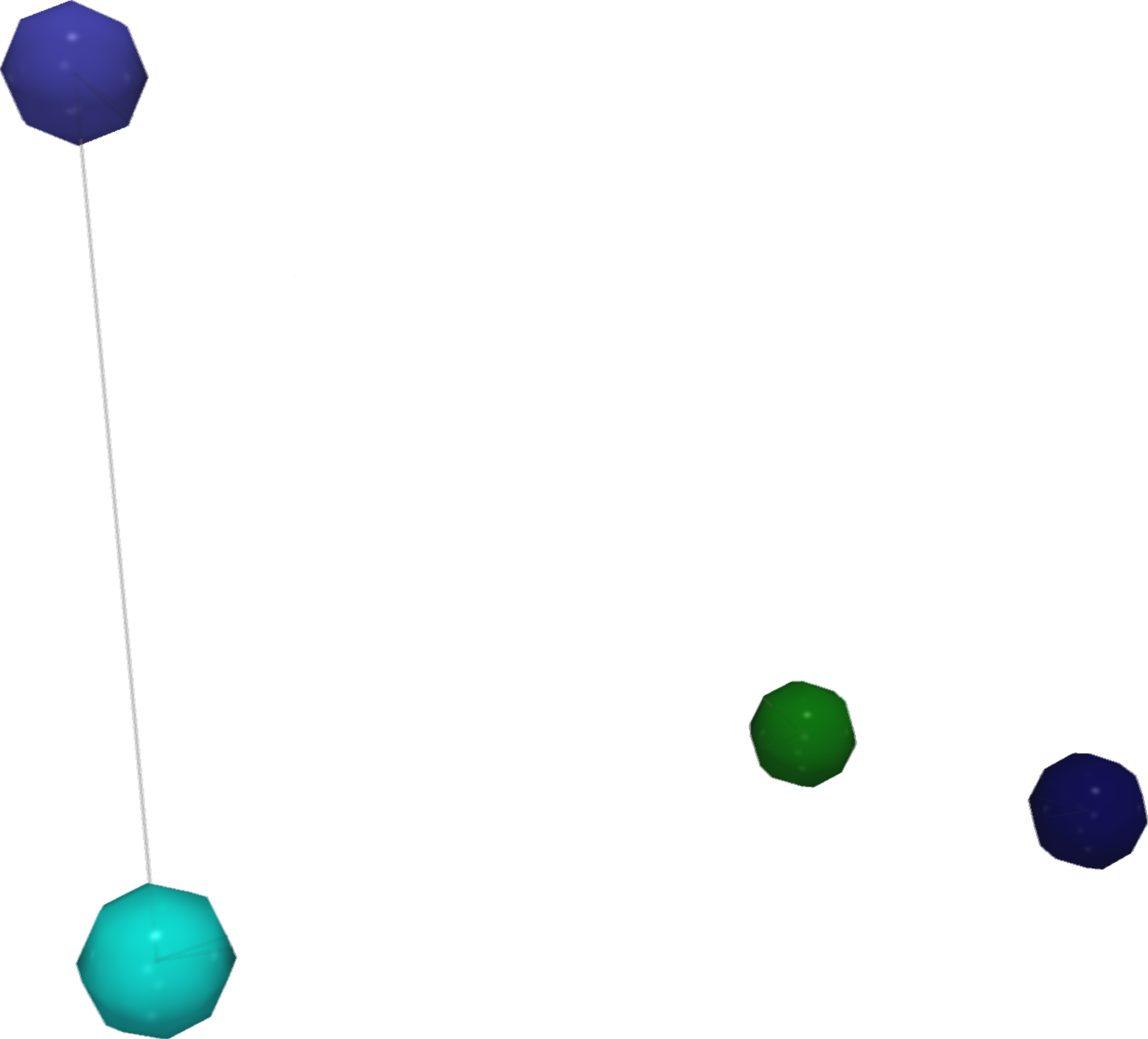}
         \caption{$S_1$}
     \end{subfigure}
     \hfill
     \centering
     \begin{subfigure}[b]{0.24\textwidth}
         \centering
         \includegraphics[width=\textwidth,trim=5 5 5 5,clip]{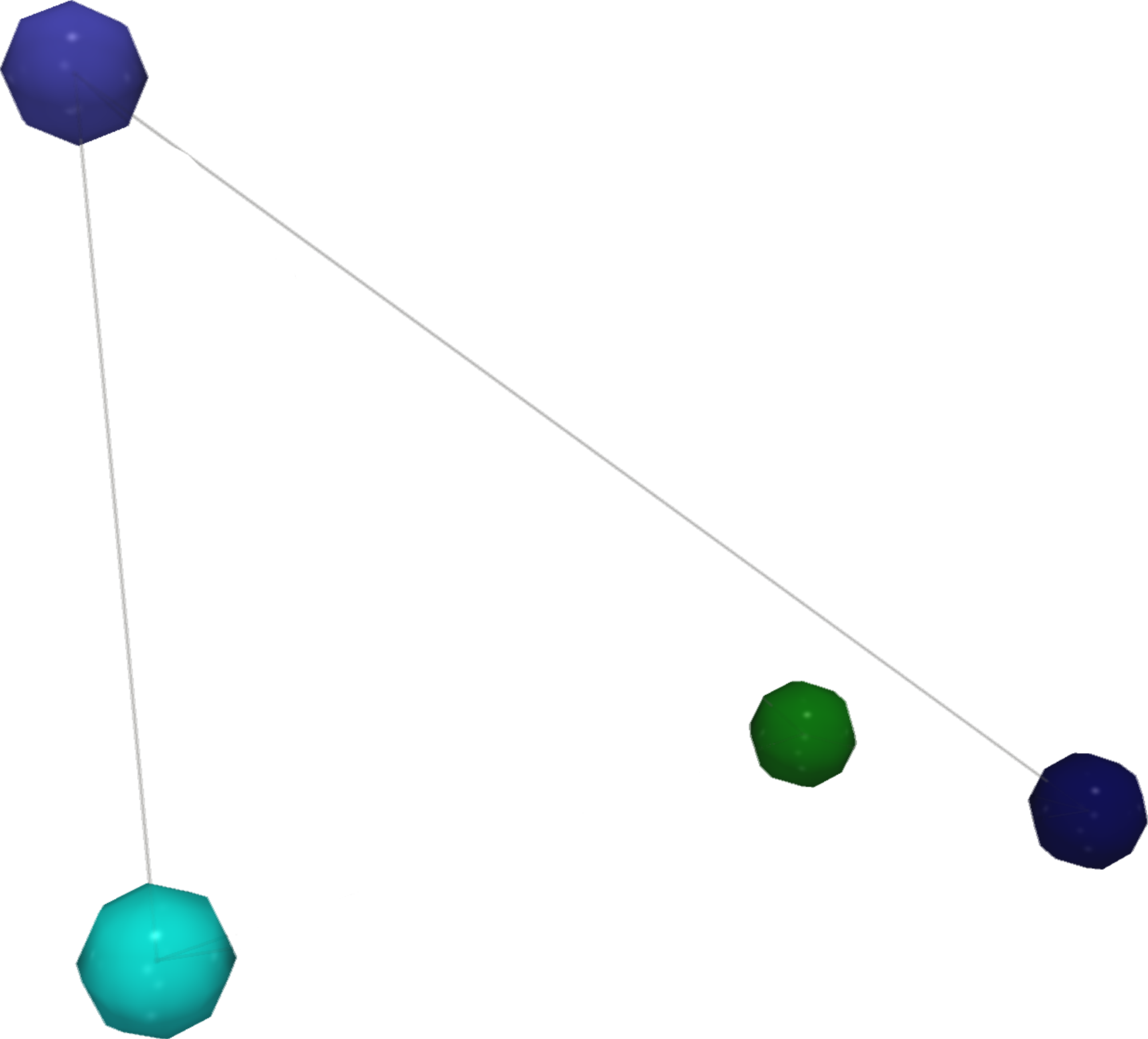}
         \caption{$S_2$}
     \end{subfigure}
     \hfill
     \begin{subfigure}[b]{0.24\textwidth}
         \centering
         \includegraphics[width=\textwidth,trim=5 5 5 5,clip]{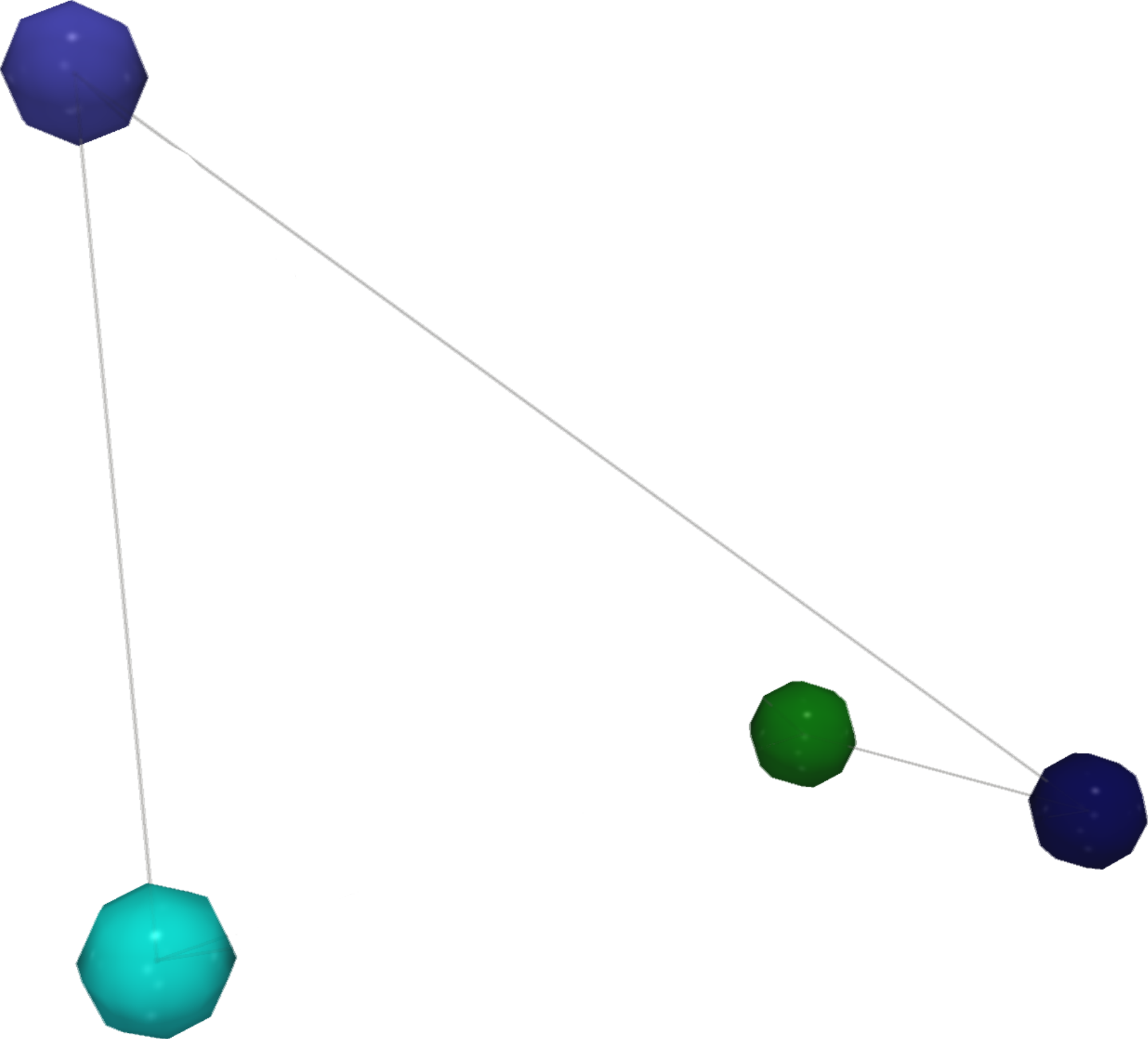}
         \caption{$S_3$}
     \end{subfigure}\hfill\\
          \centering
     \begin{subfigure}[b]{0.24\textwidth}
         \centering
         \includegraphics[width=\textwidth,trim=5 5 5 5,clip]{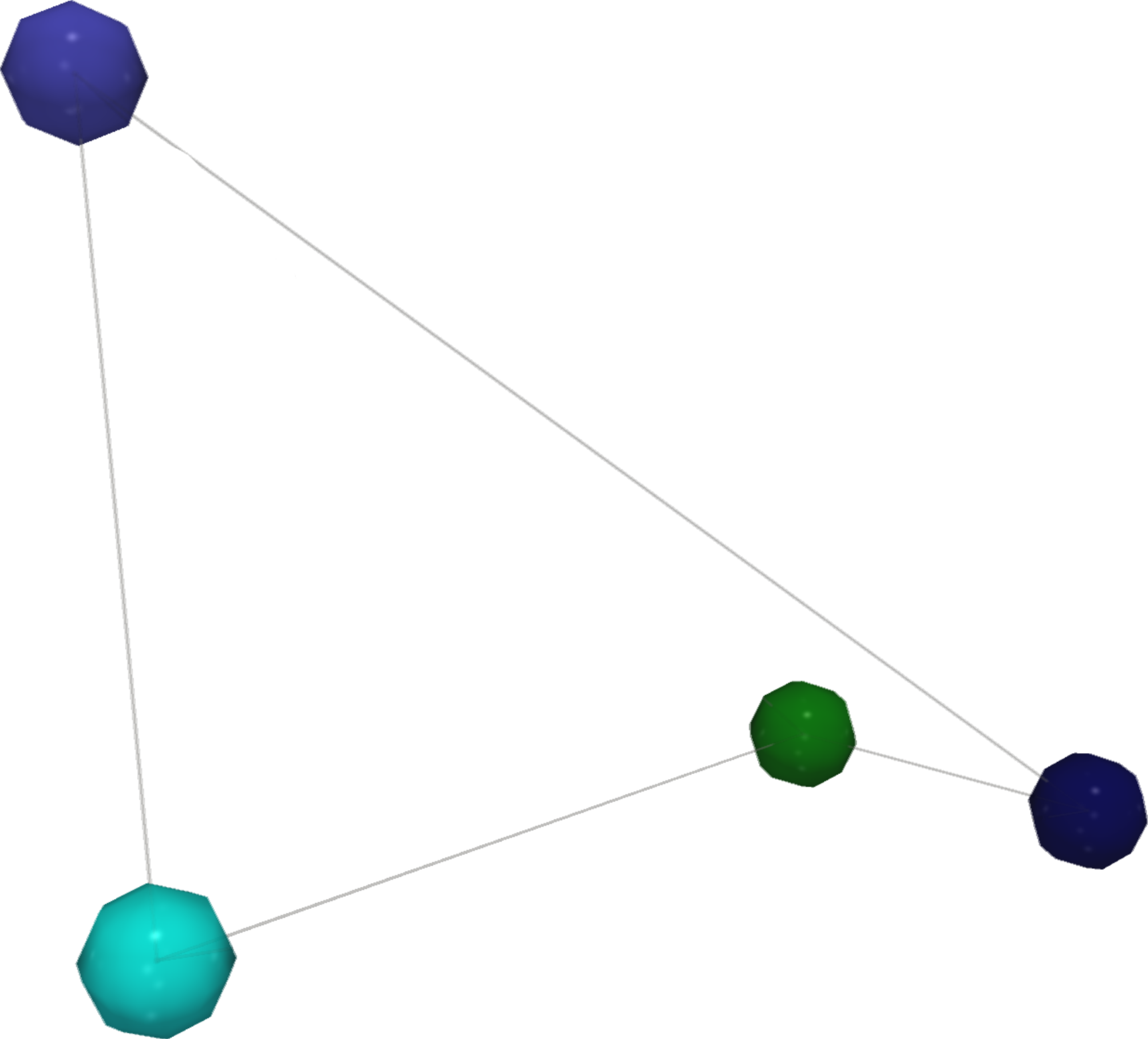}
         \caption{$S_4$}
     \end{subfigure}
    \hfill
     \begin{subfigure}[b]{0.24\textwidth}
         \centering
         \includegraphics[width=\textwidth,trim=5 5 5 5,clip]{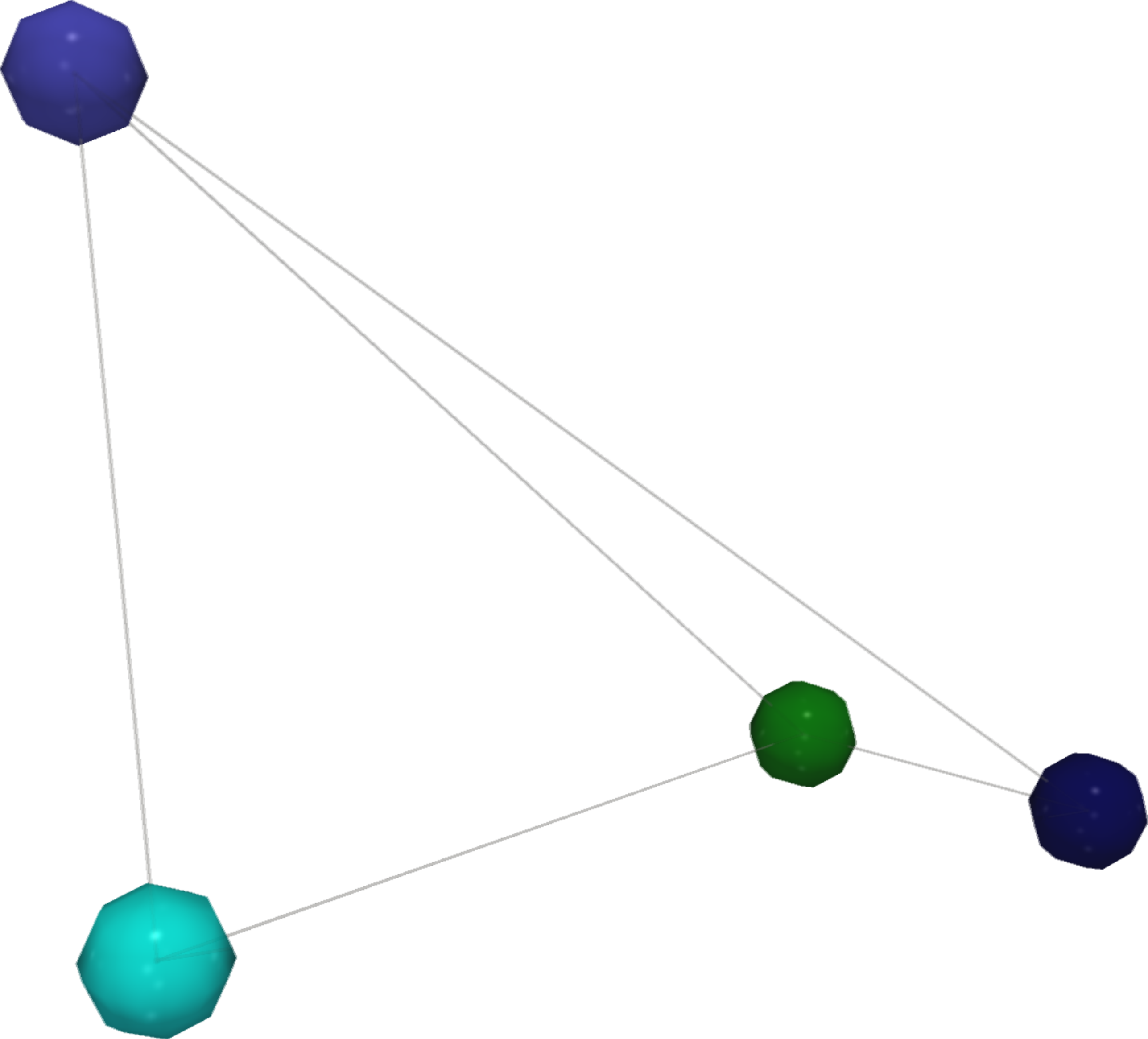}
         \caption{$S_5$}
     \end{subfigure}
     \hfill
     \begin{subfigure}[b]{0.24\textwidth}
         \centering
         \includegraphics[width=\textwidth,trim=5 5 5 5,clip]{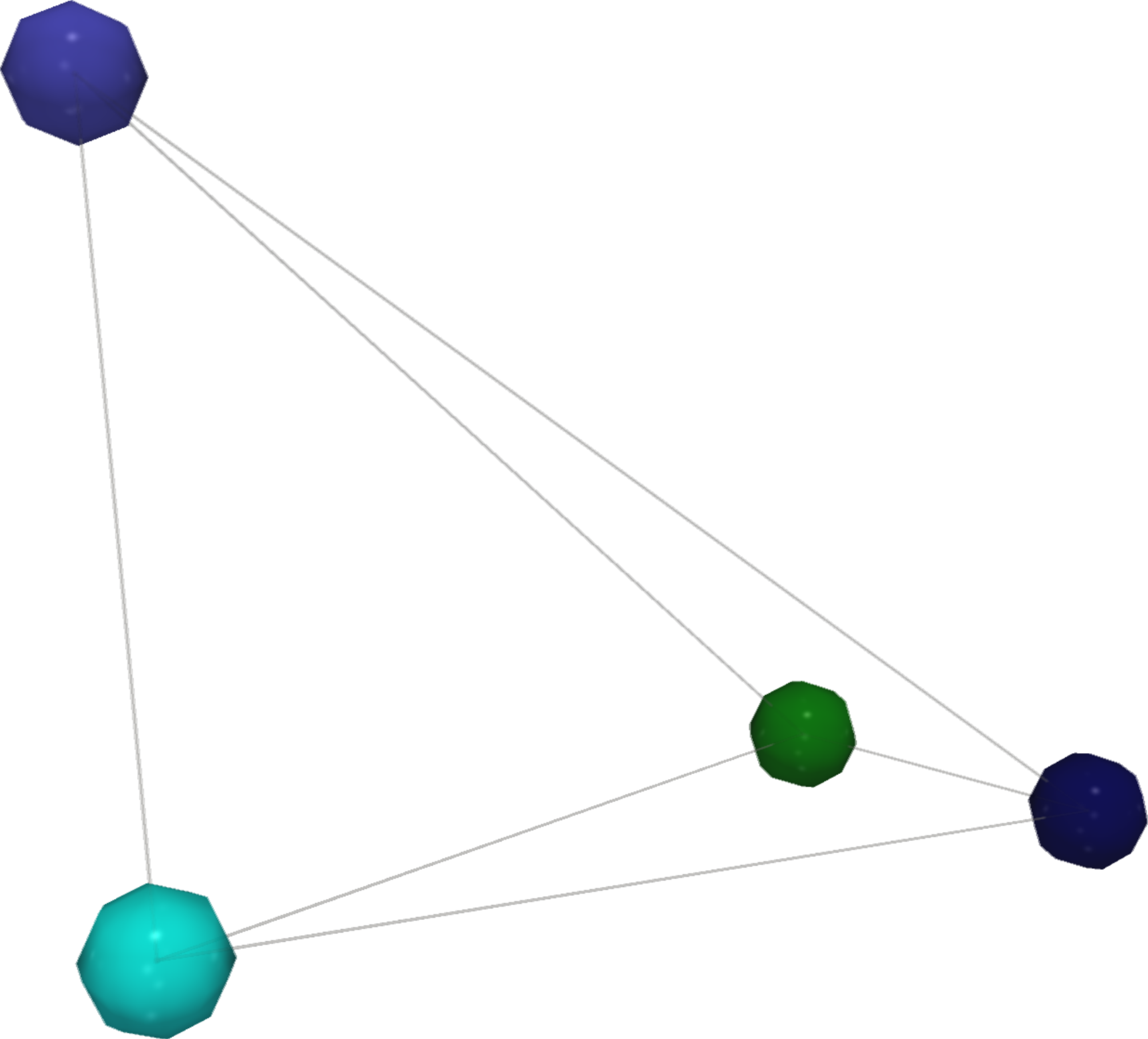}
         \caption{$S_6$}
     \end{subfigure}
    \hfill

        \caption{An illustrative edge filtration, $\mathcal{F}_s(\Xi_s)$, shown for the $N=4$ graph in Fig.~\ref{fig:connectome-33-subgraphs:4} with seeding in the entorhinal cortex (top left vertex).} 
        \label{fig:filtration-ex}
\end{figure}

\textbf{Filtrations from neurodegeneration}  In the model \eqref{eqn:fkpp-edge-degrad}-\eqref{eqn:fkpp-edge-degrad-ic}, the reduction in an edge weight, of the brain graph, is a metric for the extent of the neurodegeneration in the two brain regions connected by that edge.  As a result, the progression of toxic $\tau P$ proliferation and degeneration is commensurate with the order in which the brain graph edge weights degrade.  
From \eqref{eqn:fkpp-edge-degrad}, it can be shown \cite{goriely2020neuronal} that all of the edge weights, $w_{ij}$, converge asymptotically to zero and that late in the disease, around time 20-25 years, the relative values of edge weights are ordered, so that if $w_{ij}(t)>w_{i'j'}(t)$, then $w_{ij}(t')\geq w_{i'j'}(t')$ for all $t'\geq t>20$ and that this relationship holds for all edges of the graph. 

Let $G=(V,E)$ be a fixed choice of graph.  In practice $G$ will be one of the nested subgraphs shown in Fig.~\ref{fig:connectome-33-subgraphs}.  From the discussion above, selecting a seeding location $v_s$, with an initial seed $0 < \epsilon \ll 1$, and a time $t_0 \approx 20$ determines an ordering of the edges, via an ordering of the associated weights $w_{ij}(t)$, and this ordering is preserved for all times $t \geq t_0$.  An edge permutation corresponding to $G$ and the initial seed vertex $v_s$ is defined by $\Xi_s = \{ e_1, e_2, \dots e_M\}$, where $|E|=M$, with the property that if $i < j$ then the edge weight $w$ associated to edge $e_i = \Xi_s(i)$ and the edge weight $\tilde{w}$ associated to edge $e_j = \Xi_s(j)$ satisfy $w(t) \leq \tilde{w}(t)$ for all $t \geq t_0$.  Given $\Xi_s$ we define an edge filtration (Definition~\ref{defin:edge-filtration}), denoted by $\mathcal{F}_s(\Xi_s)$, of $G$ by
\begin{equation}
\mathcal{F}_s(\Xi_s) = S_0 \lessdot S_1 \lessdot \dots \lessdot S_M, 
\end{equation}
where $S_0 = (V,\{\})$ and defining $S_r = (V,\{\Xi_s(1),\Xi_s(2),\dots,\Xi_s(r)\})$ for each $1 \leq r \leq M$; an example is shown in Fig.~\ref{fig:filtration-ex}.  The edge filtration $\mathcal{F}_s(\Xi_s)$ defines a maximal chain in $\mathcal{H}(N)$ (c.f.~Sec.~\ref{sec:graph-filtrations}) which we will also denote by $\mathcal{F}_s$.

\subsection{$\mathcal{H}\left(N\right)$ distinguishes AD subtypes with increasing $N$}\label{subsec:HN-differentiated-subtypes}
In this section we examine the topological implications of the hypothesis \cite{vogel2021} that AD subtypes may arise from different toxic $\tau P$ seeding locations within the brain (c.f.~Table~\ref{tab:vogel-subtypes}).  We will show that as the number of vertices in the brain subgraphs (c.f.~Fig.~\ref{fig:connectome-33-subgraphs}) increase, from $N=4$ to $N=18$, we see differences emerge between the simulated AD subtypes.  Our conclusion is cautious and two-fold.  First, we assert that this work offers the first evidence that edge filtrations generated by network neurodegeneration dynamical systems on (structural connectome) brain graphs may be applicable in AD, or other neurological, research.  Second, we claim that considering edge filtrations as chains in $\mathcal{H}(N)$ offers at least one novel means of study. \\

\textbf{Simulation and edge filtration generation for each AD subtype}  We assume, as hypothesised by the authors of one of the largest recent brain imaging studies of AD subtypes \cite{vogel2021},  that each AD subtype of Table~\ref{tab:vogel-subtypes} corresponds to a choice of seeding region (i.e.~an initial vertex $v_s$ where \eqref{eqn:fkpp-edge-degrad-ic:c} is nonzero). For each AD subtype, the model \eqref{eqn:fkpp-edge-degrad}-\eqref{eqn:fkpp-edge-degrad-ic} was used to determine an edge filtration (Sec.~\ref{subsec:math-model-and-filtration}) for each of the nested subgraphs $G_N$ for $N\in\{4,6,8,12,15,18\}$ (c.f.~Fig.~\ref{fig:connectome-33-subgraphs}).  The model parameters used in each run were identical.  The parameters used for each run of \eqref{eqn:fkpp-edge-degrad}-\eqref{eqn:fkpp-edge-degrad-ic} were: $\alpha= 3/4$, $\beta = 1/4$, $\gamma = 1/8$, $\rho = 1/100$, and $\epsilon=1/20$.  These parameters come from a previous statistical inference \cite{kuhl2021pet} using patient data and give a typical time scale of 30 years for the disease and produce the Braak staging \cite{braak1991neuropathological} of the disease, a well-accepted regional pattern of deposition of toxic $\tau P$ aggregates from post-mortem studies,  as shown in \cite{goriely2022}, with entorhinal cortex initial seeding.\\

\textbf{$\mathcal{H}(N)$ suggests three AD subtype categories for sufficient N} For each value of $N$ we constructed $\mathcal{H}(N)$ using a Mathematica implementation of Algorithm~\ref{alg:algorithm-1} and Algorithm \ref{alg:algorithm-2} (c.f.~Sec.~\ref{sec:pe-quotient-construction}).  A maximal chain was then produced by determining the homotopy polynomial, in $\mathcal{H}(N)$, for each spanning subgraph of each simulated filtration (c.f.~Sec.~\ref{sec:graph-filtrations}).  
Results are shown in Fig.~\ref{fig:chain-results-limbic-mtl} and Fig.~\ref{fig:chain-results-post-temp}.  Graphs of $\mathcal{H}(N)$ are shown in blue while the specific chains corresponding to the edge filtration, resulting from the simulations, are highlighted in red.  The first result is that $\mathcal{H}(4)$ and $\mathcal{H}(6)$, corresponding to the subgraphs $G_4$ and $G_6$ (c.f.~Fig.~\ref{fig:connectome-33-subgraphs:4}-\ref{fig:connectome-33-subgraphs:6}), are insufficient to exhibit differences in the edge filtrations produced by the simulations.  At $N=8$, the maximal chain for the limbic subtype is visually distinct from that of the MTL, posterior and temporal subtypes (Fig.~\ref{fig:chain-results-limbic-mtl:8:limbic} vs.~both Fig.~\ref{fig:chain-results-limbic-mtl:8:mtl} and Figs.~\ref{fig:chain-results-post-temp:8:post}-\ref{fig:chain-results-post-temp:8:temp}).   
This same observation holds for $N=12$.  On the subgraph $G_{15}$ (c.f.~Fig.~\ref{fig:connectome-33-subgraphs:15}) and $G_{18}$ the simulated edge filtrations in $\mathcal{H}(15)$ appear to further differentiate the limbic and MTL subtypes from the posterior and temporal subtypes, the latter two of which show the same result; the graphs for $\mathcal{H}(18)$ are not shown in Figs.~\ref{fig:chain-results-limbic-mtl}-\ref{fig:chain-results-post-temp} due to their prohibitive size.\\

\begin{figure}
     \centering
     \begin{subfigure}[b]{0.49\textwidth}
         \centering
         \includegraphics[width=\textwidth]{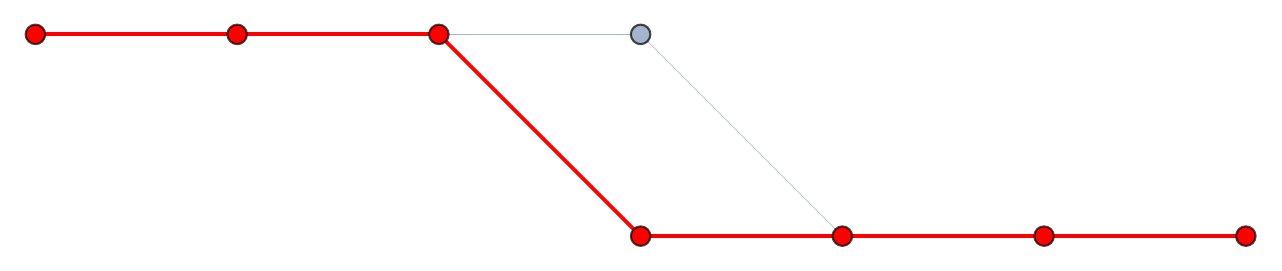}
         \caption{$N=4$, Limbic subtype, $\mathcal{F}_1$}
         \label{fig:chain-results-limbic-mtl:4:limbic}
     \end{subfigure}
     \hfill
     \begin{subfigure}[b]{0.49\textwidth}
         \centering
         \includegraphics[width=\textwidth]{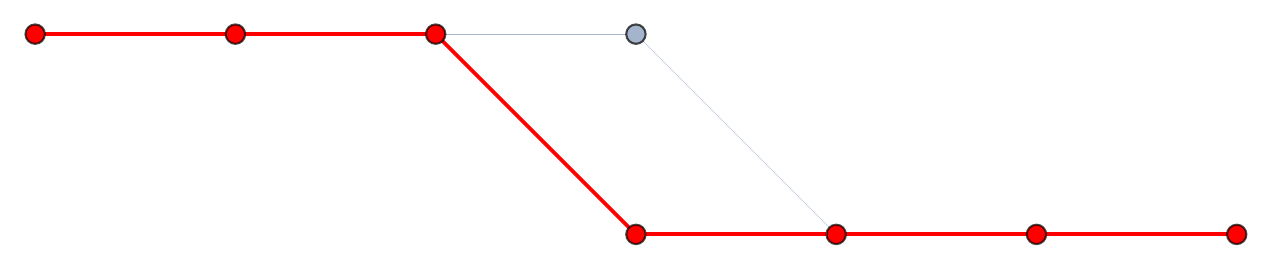}
         \caption{$N=4$, MTL subtype, $\mathcal{F}_2$}
         \label{fig:chain-results-limbic-mtl:4:mtl}
     \end{subfigure}\\
    \begin{subfigure}[b]{0.49\textwidth}
         \centering
         \includegraphics[width=\textwidth]{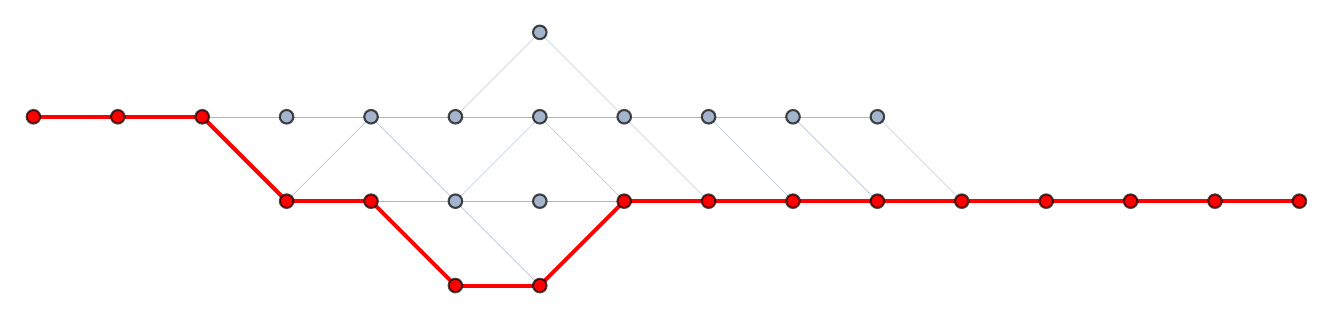}
         \caption{$N=6$, Limbic subtype, $\mathcal{F}_1$}
         \label{fig:chain-results-limbic-mtl:6:limbic}
     \end{subfigure}
     \hfill
     \begin{subfigure}[b]{0.49\textwidth}
         \centering
         \includegraphics[width=\textwidth]{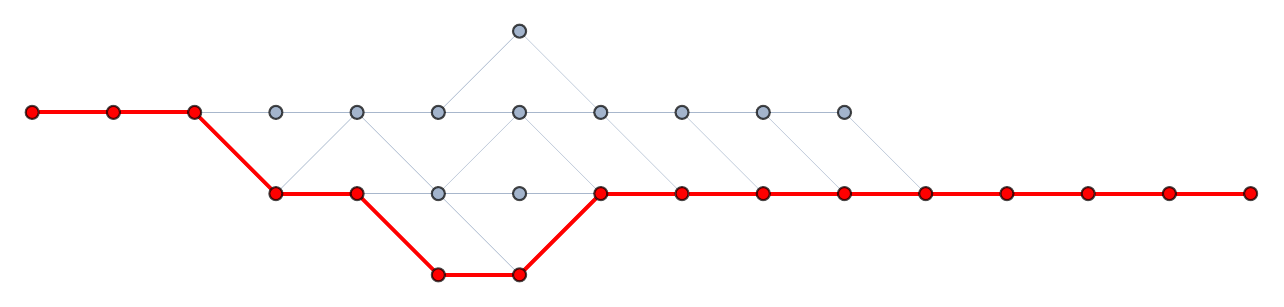}
         \caption{$N=6$, MTL subtype, $\mathcal{F}_2$}
         \label{fig:chain-results-limbic-mtl:6:mtl}
     \end{subfigure}\\
    \begin{subfigure}[b]{0.49\textwidth}
         \centering
         \includegraphics[width=\textwidth]{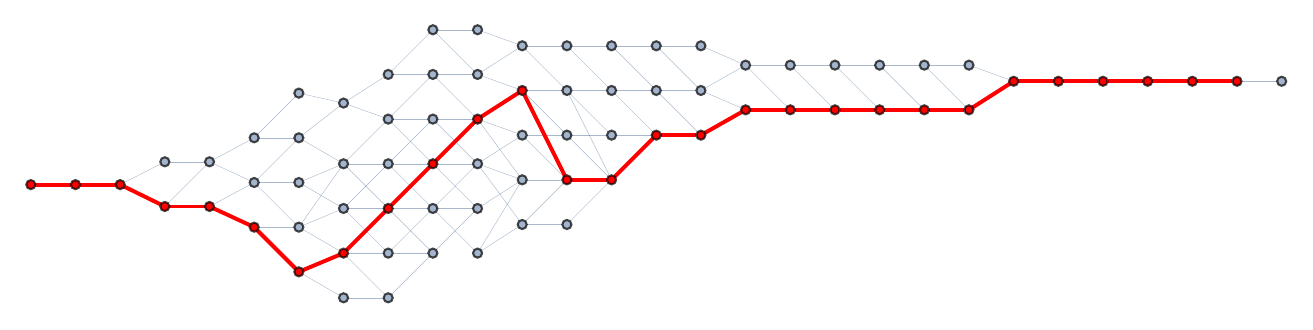}
         \caption{$N=8$, Limbic subtype, $\mathcal{F}_1$}
         \label{fig:chain-results-limbic-mtl:8:limbic}
     \end{subfigure}
     \hfill
     \begin{subfigure}[b]{0.49\textwidth}
         \centering
         \includegraphics[width=\textwidth]{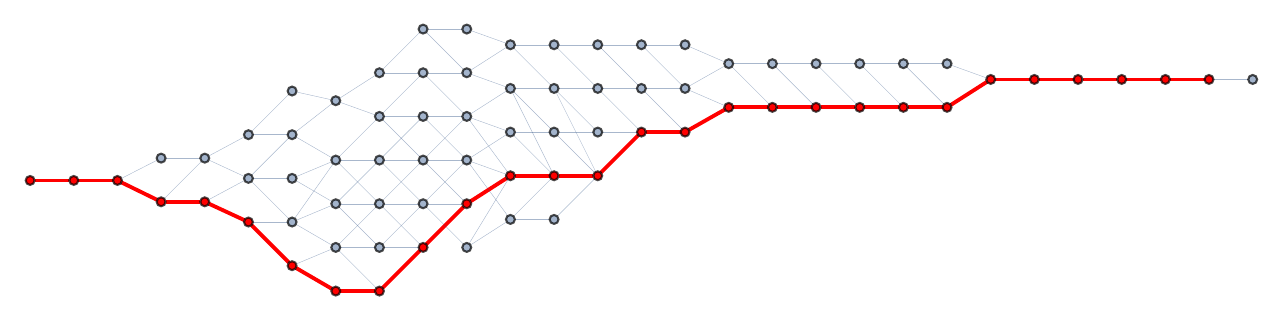}
         \caption{$N=8$, MTL subtype, $\mathcal{F}_2$}
         \label{fig:chain-results-limbic-mtl:8:mtl}
     \end{subfigure}\\
     \begin{subfigure}[b]{0.49\textwidth}
         \centering
         \includegraphics[width=\textwidth]{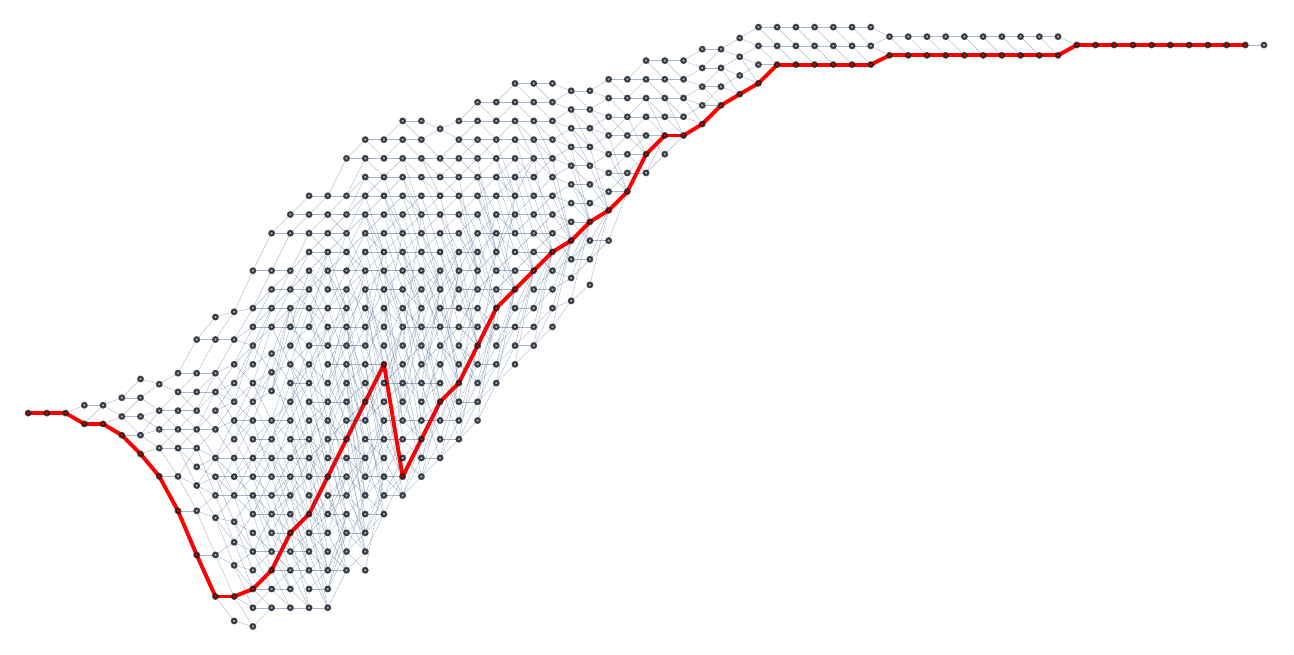}
         \caption{$N=12$, Limbic subtype, $\mathcal{F}_1$}
         \label{fig:chain-results-limbic-mtl:12:limbic}
     \end{subfigure}
     \hfill
     \begin{subfigure}[b]{0.49\textwidth}
         \centering
         \includegraphics[width=\textwidth]{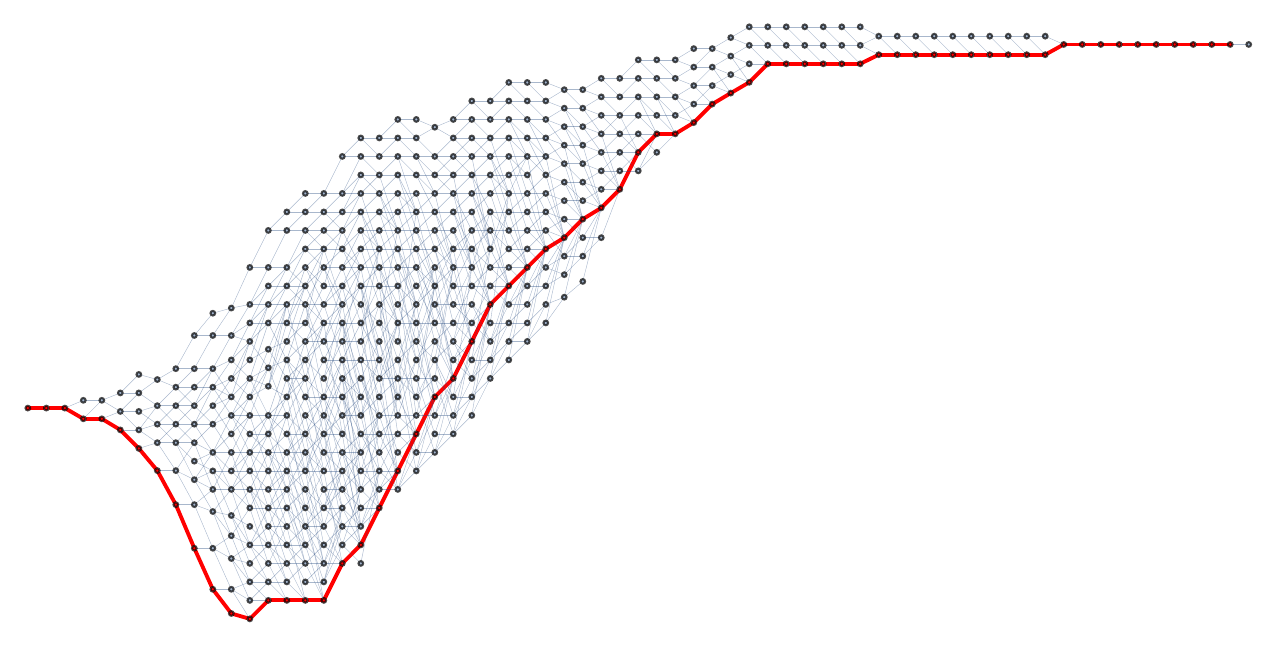}
         \caption{$N=12$, MTL subtype, $\mathcal{F}_2$}
         \label{fig:chain-results-limbic-mtl:12:mtl}
     \end{subfigure}\\
    \begin{subfigure}[b]{0.49\textwidth}
         \centering
         \includegraphics[width=\textwidth]{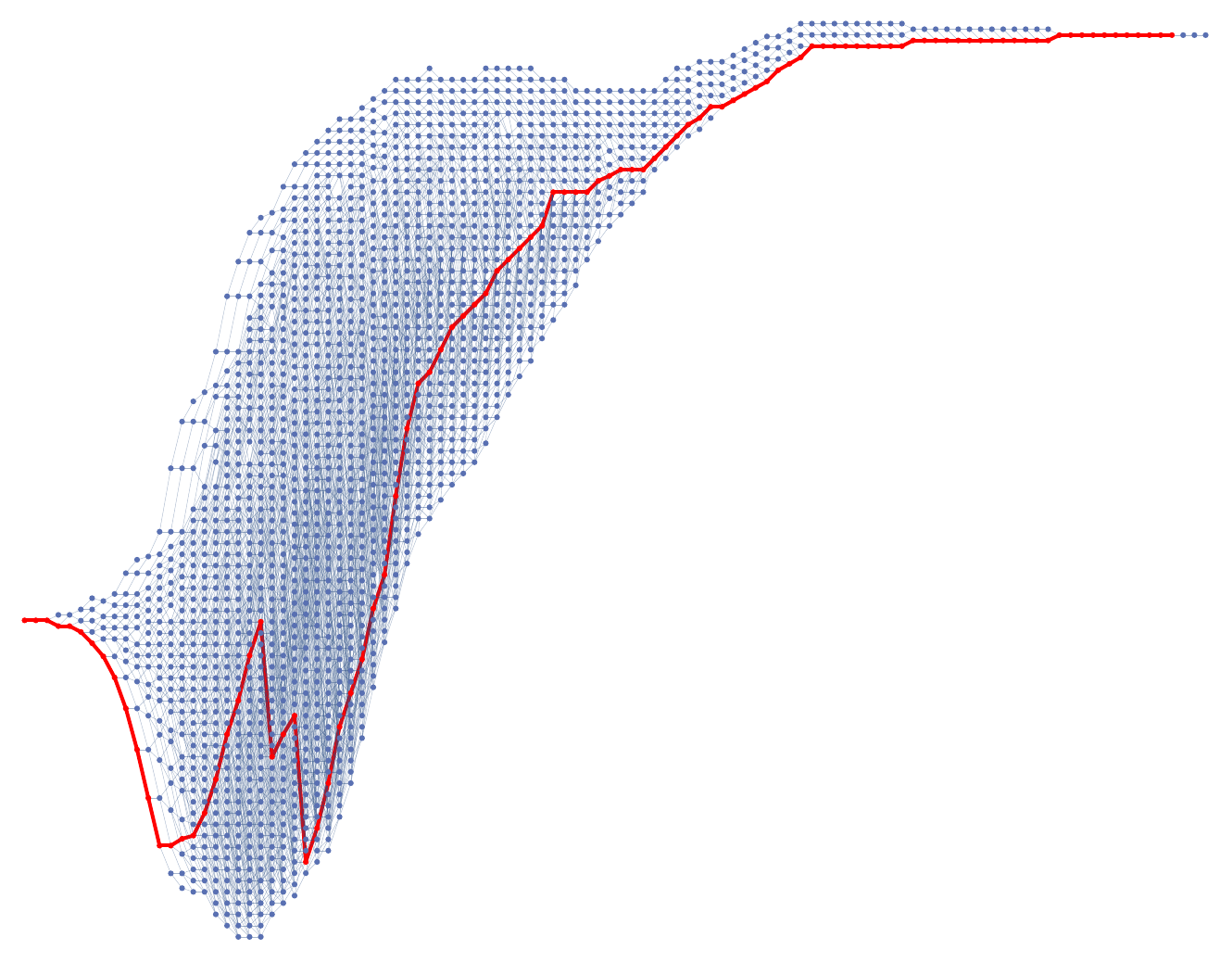}
         \caption{$N=15$, Limbic subtype, $\mathcal{F}_1$}
         \label{fig:chain-results-limbic-mtl:15:limbic}
     \end{subfigure}
     \hfill
     \begin{subfigure}[b]{0.49\textwidth}
         \centering
         \includegraphics[width=\textwidth]{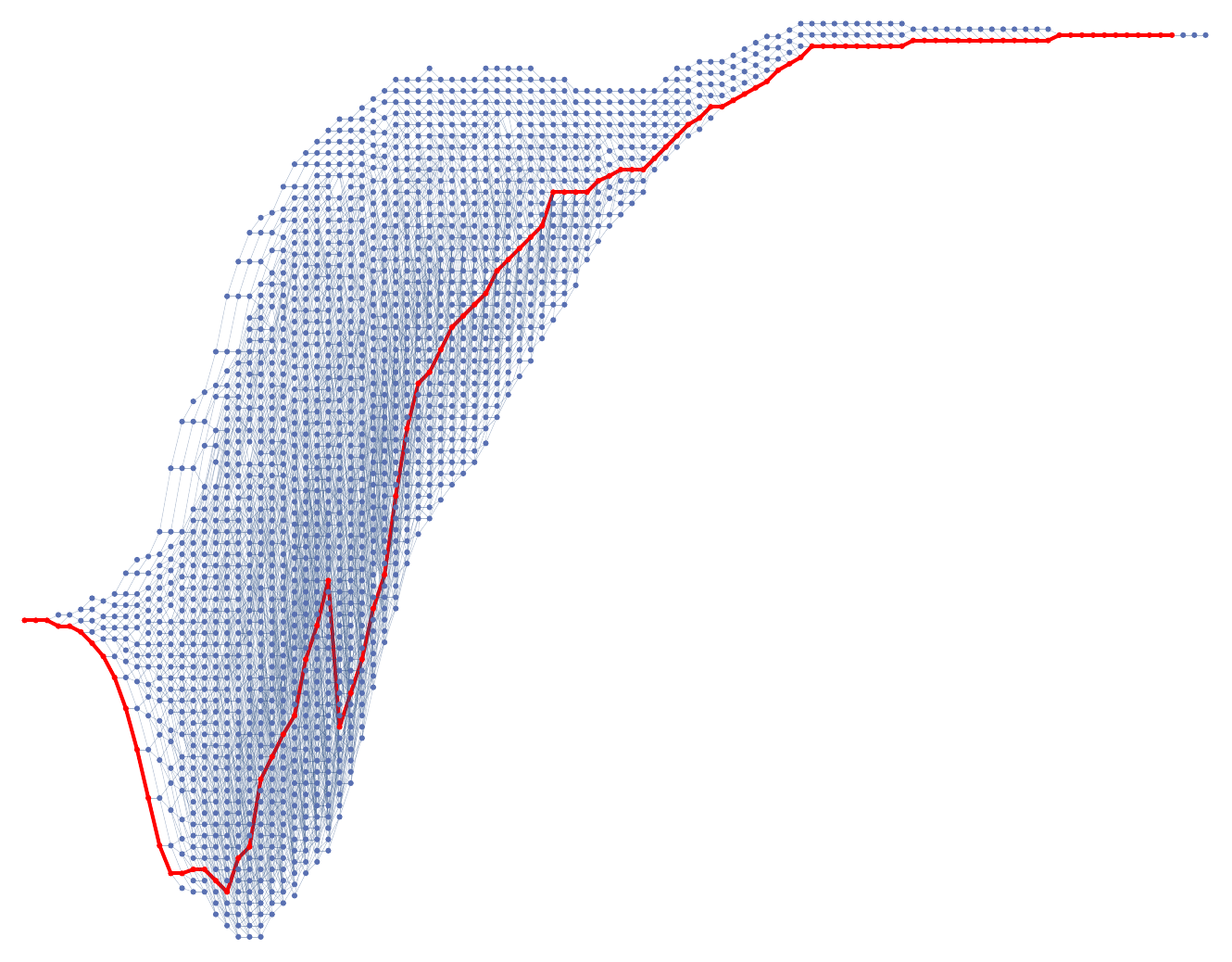}
         \caption{$N=15$, MTL subtype, $\mathcal{F}_2$}
         \label{fig:chain-results-limbic-mtl:15:mtl}
     \end{subfigure}\\
        \caption{Vertices are homotopy equivalence classes of spanning subgraphs of complete graphs of $N$ vertices. Chains (shown in red) corresponding to the limbic (left column) and MTL (right column) simulated subtypes of AD through the space, $\mathcal{H}\left(N\right)$, of homotopy polynomials for the nested left hemisphere graphs of Figure~\ref{fig:connectome-33-subgraphs}}
        \label{fig:chain-results-limbic-mtl}
\end{figure}

\begin{figure}
    \centering
    \begin{subfigure}[b]{0.49\textwidth}
         \centering
         \includegraphics[width=\textwidth]{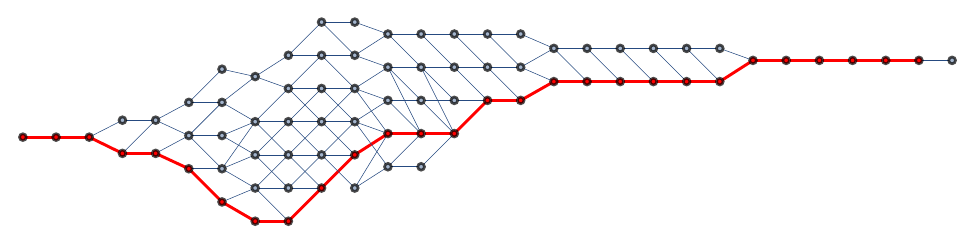}
         \caption{$N=8$, Posterior subtype, $\mathcal{F}_3$}
         \label{fig:chain-results-post-temp:8:post}
     \end{subfigure}
     \hfill
     \begin{subfigure}[b]{0.49\textwidth}
         \centering
         \includegraphics[width=\textwidth]{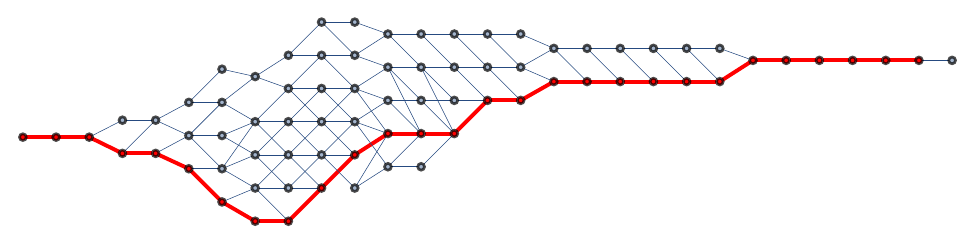}
         \caption{$N=8$, Temporal subtype, $\mathcal{F}_4$}
         \label{fig:chain-results-post-temp:8:temp}
     \end{subfigure}\\
    \begin{subfigure}[b]{0.49\textwidth}
        \centering
         \includegraphics[width=\textwidth]{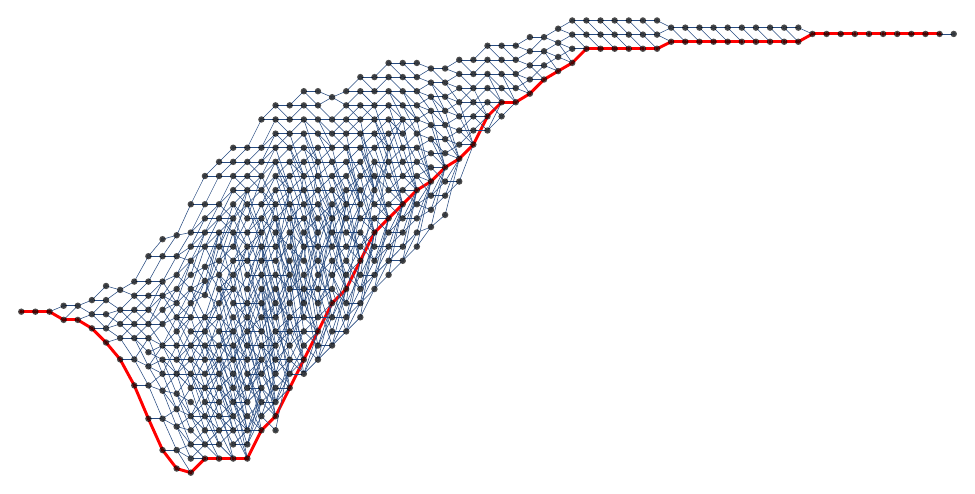}
         \caption{$N=12$, Posterior subtype, $\mathcal{F}_3$}
         \label{fig:chain-results-post-temp:12:post}
     \end{subfigure}
     \hfill
     \begin{subfigure}[b]{0.49\textwidth}
         \centering
         \includegraphics[width=\textwidth]{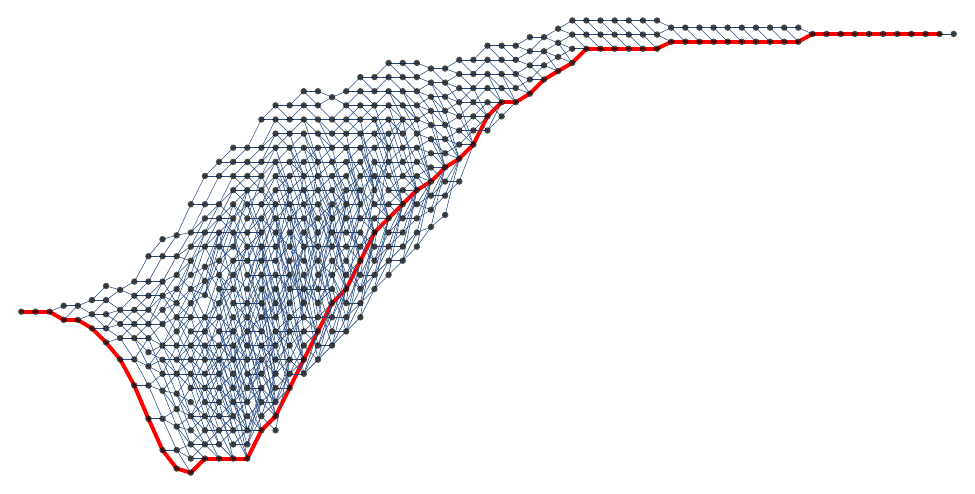}
         \caption{$N=12$, Temporal subtype, $\mathcal{F}_4$}
         \label{fig:chain-results-post-temp:12:temp}
     \end{subfigure}\\
    \begin{subfigure}[b]{0.49\textwidth}
         \centering
         \includegraphics[width=\textwidth]{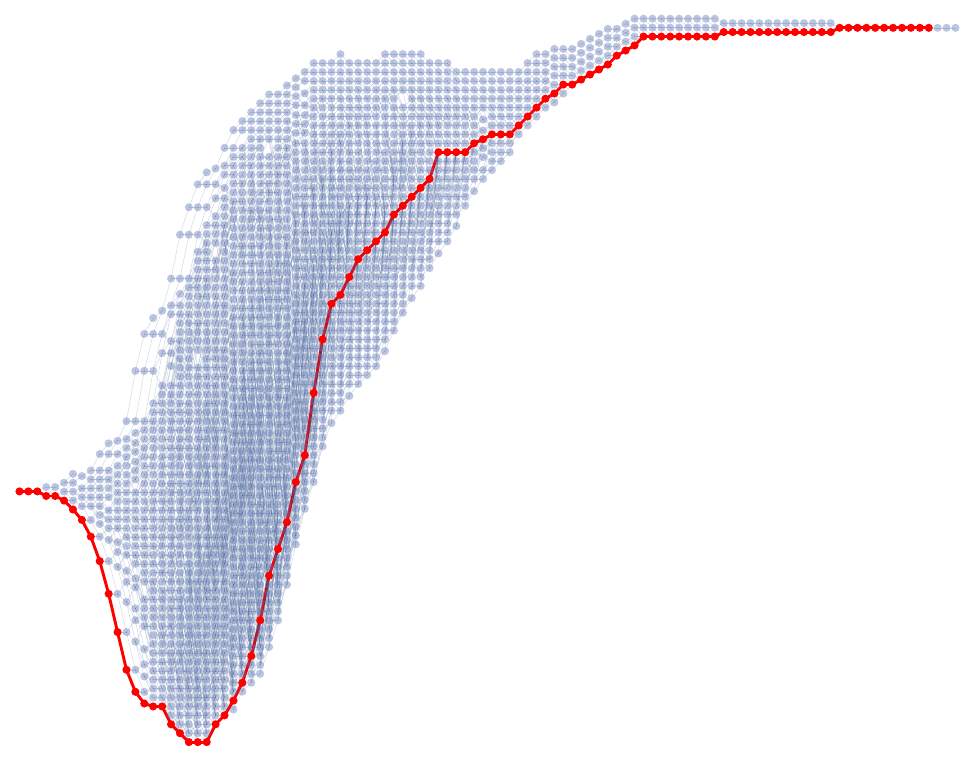}
         \caption{$N=15$, Posterior subtype, $\mathcal{F}_3$}
         \label{fig:chain-results-post-temp:15:post}
     \end{subfigure}
     \hfill
     \begin{subfigure}[b]{0.49\textwidth}
         \centering
         \includegraphics[width=\textwidth]{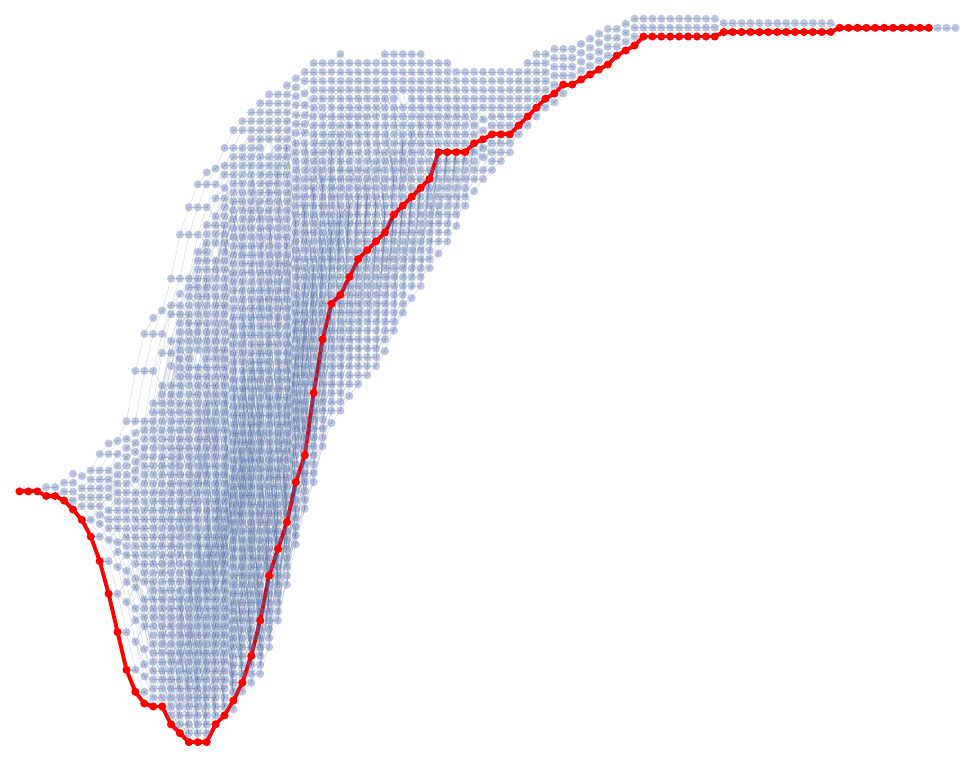}
         \caption{$N=15$, Temporal subtype, $\mathcal{F}_4$}
         \label{fig:chain-results-post-temp:15:temp}
     \end{subfigure}\\
        \caption{Chains corresponding to the posterior (left column) and temporal (right column) AD subtypes, as in Fig.~\ref{fig:chain-results-limbic-mtl}. Results corresponding to the brain subgraphs with $N=4,6$ (Fig~\ref{fig:connectome-33-subgraphs:4}-Fig~\ref{fig:connectome-33-subgraphs:6}) are identical to those of Fig~\ref{fig:chain-results-limbic-mtl:4:limbic}-Fig~\ref{fig:chain-results-limbic-mtl:6:mtl}.}
        \label{fig:chain-results-post-temp}
\end{figure}

\textbf{Computed discrete homotopy distances between chains corresponding to AD subtypes} For each maximal AD subtype chain in $\mathcal{H}(N)$, we computed the pairwise distance (c.f.~Defn. \ref{defn:sec:chain-compare:homotopy-metric}, Sec.~\ref{sec:meshes} and Sec.~\ref{sec:homotopy_vs_homology}) between every other maximal AD subtype chain for each fixed value of $N=8,12,15$ and $18$ (c.f.~Tables~\ref{tab:distances_n8}-\ref{tab:distances_n18}).  Computing the distance between two maximal chains is, in general, computationally prohibitive for large $N$.  However, we were able to find discrete homotopies whose length corresponded with the lower bounds computed through the homology poset (c.f.~Sec.~\ref{sec:meshes} and Sec.~\ref{sec:homotopy_vs_homology}); thus, the homology poset lower bounds provided exact distances, in these cases, and are tractable to compute even for large $N$.

\begin{table}[ht]
    \caption{Discrete homotopy distances between maximal chains $\mathcal{F}_j$ for $N=8$}
    \begin{center}
    \begin{tabular}{c|c|c|c|c|}
    \multicolumn{1}{c}{}& \multicolumn{1}{c}{Limbic ($\mathcal{F}_1$)} & \multicolumn{1}{c}{MTL ($\mathcal{F}_2$)} & \multicolumn{1}{c}{Posterior ($\mathcal{F}_3$)} & \multicolumn{1}{c}{Temporal ($\mathcal{F}_4$)}\\
    \cline{2-5}
    Limbic & 0 & 5 & 5 & 5\\
    \cline{2-5}
    MTL & 5 & 0 & 0 & 0 \\
    \cline{2-5}     
    Posterior & 5 & 0 & 0 & 0\\   
    \cline{2-5}     
    Temporal & 5 & 0 & 0 & 0\\
    \cline{2-5}
    \end{tabular}
    \end{center}
    \label{tab:distances_n8}
\end{table}

\begin{table}[ht]
    \caption{Discrete homotopy distances between maximal chains $\mathcal{F}_j$ for $N=12$}
    \begin{center}
    \begin{tabular}{c|c|c|c|c|}
    \multicolumn{1}{c}{}& \multicolumn{1}{c}{Limbic ($\mathcal{F}_1$)} & \multicolumn{1}{c}{MTL ($\mathcal{F}_2$)} & \multicolumn{1}{c}{Posterior ($\mathcal{F}_3$)} & \multicolumn{1}{c}{Temporal ($\mathcal{F}_4$)}\\
    \cline{2-5}
    Limbic & 0 & 9 & 9 & 9\\
    \cline{2-5}
    MTL & 9 & 0 & 0 & 0 \\
    \cline{2-5}     
    Posterior & 9 & 0 & 0 & 0\\   
    \cline{2-5}     
    Temporal & 9 & 0 & 0 & 0\\
    \cline{2-5}
    \end{tabular}
    \end{center}
    \label{tab:distances_n12}
\end{table}

\begin{table}[ht]
    \caption{Discrete homotopy distances between maximal chains $\mathcal{F}_j$ for $N=15$}
    \begin{center}
    \begin{tabular}{c|c|c|c|c|}
    \multicolumn{1}{c}{}& \multicolumn{1}{c}{Limbic ($\mathcal{F}_1$)} & \multicolumn{1}{c}{MTL ($\mathcal{F}_2$)} & \multicolumn{1}{c}{Posterior ($\mathcal{F}_3$)} & \multicolumn{1}{c}{Temporal ($\mathcal{F}_4$)}\\
    \cline{2-5}
    Limbic & 0 & 12 & 20 & 20\\
    \cline{2-5}
    MTL & 12 & 0 & 14 & 14 \\
    \cline{2-5}     
    Posterior & 20 & 14 & 0 & 0\\   
    \cline{2-5}     
    Temporal & 20 & 14 & 0 & 0\\
    \cline{2-5}
    \end{tabular}
    \end{center}
    \label{tab:distances_n15}
\end{table}

\begin{table}[ht]
    \caption{Discrete homotopy distances between maximal chains $\mathcal{F}_j$ for $N=18$}
    \begin{center}
    \begin{tabular}{c|c|c|c|c|}
    \multicolumn{1}{c}{}& \multicolumn{1}{c}{Limbic ($\mathcal{F}_1$)} & \multicolumn{1}{c}{MTL ($\mathcal{F}_2$)} & \multicolumn{1}{c}{Posterior ($\mathcal{F}_3$)} & \multicolumn{1}{c}{Temporal ($\mathcal{F}_4$)}\\
    \cline{2-5}
    Limbic & 0 & 22 & 36 & 36\\
    \cline{2-5}
    MTL & 22 & 0 & 24 & 24 \\
    \cline{2-5}     
    Posterior & 36 & 24 & 0 & 0\\   
    \cline{2-5}     
    Temporal & 36 & 24 & 0 & 0\\
    \cline{2-5}
    \end{tabular}
    \end{center}
    \label{tab:distances_n18}
\end{table}

\begin{remark}
Moving downward in $\mathcal{H}\left(N\right)$ is reflective of a edge filtration that favours branching out towards unique neighbours, at the expense of loop formation, whereas an upward trend reflects the promotion of loops.  Accordingly, the top homotopy polynomial path through $\mathcal{H}(N)$ is produced by an edge filtration which exclusively avoids loops until no longer possible while the bottom-most homotopy path is reflective of maximal loop promotion in the edge filtration until no other option remains.
\end{remark}

\subsection{Simulated AD subtype differences are not a random network effect}

 \begin{figure}[t]
     \centering
     \begin{subfigure}[b]{0.32\textwidth}
         \centering
         \includegraphics[width=\textwidth,clip]{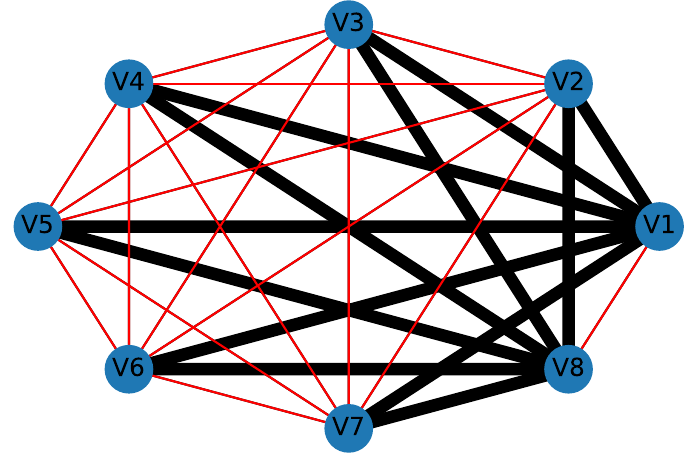}
         \caption{$\tilde{G}_8$}
     \end{subfigure}
     \hfill
     \begin{subfigure}[b]{0.32\textwidth}
         \centering
         \includegraphics[width=\textwidth,clip]{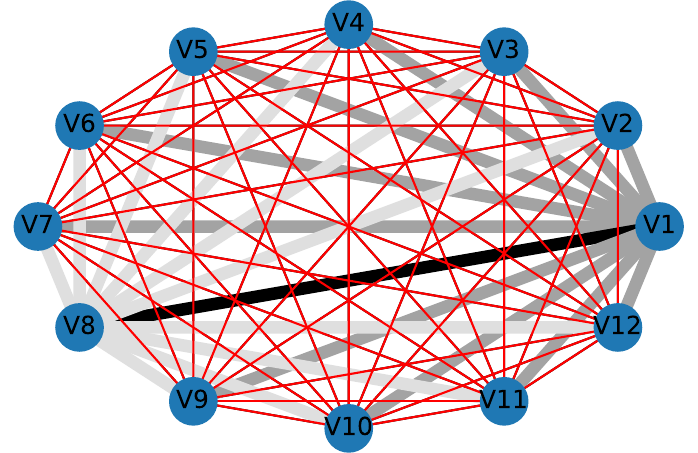}
         \caption{$\tilde{G}_{12}$}
     \end{subfigure}
     \hfill
     \centering
     \begin{subfigure}[b]{0.32\textwidth}
         \centering
         \includegraphics[width=\textwidth,clip]{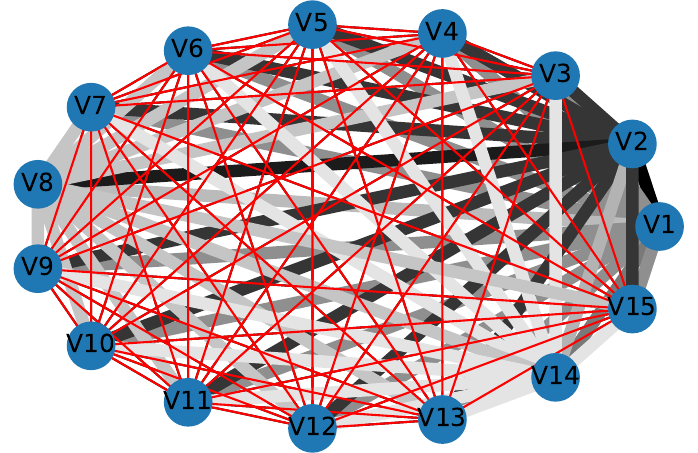}
         \caption{$\tilde{G}_{15}$}
     \end{subfigure}
\hfill\\
        \caption{Homotopy distance similarity graphs, $\tilde{G}_N$, for $N=8$ (left), $N=12$ (middle) and $N=15$ (right).  The thin red edges correspond to a weight of $M_{\infty}$ and indicate that the vertices (maximal chains in $\mathcal{H}(N)$) are identical.  Monochrome edges are accentuated with additional thickness and coloured by relative similarity weight, from light (minimum, non-zero) to dark (maximum, non-infinite).  For all graphs, the minimum non-zero edge similarity weight is $1.0$.  The maximum for $N=8$ (left) is $1.0$, for $N=12$ (middle) is $2.8$ and for $N=15$ (right) is $3.167$.  Vertices $V_1$, $V_2$, $V_3$ and $V_4$ correspond to the maximal chains $\mathcal{F}_1$, $\mathcal{F}_2$, $\mathcal{F}_3$ and $\mathcal{F}_4$ generated by the simulated AD subtype seeding at vertices $v_1$, $v_2$, $v_3$ and $v_4$, respectively (c.f.~Table~\ref{tab:vogel-subtypes} and Figs.~\ref{fig:chain-results-limbic-mtl}-\ref{fig:chain-results-post-temp}).} 
        \label{fig:similarity-graphs}
\end{figure}

In a previous study \cite{goriely2021}, we found that differences in parcellation and tractography, the two primary methods used to assemble weighted structural brain network graphs, can alter the results of simulations using models such as \eqref{eqn:fkpp-edge-degrad}-\eqref{eqn:fkpp-edge-degrad-ic}. 
Thus, a good degree of variation in simulated results can exist due to differences in structural brain networks, raising the question of whether or not the categorical differentiation between simulated AD subtypes, using $\mathcal{H}(N)$ as discussed in Sec.~\ref{subsec:HN-differentiated-subtypes}, could be reproduced by random chance and at a reasonable frequency.\\ 

\textbf{AD subtype distance similarity graphs} The space $\mathcal{H}(N)$ expresses three visually distinct maximal chains for simulated AD subtypes as $N$ increases from $N=4$ to $N=18$ (Sec.~\ref{subsec:HN-differentiated-subtypes}).  To more objectively approach the question of similarity between maximal chains we consider community detection on \textit{similarity graphs} derived from computed homotopy distances.

Given a graph $G_N$, for a fixed number of nodes $N$, the model \eqref{eqn:fkpp-edge-degrad}-\eqref{eqn:fkpp-edge-degrad-ic} generates an edge filtration and maximal chain, through $\mathcal{H}(N)$, for each vertex $v_1$, $v_2$, $\dots$, $v_N \in G_N$.  Let $\mathcal{F}_i$ denote the maximal chain corresponding to a particular vertex $v_i$ in $G_N$.  The discrete homotopy metric $d_H$ (Defn.~\ref{defn:sec:chain-compare:homotopy-metric}) yields a symmetric, weighted adjacency matrix, $A^{d_H}$, with $A^{d_H}_{ij} = d_H(\mathcal{F}_i, \mathcal{F}_j)$ from which we can define an auxiliary, undirected, weighted \textit{similarity graph} $\tilde{G}_N$.  

In the graph $\tilde{G}_N = (\tilde{V}, \tilde{E})$, a vertex $\tilde{V}_i \in \tilde{V}$ corresponds to the maximal chain $\mathcal{F}_i$ generated by the seeding site $v_i$.  Two vertices are connected by an edge $\tilde{E}_{ij} \in \tilde{E}$ if $A^{d_H}_{ij} > 0$ (different maximal chains) or if $A^{d_H}_{ij} = 0$ with $i \neq j$ (identical maximal chains generated by two different seeding locations).  The weight associated with edge $\tilde{E}_{ij}$ is
\[
    \tilde{w}_{ij} = \left\{ \begin{array}{cc} M/A^{d_H}_{ij} & A^{d_H}_{ij} > 0, \\ 
    M_{\infty} & A^{d_H} = 0 \text{ and } i \neq j, \end{array}\right.
\]
where $M = \max_{ij} A^{d_H}_{ij}$ and $M_{\infty}$ is a positive integer indicator of infinity, $M_{\infty} \gg M / \hat{M}$ and $\hat{M}$ is the minimal non-zero entry of $A^{d_H}$.   An edge weight $\tilde{w}_{ij} = M_{\infty}$ implies that $V_i$ and $V_j$ are identical ($d_H\left(\mathcal{F}_i,\mathcal{F}_j\right)=0$); the minimum non-zero edge weight in $\tilde{G}_N$ is always $\tilde{E}_{ij}^{\text{min}}=1.0$, the maximum edge weight, less than $M_{\infty}$, is $\tilde{E}_{ij}^{\text{max}}=M/\hat{M}$ and larger similarity weights indicate that $V_{ij}$ are closer together in $\mathcal{H}(N)$. 
The similarity graphs $\tilde{G}_N$ for $N\in\left\{8,12,15\right\}$ are shown in Fig.~\ref{fig:similarity-graphs}.  The graphs for $N=4$ and $N=6$ do not show any differences (Figs.~\ref{fig:chain-results-limbic-mtl:4:limbic}-\ref{fig:chain-results-limbic-mtl:6:mtl}) between the simulated AD subtype seeding vertices of interest (Table~\ref{tab:vogel-subtypes}) and the coincident paths for $N=18$ agrees with that of $N=15$ for these vertices.\\

\begin{table}[ht]
    \caption{Louvain communities in AD subtype similarity networks (Fig.~\ref{fig:similarity-graphs})}
    \begin{center}
    \begin{tabular}{|c|c|c|}
    \hline
    Graph & Communities & Modularity ($Q$)\\
    \hline
    $\tilde{G}_{8}$ & $\left\{V_1, V_8\right\}$, $\left\{\text{All other vertices}\right\}$ & $1.154 \times 10^{-1}$\\
    $\tilde{G}_{12}$& $\left\{V_1, V_8\right\}$, $\left\{\text{All other vertices}\right\}$ & $4.404 \times 10^{-4}$ \\
    $\tilde{G}_{15}$& $\left\{V_1, V_2, V_8, V_{14}\right\}$, $\left\{\text{All other vertices}\right\}$ & $1.435 \times 10^{-3}$\\
    \hline
    \end{tabular}
    \end{center}
    \label{tab:commnty-detection}
\end{table}

\textbf{Community detection in AD similarity graphs} We used the CDLIB implementation \cite{rossetti2019} of the well known, and widely used, Louvain method \cite{alotaibi2022,blondel2008,connor2017,huang2021} to investigate the community structure of the graphs $\tilde{G}_N$.  Community detection provides an unsupervised approach for  vertex similarity and the Louvain algorithm detects community structure in a weighted, undirected graph $G=(V,E)$ by maximising a graph modularity \cite{girvan2002,newman2006} score, $Q\in [-1,1]$.  Heuristically, given a choice of disjoint vertex partition $V = \coprod V_k$, a high modularity score indicates that there is more significant (weighted) within-partition connectivity, i.e.~within each vertex module $V_k = \left\{v_{k_1},v_{k_2},\dots,v_{k_l}\right\}$, compared to connectivity between the vertices of $V_i$ and $V_j$ where $i\neq j$.  The results of the Louvain community detection are shown in Table~\ref{tab:commnty-detection} for $M_{\infty} = 100 (M/\hat{M})$.  Comparing Figs.~\ref{fig:chain-results-limbic-mtl}-\ref{fig:chain-results-post-temp} to Table~\ref{tab:commnty-detection}, we once more see the patterns discussed in Sec.~\ref{subsec:HN-differentiated-subtypes}.  First, for $\tilde{G}_8$, $V_1$ (corresponding to the chain $\mathcal{F}_1$ in Fig.~\ref{fig:chain-results-limbic-mtl:8:limbic}) is differentiated from $V_2$, $V_3$ and $V_4$ ($\mathcal{F}_2$, $\mathcal{F}_3$ and $\mathcal{F}_4$ in Figs.~\ref{fig:chain-results-limbic-mtl:8:mtl}, \ref{fig:chain-results-post-temp:8:post} and \ref{fig:chain-results-post-temp:8:temp}).  The same observation holds for $\tilde{G}_{12}$ and is in line with the visual inspection of Figs.~\ref{fig:chain-results-limbic-mtl:12:limbic}, \ref{fig:chain-results-limbic-mtl:12:mtl}, \ref{fig:chain-results-post-temp:12:post} and \ref{fig:chain-results-post-temp:12:temp}.  Finally, in $\tilde{G}_{15}$, $V_2$ ($\mathcal{F}_2$,  Fig.~\ref{fig:chain-results-limbic-mtl:15:mtl}) is recognised as more similar to $V_1$ ($\mathcal{F}_1$,  Fig.~\ref{fig:chain-results-limbic-mtl:15:limbic}), than to $V_3$ and $V_4$ ($\mathcal{F}_3$ and $\mathcal{F}_4$ of Figs.~\ref{fig:chain-results-post-temp:15:post} and \ref{fig:chain-results-post-temp:15:temp}, respectively),  reinforcing the observation of Sec.~\ref{subsec:HN-differentiated-subtypes}.\\

\textbf{Comparing community structure to null models} To check that the results of the Louvain algorithm, applied to the homotopy distance similarity graphs (Fig.~\ref{fig:similarity-graphs}), were not likely to be due to noise in the weights or attributable to statistically random network effects we considered three null model approaches used in the recent study of biological networks \cite{connor2017}.  We first tested whether our results may be due to the aggregated contribution of smaller, possibly spurious, similarity weights in the homotopy distance similarity matrices, using the random permutation approach of \cite{connor2017}.  Starting with the original weighted graphs $G_8$, $G_{12}$ and $G_{15}$ (Fig.~\ref{fig:connectome-33-subgraphs}), ten null model graphs, $G_N^{(k)}$ for $k=1,2,\dots,10$, were created by randomly permuting the edge weights for each fixed $N\in\left\{8,12,15\right\}$.  For each $G_N^{(k)}$, we computed the maximal chains corresponding to each vertex, the pursuant distance and similarity matrices and, finally, the similarity graph (e.g.~of the type shown in Fig.~\ref{fig:similarity-graphs}); all null similarity graphs use $M^{(k)}_{\infty} = 100(M^{(k)}/\hat{M}^{(k)})$, as done previously.  Considering similarity edges weights $w_{ij} < M_{\infty}^{(k)}$, we followed the thresholding procedure described in \cite{connor2017}.  Due to our small vertex count, compared to \cite{connor2017}, we recorded the mean threshold, over all null models $G^{(k)}_N$ with $k\in\left\{1,2,\dots,10\right\}$ and $N$ fixed, stabilising the giant component such that there were no isolated vertices, which would be implausible unless all maximal chains, $\mathcal{F}_i$, coincided exactly.  The final thresholds $\tilde{T}_N$, determined for each $\tilde{G}_N$ (c.f.~Fig.~\ref{fig:similarity-graphs}), were: $\tilde{T}_8 = 0.739$, $\tilde{T}_{12} = 2.087$ and $\tilde{T}_{15} = 2.072$.  All edges with weights less than the corresponding thresholds were removed from the similarity graphs (c.f.~Fig.~\ref{fig:similarity-graphs}) and the Louvain community detection was re-run; the results were identical to those of Table~\ref{tab:commnty-detection}.

We tested the observed modularity of the homotopy distance similarity graphs (c.f.~Table~\ref{tab:commnty-detection} and Fig.~\ref{fig:similarity-graphs}) against two well known network null models often used to gauge the influence of random structure.

\section{Conclusion}\label{sec:conclusion}
%


In this manuscript, we have introduced the homotopy poset $\mathcal{H}\left(N\right)$, the quotient of the poset of spanning subgraphs of a (complete) graph, $G=(V,E)$, on $N=|V|$ vertices with respect to the equivalence relation of graph homotopy.  Equivalence classes in $\mathcal{H}(N)$ have a unique representation in terms of their graph homotopy polynomial and a graph's homotopy polynomial encodes the structure of its homology groups (cf.~Defn.~\ref{defn:homotopy-poly} and Exmp.~\ref{exmp:topo-comput}). In addition, we have detailed the first computational algorithms (Sec.~\ref{sec:pe-quotient-construction}) to construct $\mathcal{H}(N)$, for $N$ arbitrary, homotopy polynomials and their successors; the algorithms are guaranteed to terminate. 

We have illustrated the potential for the application of our results in AD research.  We demonstrated that simulated neurodegenerative edge filtrations show topological differences based on AD subtype, quantifiable by $d_H$, an observation that raises the question of potential applicability to subtype determination based on patient neuroimaging data.  In terms of applications, this approach is by no means exclusive to AD research and is applicable to more general contexts of network dynamical systems; examples include the study of graph percolation, information propagation in social networks and population dynamics in ecology.  Further, graphs are inherently combinatorial objects, and can be viewed as the 1-skeletons of simplicial or CW-complexes. Combinatorial results derived for simplicial and CW-complexes can similarly be applied to graphs as a special case, for instance homology and cohomology \cite{jonsson2007simplicial}.  These combinatorial and topological graph theories can similarly be applied to sequences of subgraphs, as we do with homotopy equivalence; the resulting sequence of groups may distinguish different conditions. As algorithms and their implementation continue to improve in computational topology, so does tailoring topological theory to solve concrete real-world problems.

A clear limitation of our work is that, presently, more advanced computational algorithms are needed in order to more efficiently construct $\mathcal{H}(N)$ and work with the paths corresponding to edge filtrations in order to make this process tractable for large values of $N$.  This drawback does limit the current feasibility of using our approach for large, real-world applications. To address this issue, herein, we limited our investigation to modest subgraphs  (e.g.~Fig.~\ref{fig:connectome-33-subgraphs}) of the left hemisphere of a coarse, but standard, parcellation ($N=83$) of a whole-brain brain structural connectome (Fig.~\ref{fig:connectome-33-subgraphs}).  This simplification allowed us to use standard software, such as Mathematica, for the required model simulation (c.f.~\eqref{eqn:fkpp-edge-degrad}-\eqref{eqn:fkpp-edge-degrad-ic}) and construction of $\mathcal{H}(N)$ (c.f.~Algorithm~\ref{alg:algorithm-1} and Algorithm \ref{alg:algorithm-2}, Sec.~\ref{sec:pe-quotient-construction}).  To compute similar results on the full $N=83$ connectome, or more refined whole brain structual connectome parcellations with vertex counts in the hundreds or thousands \cite{goriely2021}, will require a more thoughtful computational strategy.  As a point of future work, we hope to improve the computational algorithms presented and to compare the simulated results, on larger brain structural connectome networks, to patient neuroimaging data.

\section{Acknowledgements}\label{sec:ack}
The authors thank Daniele Celoria for fruitful discussions and helpful comments on this manuscript. AG is grateful for the support by the Engineering and Physical Sciences Research Council of Great Britain under research grants EP/R020205/1. HAH gratefully acknowledges EPSRC EP/R005125/1 and EP/T001968/1, the Royal Society RGF$\backslash$EA$\backslash$201074 and UF150238, and Emerson Collective. CG gratefully acknowledges the support by NIH fellowship grant 1F32HL162423-01. DB and HAH are members of the Centre for Topological Data Analysis, funded in part by EPSRC EP/R018472/1. Finally, the authors thank one of the reviewers for the suggestion of considering quotienting by homology. For the purpose of Open Access, the authors have applied a CC BY public copyright licence to any  Author Accepted Manuscript (AAM) version arising from this submission. 

\bibliographystyle{abbrv}

\end{document}